\newtheorem{thm}{Theorem}\crefname{thm}{Theorem}{Theorems}
\newtheorem{prp}[thm]{Proposition}\crefname{prp}{Proposition}{Propositions}
\newtheorem{lem}[thm]{Lemma}\crefname{lem}{Lemma}{Lemmas}
\crefname{cor}{Corollary}{Corollaries}
\crefname{dfn}{Definition}{Definitions}
\theoremstyle{definition}
\newtheorem*{remark*}{Remark}
\newtheorem*{example*}{Example}
\newcommand{\EE}{\mathbb E}
\newcommand{\N}{\mathbb N}
\newcommand{\R}{\mathbb R}
\newcommand{\ot}{\otimes}
\newcommand{\bigot}{\bigotimes}
\newcommand{\wh}{\widehat}
\newcommand{\id}{\mathbbm 1}
\newcommand{\eps}{\varepsilon}
\newcommand{\state}[1]{\ket{#1}\!\bra{#1}}
\newcommand{\abs}[1]{{\left\vert{#1}\right \vert}}
\newcommand{\norm}[1]{{\left\Vert{#1}\right\Vert}}
\newcommand{\ol}[1]{{\overline{#1}}}
\renewcommand{\O}{\emptyset}
\DeclareMathOperator{\tr}{tr}
\DeclareMathOperator{\ind}{ind}
\DeclareMathOperator{\pa}{pa}
\DeclareMathAlphabet\mathbfcal{OMS}{cmsy}{b}{n}
\newcommand{\algref}{Algorithm~\hyperlink{alg:codeGen}{1}\xspace}
\begin{document}
\title{A Quantum Multiparty Packing Lemma\\and the Relay Channel}
\author{Dawei Ding, Hrant Gharibyan, Patrick Hayden, and Michael Walter\thanks{Dawei Ding, Hrant Gharibyan, and Patrick Hayden are with the Stanford Institute for Theoretical Physics, Stanford University, Stanford, CA 94305, USA. Michael Walter is with QuSoft, the Korteweg-de Vries Institute for Mathematics, the Institute for Theoretical Physics, and the Institute for Logic, Language and Computation, University of Amsterdam, 1098 XG Amsterdam, The Netherlands. (c) 2018 IEEE. Personal use of this material is permitted.  However, permission to use this material for any other purposes must be obtained from the IEEE by sending a request to pubs-permissions@ieee.org.}}
\date{\today}
\maketitle

%=============================================================================
\begin{abstract}
  Optimally encoding classical information in a quantum system is one of the oldest and most fundamental challenges of quantum information theory.
  Holevo's bound places a hard upper limit on such encodings, while the Holevo-Schumacher-Westmoreland (HSW) theorem addresses the question of how many classical messages can be ``packed'' into a given quantum system.
  In this article, we use Sen's recent quantum joint typicality results to prove a one-shot multiparty quantum packing lemma generalizing the HSW theorem.
  The lemma is designed to be easily applicable in many network communication scenarios.
  As an illustration, we use it to straightforwardly obtain quantum generalizations of well-known classical coding schemes for the relay channel: multihop, coherent multihop, decode-forward, and partial decode-forward.
  We provide both finite blocklength and asymptotic results, the latter matching existing classical formulas.
  Given the key role of the classical packing lemma in network information theory, our packing lemma should help open the field to direct quantum generalization.
\end{abstract}
\begin{IEEEkeywords}
quantum channels, network coding, packing lemma, relay channel, simultaneous decoder 
\end{IEEEkeywords}
%\tableofcontents
%=============================================================================

%=============================================================================
\section{Introduction}\label{introduction}
%=============================================================================
The \emph{packing lemma}~\cite{shannon1948mathematical,cover1975achievable,forney1972information} is one of the central tools used in the construction and analysis of information transmission protocols~\cite{elgamal2011network}.
It quantifies the asymptotic rate at which messages can be ``packed'' reversibly into a medium, in the sense that the probability of a decoding error vanishes in the limit of large blocklength.
For concreteness, consider the following general version of the packing lemma.%
\footnote{See, e.g.,~\cite{elgamal2011network}. Our formulation is slightly paraphrased and uses a notation that is more suitable for the following.}

\begin{lem}[Classical Packing Lemma]\label{lem:cPacking}
Let $(U,X,Y)$ be a triple of random variables with joint distribution $p_{UXY}$.
For each $n$, let $(\tilde U^n,\tilde Y^n)$ be a pair of arbitrarily distributed random sequences
%(not necessarily distributed according to $p_{UY}^{\ot n}$)
and ${\{ \tilde X^n(m) \}}$ a family of at most $2^{nR}$ random sequences such that each $\tilde X^n(m)$ is conditionally independent of $\tilde Y^n$ given $\tilde U^n$ (but arbitrarily dependent on the other $\tilde X^n(m')$ sequences).
Further assume that each $\tilde X^n(m)$ is distributed as $\otimes_{i=1}^n p_{X|U=\tilde U_i}$ given $\tilde U^n$.
Then, there exists $\delta(\eps)$ that tends to zero as $\eps\to0$ such that
\begin{align*}
 \lim_{n\to\infty} \Pr((\tilde U^n,\tilde X^n(m),\tilde Y^n)\in\mathcal T_\eps^{(n)} \text{ for some $m$}) = 0
\end{align*}
if $R < I(X;Y|U)-\delta(\eps)$, where $\mathcal T_\eps^{(n)}$ is the set of $\eps$-typical strings of length~$n$ with respect to $p_{UXY}$.
\end{lem}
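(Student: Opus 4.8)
The plan is to reduce the claim, by a union bound over the at most $2^{nR}$ messages, to a single-message estimate that is essentially the classical joint typicality lemma. Since we only want to rule out the \emph{existence} of a bad index $m$, the arbitrary mutual dependence among the codewords $\tilde X^n(m)$ plays no role, and
\begin{align*}
  \Pr\big(\exists\, m:\ (\tilde U^n,\tilde X^n(m),\tilde Y^n)\in\mathcal T_\eps^{(n)}\big)\ \le\ 2^{nR}\,\max_m\,\Pr\big((\tilde U^n,\tilde X^n(m),\tilde Y^n)\in\mathcal T_\eps^{(n)}\big).
\end{align*}
Everything then reduces to bounding the single-message error probability, and the key point is that this bound must hold \emph{pointwise} in the realization of $(\tilde U^n,\tilde Y^n)$ --- that is what allows the hypotheses on $(\tilde U^n,\tilde Y^n)$ to be so weak.

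For fixed $m$, condition on $\tilde U^n=u^n$ and $\tilde Y^n=y^n$. By the conditional-independence hypothesis, the conditional law of $\tilde X^n(m)$ given $(u^n,y^n)$ equals its conditional law given $u^n$ alone, namely $\prod_{i=1}^n p_{X|U}(\cdot\,|u_i)$, irrespective of the joint distribution of $(\tilde U^n,\tilde Y^n)$; hence
\begin{align*}
  \Pr\big((\tilde U^n,\tilde X^n(m),\tilde Y^n)\in\mathcal T_\eps^{(n)}\big)
  \ =\ \sum_{u^n,y^n} p_{\tilde U^n\tilde Y^n}(u^n,y^n)\sum_{x^n:\,(u^n,x^n,y^n)\in\mathcal T_\eps^{(n)}}\ \prod_{i=1}^n p_{X|U}(x_i|u_i).
\end{align*}
I would bound the inner sum for each fixed $(u^n,y^n)$ using two standard properties of (robustly) typical sequences: (i) every $x^n$ with $(u^n,x^n,y^n)\in\mathcal T_\eps^{(n)}$ satisfies $\prod_i p_{X|U}(x_i|u_i)\le 2^{-n(H(X|U)-\delta_1(\eps))}$, because $(u^n,x^n)$ then has empirical statistics close to $p_{UX}$; and (ii) the number of such $x^n$ is at most $2^{n(H(X|U,Y)+\delta_2(\eps))}$, which holds vacuously if there are none. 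Multiplying and using $I(X;Y|U)=H(X|U)-H(X|U,Y)$, the inner sum is at most $2^{-n(I(X;Y|U)-\delta(\eps))}$ with $\delta(\eps):=\delta_1(\eps)+\delta_2(\eps)$; since this is uniform in $(u^n,y^n)$ and $p_{\tilde U^n\tilde Y^n}$ has total mass $1$, we obtain $\Pr\big((\tilde U^n,\tilde X^n(m),\tilde Y^n)\in\mathcal T_\eps^{(n)}\big)\le 2^{-n(I(X;Y|U)-\delta(\eps))}$.

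Substituting into the union bound gives
\begin{align*}
  \Pr\big(\exists\, m:\ (\tilde U^n,\tilde X^n(m),\tilde Y^n)\in\mathcal T_\eps^{(n)}\big)\ \le\ 2^{-n(I(X;Y|U)-\delta(\eps)-R)},
\end{align*}
which vanishes exponentially as $n\to\infty$ whenever $R<I(X;Y|U)-\delta(\eps)$, and $\delta(\eps)\to 0$ as $\eps\to 0$ by construction. There is no serious obstacle in this classical version --- the only thing to be careful about is keeping the per-message bound pointwise in $(\tilde U^n,\tilde Y^n)$, which the conditional-independence assumption makes possible. The real difficulty, and the reason the paper invokes Sen's joint typicality results, is that the analogous ``decoupling'' and the analogous counting and probability estimates for \emph{non-commuting} typical projectors are far more delicate.
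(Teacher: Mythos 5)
Your proof is correct and follows exactly the route the paper has in mind for \cref{lem:cPacking} (which it cites from the classical literature and describes as ``a simple consequence of the union bound and the standard joint typicality lemma''): a union bound over the at most $2^{nR}$ messages, plus a per-message bound that conditions on $(\tilde U^n,\tilde Y^n)$, uses the conditional independence to reduce the codeword's law to $\prod_i p_{X|U}(\cdot|\tilde U_i)$, and then applies the standard cardinality and probability estimates for conditionally typical sequences to get $2^{-n(I(X;Y|U)-\delta(\eps))}$ uniformly in the conditioning. Nothing further is needed.
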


The packing lemma provides a unified approach to many, if not most, of the achievability results in Shannon theory.
Despite its broad utility, it is a simple consequence of the union bound and the standard joint typicality lemma with the three variables $U$, $X$, $Y$.
The usual channel coding theorem directly follows from taking $U = \O$ and when $\tilde Y^n \sim p_Y^{\ot n}$.

For the case when $U = \O$ and when $\tilde Y^n \sim p_Y^{\ot n}$, the quantum generalization of the packing lemma is known: the Holevo-Schumacher-Westmoreland (HSW) theorem~\cite{holevo1998capacity,schumacher1997sending}.
This can be proven using a conditional typicality lemma for a classical-quantum state with one classical and one quantum system.
However, until recently no such typicality lemma was known for the case of multiple encoding systems, and so a quantum version of~\cref{lem:cPacking} was lacking.
Furthermore, while in classical Shannon theory~\cref{lem:cPacking} can be used repeatedly in scenarios where the message is encoded into multiple random variables, this approach fails in the quantum case due to measurement disturbance, specifically the influence of one decoding on subsequent decodings.
Hence, while it is sufficient to solve the full multiparty packing problem in the classical case with just two encoding systems and repeated measurements, a general multiparty packing lemma with $k \in \N$ encoding systems is required in the quantum case.
The bottleneck is again the lack of a general quantum joint typicality lemma with multiple systems.
However, we can obtain partial results in the quantum case for some network scenarios, as we will describe below.

In this paper we use the quantum joint typicality lemma%
\footnote{Sen modestly calls his result a lemma, but the highly ingenious proof more than justifies calling it a theorem.}
established recently by Sen~\cite{senInPrep} to prove a quantum one-shot multiparty packing lemma for $k$ classical encoding systems.
We then demonstrate the wide applicability of the lemma by using it to
%straightforwardly generalize classical protocols in a specific network communication setting
generalize classical network information theory protocols
to the quantum case.
The lemma allows us to construct and prove the correctness of these simple generalizations and, we believe, should help to open the field of classical network information theory to direct quantum generalization.%
\footnote{Note that a simultaneous smoothing result for the max-relative entropy is still missing, which would be necessary e.g.~for a ``multiparty covering lemma.'' To prove such a result is a major open problem in the field.}
One feature of the lemma is that it leads naturally to demonstrations of the achievability of rate regions without having to resort to time-sharing, a desirable property known as \emph{simultaneous decoding}.
%In network settings, this
Simultaneous decoding
is often necessary in network information theory to obtain one-shot rates for the full achievable rate region. This region is often a convex closure of the union of different regions, where convex combinations of rates are usually achieved through time-sharing. This is not possible in a one-shot setting. Furthermore,
different receivers could have different effective rate regions and therefore require incompatible time-sharing strategies.
Indeed, this is a frequent source of incomplete or incorrect results even in classical information theory~\cite{DBLP:journals/corr/abs-1207-0543}.
A general construction leading to simultaneous decoding in the quantum setting has therefore been sought for many years~\cite{DBLP:journals/corr/abs-1207-0543,fawzi2012classical,dutil2011multiparty,winter2001capacity,fawzi2011quantum,christandl2018recoupling,walter2014multipartite}.
Sen's quantum joint typicality lemma achieves this goal, as does our packing lemma, which can be viewed as a user-friendly interface for Sen's lemma.

Recall that network information theory is the study of communication
%in the setting of multiple parties,
with multiple parties and is a generalization of the conventional single-sender single-receiver two-party scenario, commonly known as \emph{point-to-point} communication.
Common network scenarios include having multiple senders encoding different messages, as in the case of the multiple access channel~\cite{shannon1961two}, multiple receivers decoding the messages, as in the broadcast channel~\cite{cover1972broadcast}, or a combination of both, as for the interference channel~\cite{ahlswede1974capacity}.
However, the above examples are all instances of what is called \emph{single hop} communication, where the message directly travels from a sender to a receiver.
In \emph{multihop} communication, there is one or even multiple intermediate nodes where the message is decoded or partially decoded before being transmitted to the final receiver.
Examples of such communication scenarios include the relay channel~\cite{van1971three}, which we focus on in this paper, and more generally, graphical multi-cast networks~\cite{kramer2005cooperative,xie2005achievable}.

Research in quantum joint typicality has generally been driven by the need to establish quantum generalizations of results in classical network information theory. Examples include the quantum multiple access channel~\cite{winter2001capacity,yard2008capacity}, the quantum broadcast channel~\cite{yard2011quantum,dupuis2010father}, and the quantum interference channel~\cite{fawzi2011quantum}. Indeed, some partial results on joint typicality had been established or conjectured in order to prove achievability bounds for various network information processing tasks~\cite{dutil2011multiparty,sen2012achieving,qi2018applications}. Subsequent work made some headway on the abstract problem of joint typicality for quantum states, but not enough to affect coding theorems~\cite{drescher2013simultaneous,notzel2012solution} prior to Sen's breakthrough~\cite{senInPrep}.

The quantum relay channel was studied previously in~\cite{savov2012partial}, where the authors constructed a partial decode-forward protocol. Here we develop finite blocklength results for the relay channel in addition to reproducing the earlier conclusions and avoiding a resolvable issue with error accumulation from successive measurements in their partial decode-forward bound. (We construct a joint decoder which obtains all the messages from the multiple rounds of communication simultaneously.) Our analysis makes extensive use of our quantum multiparty packing lemma. Once the coding strategy is specified, a direct application of the packing lemma in the asymptotic limit gives a list of inequalities which describe the rate region, which we then simplify using entropy inequalities to the usual rate region of the partial decode-forward lower bound. There has also been related work in~\cite{jin2012lower}, which considered concatenated channels, a special case of the more general relay channel model. As noted in~\cite{savov2012partial}, work on quantum relay channels may have applications to designing quantum repeaters~\cite{collins2005quantum}. Note that Sen has already used his joint typicality lemma to prove achievability results for the quantum multiple access, broadcast, and interference channels~\cite{senInPrep, senInPrep2}, but here we give a general packing lemma which can be used as a black box for quantum network information applications. The relay channel serves as a demonstration of this.

Our paper is structured as follows.
In~\cref{sec:prelim}, we establish notation and discuss some preliminaries.
In~\cref{sec:packinglemma}, we describe the setting and state the quantum multiparty packing lemma.
The statement very much resembles a one-shot, multiparty generalization of~\cref{lem:cPacking}, but, to reiterate, while the multiparty generalization is trivial in the classical case, it requires the power of a full joint typicality lemma in the quantum case.
In~\cref{sec:quantumrelay} we describe the classical-quantum (c-q) relay channel and systematically describe coding schemes that generalize known schemes
 %in the classical setting:
for the classical relay channel: multihop, coherent multihop, decode-forward, and partial decode-forward~\cite{cover1979capacity}.
%It is worthwhile to note that while the first three bounds only require the packing bound with two classical systems, the last bound is proved by applying multiparty packing for an arbitrary number of systems.
In addition to the one-shot bounds, we show that the asymptotic bounds are obtained by taking the limit of large blocklength, thereby obtaining quantum generalizations of known capacity lower bounds for the classical case.
In~\cref{sec:proofpackinglemma} we prove the quantum multiparty packing lemma via Sen's quantum joint typicality lemma~\cite{senInPrep}.
For convenience, we restate a special case of the Sen's joint typicality lemma and suppress some of the details.
In~\cref{sec:conclusions} we give a conclusion.%, including an evaluation of the method proposed in this paper as well as possible directions for future work.

%=============================================================================
\section{Preliminaries}\label{sec:prelim}
%=============================================================================
We first establish some notation and recall some basic results.

\smallskip\noindent\textbf{Classical and quantum systems:}
%Let's denote by $G= (V,E)$ an arbitrary directed acyclic graph (DAG) and let $v$ denote a vertex of this graph.
A classical system $X$ is identified with an alphabet $\mathcal X$ and a Hilbert space of dimension $\abs{\mathcal{X}}$, while a quantum system $B$ is given by a Hilbert space of dimension~$d_B$.
Classical states are modeled by diagonal density operators such as $\rho_X=\sum_{x\in\mathcal X} p_X(x)\ket x\bra x_X$, where $p_X$ is a probability distributions, quantum states are described by density operator $\rho_A$ etc, and classical-quantum states are described by density operators of the form
\begin{align}\label{eq:cq state}
  \rho_{X B} =  \sum_{x \in \mathcal X} p_{X}(x) \state{x}_{X} \ot \rho_B^{(x)}.
\end{align}

\noindent\textbf{Probability bound:}
Denote by $E_1$, $E_2$ two events. We use the following inequality repeatedly in the paper:
\begin{align}
\nonumber
\Pr(E_1)
&= \Pr(E_1|E_2)\Pr(E_2) + \Pr(E_1|\ol E_2)\Pr(\ol E_2) \\
\label{eq:easybound}
&\leq  \Pr(E_2) + \Pr(E_1|\ol E_2),
\end{align}
where we use $\ol{E_2}$ to denote the complement of $E_2$ and used the fact that $\Pr(E_2), \Pr(E_1|\ol{E_2}) \leq 1$.

\noindent\textbf{Hypothesis-testing relative entropy:}
The hypothesis-testing relative entropy~\cite{hirche2018thesis} is defined as%
\footnote{As always in information theory, $\log$ here is base 2.}
\begin{align*}
  D_H^\eps(\rho \Vert \sigma) = \max_{\substack{ 0 \le \Pi \le I\\ \tr(\Pi \rho) \geq 1-\eps }} - \log \tr(\Pi \sigma).
\end{align*}
For $n$ copies of states $\rho$ and $\sigma$,~\cite{datta2011strong,tomamichel2013hierarchy,li2014second} establishes the following inequalities:
\begin{align}\label{eq:DHepsBound}
  D(\rho \Vert \sigma) - \frac{F_1(\eps)}{\sqrt{n}} \leq \frac{1}{n} D_H^\eps(\rho^{\otimes n} \Vert \sigma^{\otimes n}) \leq D(\rho \Vert \sigma) + \frac{F_2(\eps)}{\sqrt{n}},
\end{align}
where $F_1(\eps) , F_2(\eps) \geq 0$ are given by $F_1(\eps) \equiv 4\sqrt{2} \log \frac{1}{\eps} \log \eta$, $F_2(\eps) \equiv 4\sqrt{2} \log \frac{1}{1-\eps} \log \eta$, with $\eta \equiv 1+\tr \rho^{3/2}\sigma^{-1/2} +\tr \rho^{1/2}\sigma^{1/2}$ and $D(\rho \Vert \sigma) = \tr (\rho \log \rho) -  \tr (\rho \log \sigma)$ being the quantum relative entropy.
In the limit of large $n$, we obtain the quantum Stein's lemma~\cite{ogawa2005strong,hiai1991proper}:
\begin{align}\label{eq:stein}
  \lim_{n \to \infty} \frac{1}{n} D_H^\eps(\rho^{\otimes n} \Vert \sigma^{\otimes n}) = D(\rho \Vert \sigma).
\end{align}

\noindent\textbf{Conditional density operators:}
Let a classical system $X$ consist of subsystems $X_v$, for $v$ in some index set~$V$, with alphabet $\mathcal X = \bigtimes_{v\in V} \mathcal X_v$, where $\times$ denotes the Cartesian product of sets.
Consider a classical-quantum state $\rho_{XB}$ as in~\cref{eq:cq state} and a subset $S \subseteq  V$.
We can write
\begin{align}
  \label{eq:decompCond}
  \rho_{X B} =  \sum_{x_\ol{S}}^{} p_{X_{\ol{S}}}(x_{\ol{S}}) \state{x_\ol{S}}_{X_{\ol{S}}} \ot \rho_{X_S B}^{(x_{\ol{S}})},
\end{align}
where $\ol{S} \equiv V \setminus S$ and
\begin{align*}
  %\label{eq:condDensity}
  \rho^{(x_{\ol{S}})}_{X_S B} \equiv \sum_{x_S} p_{X_S|X_\ol{S}} (x_{S}|x_{\ol{S}})  \state{x_S}_{X_S} \otimes \rho^{(x_S, x_{\ol{S}})}_{B}.
\end{align*}
We can interpret $\rho^{(x_{\ol{S}})}_{X_S B}$ as a ``conditional'' density operator.
We further define $\rho_{X B}^{(\left\{ X_S, B\right\})}$ by replacing the conditional density operator in~\cref{eq:decompCond} by the tensor product of its marginals:
\begin{align*}
  \rho_{X B}^{(\left\{ X_S,B \right\})} & = \sum_{x_{\ol{S}}} p_{X_\ol{S}} (x_{\ol{S}})  \state{x_{\ol{S}}}_{X_\ol{S}} \otimes \rho^{(x_{\ol{S}})}_{X_S} \otimes \rho^{(x_{\ol{S}})}_{B}  \\
  & = \sum_{x} p_{X}(x)  \state{x}_{X} \otimes \rho^{(x_{\ol{S}})}_{B}.
\end{align*}
This formulation lets us obtain the conditional mutual information as an asymptotic limit of the hypothesis testing relative entropy; by~\cref{eq:stein},
\begin{align}
  & \lim_{n\to \infty} \frac{1}{n} D_H^\eps \Big(\rho_{X B}^{\ot n} \Vert \left( \rho_{X B}^{(\{X_S, B \})} \right)^{\ot n}\Big) \nonumber\\
  & = D(\rho_{X B} \Vert \rho_{X B}^{(\{X_S, B \})}) \nonumber \\
  & = \sum_{x_{\ol{S}}} p_{X_\ol{S}}(x_{\ol{S}}) D\Big(\rho^{(x_{\ol{S}})}_{X_{S} B} \;\Vert\; \rho^{(x_{\ol{S}})}_{X_S} \ot \rho^{(x_{\ol{S}})}_{B}\Big) \nonumber\\
  & = \sum_{x_{\ol{S}}}^{} p_{X_\ol{S}}(x_{\ol{S}}) I(X_S; B)_{\rho^{(x_\ol{S})}} \nonumber \\
  & = I(X_S; B | X_\ol{S})_{\rho}, \label{eq:CMI}.
\end{align}

%=============================================================================
\section{Quantum Multiparty Packing Lemma}\label{sec:packinglemma}
%=============================================================================
In this section, we formulate a general multiparty packing lemma for quantum Shannon theory that can be used as a black box for network coding constructions. The goal is to ``pack'' as many classical messages as possible into our quantum system while retaining distinguishability. In the multiparty case, we are packing classical messages via an encoding that involves \emph{multiple} classical systems. As mentioned in the introduction, a multiparty packing lemma is necessary in quantum information theory due to measurement disturbance. That is, while in classical information theory one can do consecutive decoding operations on the same quantum system with impunity, in quantum information theory a decoding operation can change the system and thereby affect a subsequent operation. For example, while classically it is possible to check whether the output of a channel is typical with multiple input random variables by simply verifying typicality pair by pair, quantumly this method can be problematic. Hence, we would like to combine a set of decoding operations into one, \emph{simultaneous} decoding.
We obtain a construction of this flavor in~\cref{lem:packing}.
Its asymptotic version,~\cref{lem:packingAsympt}, states that the decoding error vanishes provided that a \emph{set} of inequalities on the rate of transmission is satisfied, as opposed to a single one as in~\cref{lem:cPacking}.
This is exactly what we expect from a simultaneous decoding operation.

We first need to establish what it means to have a ``multiparty'' packing lemma. In network information theory scenarios, it is often necessary to have multiple message sets, representing in the simplest cases transmissions to and from different users or in different rounds of communication. Random codewords may be generated for each message, but the dependence of the codewords on the different message sets may be complicated. Furthermore, the codewords may be correlated in intricate ways. %Suppose for the purpose of illustration that we have three message sets $M_1, M_2$ and $M_3$ and a family of density operators $\rho^{(x_1,x_2,x_3)}$. To generate a code, we could choose $x_1(m_1)$ for $m_1 \in M_1$ according to $P_{X_1}$, next generate~$x_2(m_1,m_2)$ for each $m_1$ according to $p_{X_2|X_1=x_1(m_1)}$, and lastly draw $x_3(m_1,m_2,m_3)$ according to $P_{X_3|X_1=x_1(m_1), X_2=x_2(m_2|m_1)}$ for each pair $m_1,m_2$.
In order to demonstrate these concepts and to motivate the formal statements to come, it is helpful to have an example in mind. The example we will use is the two-sender c-q multiple access channel (MAC), for which asymptotic rates were obtained in~\cite{winter2001capacity} and one-shot rates in~\cite{senInPrep}. This channel is simply a c-q channel with two classical inputs: $X_1, X_2$, one for each sender. We also have two message sets, $M_1, M_2$, corresponding to the messages the two senders wish to transmit. A random coding scheme can be used where a codeword $x_1(m_1)$ is randomly generated according to a probability distribution $p_{X_1}$ for every $m_1 \in M_1$ and similarly for the second sender. In order to obtain a simultaneous decoder for the full rate region, including interpolations between different probability distributions $p_{X_1} p_{X_2}$, we introduce the time-sharing variable $U$. The codeword for $U$ is randomly generated according to some probability distribution $p_U$ which determines the precise interpolation. The codewords for $X_1, X_2$ are generated conditioned on $U$. For more details, see for instance the full classical treatment in~\cite{elgamal2011network}.

This encoding scheme can be represented graphically by a mathematical object that we call a \emph{multiplex Bayesian network} (\cref{fig:relayexample}, explained below). This object is key to the technical setup of our multiparty packing lemma. Note that graphical constructions for network coding is not a new concept (see for instance, Section A.9 in~\cite{kramer2008topics}). Our construction in particular can be interpreted as a mathematical formalization of Markov encoding schemes, which are ubiquitous in network information theory~\cite{elgamal2011network}.

We now give the mathematical description of a multiplex Bayesian network. Let the joint random variable $X$ be a Bayesian network with respect to a directed acyclic graph (DAG) $G = (V,E)$.
The joint random variable $X$ is composed of random variables $X_v$ with alphabet $\mathcal{X}_v$ for each $v \in V$.
% We will use $v$ and $X_v$ interchangeably for convenience.
%, corresponding to the random variables that $X_v$ is conditioned on.
Now, a multiplex Bayesian network graphically represents a random coding scheme via an algorithm we give below which takes the multiplex Bayesian network as input and generates random codewords $x(m)$ with components $x_v(m)$ for $v \in V$.
However, different components of a codeword may only depend on particular message sets, as in the case of the MAC where we only generate $x_1(m_1)$ for all $m_1 \in M_1$.
We model this situation by an index set $J$ which index the different message sets $M_j$ for $j\in J$, and a function $\ind: V \to \mathcal{P}(J)$, where $\ind(v)\subseteq J$ corresponds to the subset of indices (hence the name) that index the message sets the codeword component $X_v$ depends on.
% That is, we will have multiple labels for each vertex, each label living in some $M_j$. <- below you say that the labels are a subset of J
% In the example above, the set $J$ coincided with the set of vertices $V$, but more flexibility is sometimes needed when designing random codes. <-- i think this is perhaps more distracting than not saying it
Now, for our random codebook construction to be well-defined, we require that given $v \in V$,
\begin{align}\label{eq:ind condition}
  \ind(v') \subseteq \ind(v) \text{ for every } v' \in \pa(v),
\end{align}
where for $v\in V$,
\begin{align*}
  \pa(v) \equiv \left\{ v' \in V \;\vert\; (v', v) \in E \right\}
\end{align*}
denote the set of parents of $v$. This is a natural requirement: in our algorithm we generate the codewords in an iterative manner following the edges in the DAG, and so we should require that all codeword components also depend on the message sets that the components upstream depend on. That is, codeword components ``inherit'' the indices of their parents.
%this captures the fact that the random variable $x_3(m)$ is defined conditional on the value of $x_2(m_1,m_2)$ and therefore must necessarily depend $m_1$ and $m_2$; similarly for $X_2$ and $m_2, m_1$.

%In general, let $X$ be a Bayesian network with respect to a DAG $G = (V,E)$. Let $J$ be an index set and $M_j$ be an index set for each $j \in J$. Next, let $\ind: V \to \mathcal{P}(J)$ be a map so that given $v \in V$, for all $v' \in \pa(v)$, $\ind(v') \subseteq \ind(v)$.

We call the tuple $\mathcal{B} = (G, X, M, \ind)$, where $M \equiv \bigtimes_{j \in J} M_j$, a \emph{multiplex Bayesian network}.
% We can visualize a multiplex Bayesian network by a DAG with two vertex and edge labelings. <-- you don't draw this as a DAG (the dashed edges are undirected
We can visualize a multiplex Bayesian network by adjoining to the DAG $G$ additional vertices $M_j$, one for each $j\in J$, and edges that connect each $X_v$ to every $M_j$ such that $j\in\ind(v)$. Again, as an example consider the MAC multiplex Bayesian network with three random variables in~\cref{fig:relayexample}. Note that in the figure the random variable $U$ is not connected with any message set. We define our algorithm to treat $U$ as if it is connected with a singleton.

\begin{figure}[th]
\begin{center}
\includegraphics[scale=0.7]{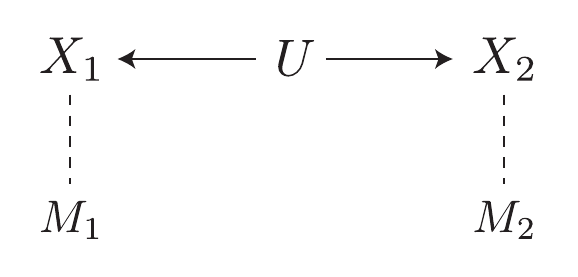}
\end{center}
\caption{An example of a multiplex Bayesian network with vertices $U$, $X_1$, and $X_2$ and message sets $M_1$ and $M_2$. This network can be used to generate a random code for the two-sender c-q MAC, where we generate $u$ according to $p_U$, then $x_1(m_1)$ for $m_1 \in M_1$ according to $p_{X_1\vert U}(\cdot \vert u)$ and $x_2(m_2)$ according to $p_{X_2 \vert U}(\cdot \vert u)$. }\label{fig:relayexample}
\end{figure}

We next give the algorithm that generates the random codebook. Given a multiplex Bayesian network $\mathcal{B} = (G, X, M, \ind)$, we would like to generate a random codebook
\begin{align}
  \label{eq:codebook}
  \left\{ x_v(m) \right\}_{ v \in V, m \in M},
\end{align}
where $x_v$ is a random variable with alphabet $\mathcal{X}_v$.
The vertices represent random codeword components and the graph $G$
%will be the Bayesian network describing
describes the dependencies between different components.
Moreover, each component $x_v(m)$ only depends on $m_j\in M_j$ for which $j\in \ind(v)$.
That is, $x_v(m)$ and $x_v(m')$ are equal as random variables provided $m_j=m'_j$ for every $j\in\ind(v)$.
% That is, $x_v(m)$ for different elements of the other message components will be equal as random variables.

We now give the algorithm for generating the random codebook.
Since $G$ is a DAG, it has a topological ordering, that is, a total ordering on $V$ such that for every $(v',v) \in E$, $v'$ precedes $v$ in the ordering.
We also pick an arbitrary total ordering on $J$ and on $M_j$ for every $j \in J$.
This then induces a lexicographical ordering on Cartesian products of $M_j$, which we denote by $M_{J'} := \bigtimes_{j\in J'} M_j$ for any $J' \subseteq J$.
We define $M_\emptyset=\{\emptyset\}$ as a singleton so that we can identify $M_{J'} \times M_{J''} = M_{J' \cup J''}$ for any two disjoint subsets $J',J''\subseteq J$. % = \{\emptyset\}$ to be a singleton set.
Note that these total orderings determine the order in which we perform the for loops below, but do not impact the joint distribution of the codewords. We can now define the following algorithm:
\begin{framed}
\noindent\hypertarget{alg:codeGen}{\textbf{Algorithm 1:} Codebook generation from multiplex Bayesian network}\\ %This is hack-y
\begin{algorithmic}
  \FOR{$v \in V$}
    \FOR{$m_v \in M_{\ind(v)}$}
      %\STATE let $m_{\bar v}^0$ be the initial element of $M_{J \setminus \ind(v)}$
      % \STATE generate $x_v(m_v)$ according to $p_{X_v | \{X_{v'} = x_{v'}(m_{v'}) \text{ for } v' \in \pa(v)\} }$
      \STATE generate $x_v(m_v)$ according to $p_{X_v | X_{\pa(v)}}\left(\cdot | x_{\pa(v)}(m_{\pa(v)})\right)$
      \FOR{$m_{\bar v} \in M_{\ol{\ind(v)}}$}
	\STATE $x_v(m_v, m_{\bar v}) = x_v(m_v)$
      \ENDFOR
    \ENDFOR
  \ENDFOR
\end{algorithmic}
\end{framed}
Here, $\ol{\ind(v)} \equiv J \setminus \ind(v)$, $m_{\pa(v)}$ is the restriction of $m_v$ to $M_{\ind(\pa(v))}$ (this makes sense by~\cref{eq:ind condition}), $X_{\pa(v)} \equiv (X_{v'})_{v'\in\pa(v)}$ and similarly for $x_{\pa(v)}(m_{\pa(v)})$, and the pair $(m_v, m_{\bar v})$ is interpreted as an element of $M$ with the appropriate components.
The topological ordering on $V$ ensures that~$x_{\pa(v)}(m_{\pa(v)})$ is generated before $x_v(m_v)$, so this algorithm can be run.
We thus obtain a random codebook as in~\cref{eq:codebook}.

We make a few observations.
\begin{enumerate}
\item\label{it:obs1}
By construction, for all $m \in M$ and $\xi\in\mathcal X$,
\begin{align*}
  %\label{eq:bayesianCode}
  \Pr(x(m) = \xi) = p_X(\xi)
% \equiv \prod_{v \in V} p_{X_v\vert\{X_{v'} = \xi_{v'} \text{ for } v' \in \pa(v)\} } \left(\xi_v \right).
\equiv \prod_{v \in V} p_{X_v | X_{\pa(v)}}\left(\xi_v | \xi_{\pa(v)}\right).
\end{align*}
That is, $x(m)$ is a Bayesian network with respect to $G$ and equal in distribution to $X$.
\item\label{it:obs2}
By construction, given $v \in V$ and $m_v\in M_{\ind(v)}$, all $x_v(m_v, m_{\bar v})$ for $m_{\bar v} \in M_{\ol{\ind(v)}}$ are equal as random variables.
\item\label{it:obs3}
Generalizing observation~\ref{it:obs1}, the joint distribution of \emph{all} codewords can be split into factors in a simple manner.
Specifically, given $\xi(m)\in\mathcal X$ for every $m\in M$, we have
\begin{align*}
  & \Pr(x(m) = \xi(m) \text{ for all } m\in M)
\\
& = \prod_{v\in V} \prod_{m_v\in M_{\ind(v)}} p_{X_v|X_{\pa(v)}}\left(\xi_v(m_v)|\xi_{\pa(v)}(m_{\pa(v)})\right)
\end{align*}
provided $\xi_v(m) = \xi_v(m')$ for all $m, m'$ with $m_v = m'_v$.
Otherwise, the joint probability is zero.
\end{enumerate}

To allow for the variety of coding schemes encountered in network information theory, we introduce a few additional elements. For instance, we would like the freedom to construct multiple different quantum decoders for the same random codebook. This is a very natural requirement when there are multiple receivers involved or when a particular receiver has to make multiple measurements in an interactive communication scenario (and therefore cannot simply make a single joint measurement).
To realize this, let $H$ be the induced subgraph of $G$ for some $V_H \subseteq V$ where for all $v \in V_H$, $\pa(v) \subseteq V_H$.
We call $H$ an \emph{ancestral subgraph}.
Then, we can naturally define $X_H$ to be the set of random variables corresponding to $V_H$, $J_H \equiv \bigcup_{v \in V_H} \ind(v) \subseteq J$, $M_H \equiv \bigtimes_{j \in J_H} M_j$, and $C_H \equiv \left\{ x_H(m_H) \right\}_{m_H \in M_H}$.%
\footnote{Note that by the definition of $M_H$ we only need $m_H$ to identify $x_H$ up to equality as random variables.}
Finally, the classical variables are encoded into a quantum system via a family of quantum states $\left\{ \rho_{B}^{(x_H)} \right\}_{x_H \in \mathcal{X}_H}$, where $B$ is some quantum system.

The next element we introduce allows for receivers to decode a number of message sets using a \emph{guess} for the other message sets. This is naturally motivated by iterative decoding schemes in which a receiver makes multiple measurements where latter measurements take into account results from previous measurements. Again, this is mainly relevant in interactive scenarios.
To realize this, let $D \subseteq J_H$ be a subset of indices which index the message sets to be decoded. This means we have a guess for the remaining message sets indexed by $\ol D \equiv J_H \setminus D$.
%This is a very general construction for classical-quantum network communication settings, where $J$ and $X$ will respectively correspond to the messages and classical inputs to the classical-quantum channel on different rounds of communication.
%$H$ would then be the inputs on a particular round, and $\ol D$ would be the decoder's  message estimates from previous rounds.

We can now state our quantum multiparty packing lemma:

\begin{lem}[One-shot quantum multiparty packing lemma]\label{lem:packing}
  Let $\mathcal{B} = (G, X, M, \ind)$ be a multiplex Bayesian network and run \algref to obtain a random codebook $C = \left\{ x(m) \right\}_{m \in M}$.
  Let $H \subseteq G$ be an ancestral subgraph, $\{ \rho_{B}^{(x_H)} \}_{x_H \in \mathcal{X_H}}$ a family of quantum states, $D \subseteq J_H$, and $\eps\in(0,1)$.
  Then there exists a POVM%
\footnote{These POVMs depend on the codebook $C_H$ and are hence involved in the averaging in~\cref{eq:packing}. This will be important in the analyses below.}
  $\{ Q_B^{(m_D \vert m_{\ol D})} \}_{m_D \in M_D}$
  for each $m_{\ol D}\in M_{\ol D}$
  such that, for all $(m_D,m_{\ol D}) \in M_H$,
  \begin{align}
    & \EE_{C_H}\left[ \tr\left[(I - Q_B^{(m_D | m_{\ol D})}) \rho^{(x_H(m_D, m_{\ol D}))}_B \right] \right]  \nonumber\\
    & \leq f(\abs{V_H}, \varepsilon) + 4 \sum_{\O \neq T \subseteq D}  2^{\scalebox{0.7}{$\displaystyle \big(\sum_{t \in T}^{} R_t \big) - D_H^\epsilon(\rho_{X_H B} \Vert \rho^{(\{X_{S_T}, B \})}_{X_H B} )$}}. \label{eq:packing}
  \end{align}
  Here, $\EE_{C_H}$ denotes the expectation over the random codebook $C_H=\{x_H(m_H)\}_{m_H\in M_H}$,
  $R_t \equiv \log \abs{M_t}$,
  \begin{align*}
    S_T \equiv \left\{ v \in V_H \;\vert\; \ind(v) \cap T \neq \O \right\},
  \end{align*}
  and
  \begin{align*}
  \rho_{X_H B} \equiv \sum_{x_H \in \mathcal{X}_H} p_{X_H}(x_H) \state{x_H}_{X_H} \otimes \rho^{(x_H)}_B.
  \end{align*}
  Furthermore, $f(k, \eps)$ is a universal function (independent of our setup) that tends to zero as $\eps\to0$.
\end{lem}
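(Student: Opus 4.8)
The plan is to combine three ingredients: Sen's quantum joint typicality lemma \cite{senInPrep} (restated below) to produce a single POVM element that simultaneously detects the true codeword and is nearly blind to all the relevant ``scrambled'' states; the Hayashi-Nagaoka operator inequality to turn a collection of such detectors into a genuine decoding POVM $\{Q_B^{(m_D\vert m_{\ol D})}\}$; and the factorization of the random codebook recorded in Observations~\ref{it:obs1}--\ref{it:obs3} to evaluate the codebook average of the resulting ``confusion'' terms. The reason only the states $\rho^{(\{X_{S_T},B\})}_{X_H B}$ for nonempty $T\subseteq D$ appear on the right-hand side of \eqref{eq:packing} is exactly that these are the only scrambled states for which we will need simultaneous near-optimality.

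Concretely, first I would apply Sen's lemma to the classical-quantum state $\rho_{X_H B}$ together with the finite family $\{\rho^{(\{X_{S_T},B\})}_{X_H B}:\O\neq T\subseteq D\}$ of scrambled states (each of which is again classical-quantum). This yields, for every $x_H\in\mathcal X_H$, an operator $0\le\Lambda_{x_H}\le I$ on $B$ such that the c-q operator $\Lambda\equiv\sum_{x_H}p_{X_H}(x_H)\state{x_H}_{X_H}\ot\Lambda_{x_H}$ satisfies $\tr[\Lambda\,\rho_{X_H B}]\ge 1-f(\abs{V_H},\eps)$ and, for every nonempty $T\subseteq D$, $\tr[\Lambda\,\rho^{(\{X_{S_T},B\})}_{X_H B}]\le 2^{-D_H^\eps(\rho_{X_H B}\Vert\rho^{(\{X_{S_T},B\})}_{X_H B})}$ (absorbing universal constants into $f$). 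Fixing $m_{\ol D}\in M_{\ol D}$, I then set $\Lambda_{m_D}\equiv\Lambda_{x_H(m_D,m_{\ol D})}$ --- which depends on the codebook $C_H$, as the footnote to the lemma anticipates --- and define $Q_B^{(m_D\vert m_{\ol D})}\equiv\big(\sum_{m_D'}\Lambda_{m_D'}\big)^{-1/2}\Lambda_{m_D}\big(\sum_{m_D'}\Lambda_{m_D'}\big)^{-1/2}$, a subnormalized measurement that is completed to a POVM by adjoining the missing outcome.

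Applying the Hayashi-Nagaoka inequality, $I-Q_B^{(m_D\vert m_{\ol D})}\le 2(I-\Lambda_{m_D})+4\sum_{m_D'\neq m_D}\Lambda_{m_D'}$, and taking the expectation over $C_H$ gives
\begin{align*}
\EE_{C_H}\big[\tr[(I-Q_B^{(m_D\vert m_{\ol D})})\rho_B^{(x_H(m_D,m_{\ol D}))}]\big]
\le 2\,\EE_{C_H}\big[\tr[(I-\Lambda_{m_D})\rho_B^{(x_H(m_D,m_{\ol D}))}]\big]
+4\sum_{m_D'\neq m_D}\EE_{C_H}\big[\tr[\Lambda_{m_D'}\rho_B^{(x_H(m_D,m_{\ol D}))}]\big].
\end{align*}
By Observation~\ref{it:obs1}, $x_H(m_D,m_{\ol D})$ has distribution $p_{X_H}$, so the first term equals $2\big(1-\tr[\Lambda\,\rho_{X_H B}]\big)\le 2f(\abs{V_H},\eps)$, which we fold back into $f$. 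For the confusion terms I would group the wrong messages $m_D'\neq m_D$ by the nonempty set $T=\{t\in D:(m_D')_t\neq(m_D)_t\}$ on which they disagree with $m_D$; there are at most $\prod_{t\in T}\abs{M_t}=2^{\sum_{t\in T}R_t}$ of them for each $T$. The structural point, to be verified from \eqref{eq:ind condition} and Observations~\ref{it:obs2}--\ref{it:obs3}, is that $\ol{S_T}$ is an ancestral subgraph of $H$, that $x_H(m_D',m_{\ol D})$ and $x_H(m_D,m_{\ol D})$ coincide on every component indexed by $\ol{S_T}$ (these depend only on message parts outside $T$), and that, conditioned on that shared $\ol{S_T}$-part, their $S_T$-parts are independent, each distributed as $p_{X_{S_T}\vert X_{\ol{S_T}}}$. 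Marginalizing the $B$-system over the $S_T$-part of the true codeword turns $\rho_B^{(x_H(m_D,m_{\ol D}))}$ into $\rho_B^{(x_{\ol{S_T}})}$, so the codebook average of each confusion term collapses to exactly $\tr[\Lambda\,\rho^{(\{X_{S_T},B\})}_{X_H B}]\le 2^{-D_H^\eps(\rho_{X_H B}\Vert\rho^{(\{X_{S_T},B\})}_{X_H B})}$. Summing over $m_D'$ grouped by $T$ then yields \eqref{eq:packing}, and the bound holds for all $(m_D,m_{\ol D})$ since the construction is symmetric in $m_D$ and every codeword is marginally distributed as $X_H$.

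The main obstacle, beyond bookkeeping, is the bridge between the combinatorics of \algref and the analytic content of Sen's lemma: verifying that the codebook average of a confusion term is \emph{literally} $\tr[\Lambda\,\rho^{(\{X_{S_T},B\})}_{X_H B}]$ with the correct index set $S_T=\{v\in V_H:\ind(v)\cap T\neq\O\}$ requires carefully tracking which codeword components are shared, which are resampled, and how the Bayesian-network conditioning interacts with the topological order, and then matching this to the hypothesis-testing relative entropy that Sen's lemma controls. The other delicate point is invoking Sen's lemma in precisely the form needed --- a single c-q POVM element that is simultaneously near-optimal for all $2^{\abs{D}}$ tests $\rho_{X_H B}$ versus $\rho^{(\{X_{S_T},B\})}_{X_H B}$, with an error $f(\abs{V_H},\eps)$ depending only on the number of systems and on $\eps$ --- which is the hard technical input we import rather than reprove.
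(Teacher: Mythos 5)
Your high-level skeleton---Sen's lemma for the simultaneous tests, a square-root/Hayashi--Nagaoka construction, and the codebook factorization from the observations about \algref to turn the averaged confusion terms into the scrambled states $\rho^{(\{X_{S_T},B\})}_{X_H B}$---is the same as the paper's, and the combinatorial part is right: grouping the wrong messages by the disagreement set $T$, counting at most $2^{\sum_{t\in T}R_t}$ of them, checking $\pa(\ol{S_T})\subseteq\ol{S_T}$, and using the conditional independence of the $S_T$-parts is exactly how the paper passes between \cref{eq:packing2} and \cref{eq:packing} (the paper routes this through the more abstract \cref{lem:packingGen}, but that is organizational).

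The genuine gap is in your first step: Sen's lemma, in the form stated as \cref{thm:conditional}, does \emph{not} supply operators $\Lambda_{x_H}$ acting on the original system $B$ such that the c-q operator $\Lambda$ simultaneously satisfies $\tr[\Lambda\,\rho_{X_H B}]\ge 1-f(\abs{V_H},\eps)$ and $\tr[\Lambda\,\rho^{(\{X_{S_T},B\})}_{X_H B}]\le 2^{-D_H^\eps(\rho_{X_H B}\Vert\rho^{(\{X_{S_T},B\})}_{X_H B})}$ for all $T$. What it supplies is a cqc POVM element $\wh\Pi_{X\wh B YZ}$ on an \emph{augmented} space: $B$ is embedded by an isometry $\wh J$ into a larger system, auxiliary classical systems $Y$ (one per classical register) and $Z$ (attached to $B$) are adjoined, the acceptance guarantee holds only for an auxiliary state $\wh\rho$ that is trace-norm close to the embedded state tensored with $\tau_{YZ}$, and the rejection guarantees hold for the \emph{augmented} scrambled states with an extra additive error $h$ that vanishes only as the auxiliary dimension grows. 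Whether a single simultaneous test exists directly on the original space is precisely the obstruction that Sen's augmentation is designed to circumvent, so the black box cannot be imported in the form you state. Accordingly, the paper's proof contains a layer your proposal omits: it draws a uniformly random assignment $y(x)$, forms the square-root measurement from the augmented blocks $\wh\Pi^{(x(i),y(i))}_{\wh B Z}$, maps it back to $B$ by conjugating with $\wh J^\dagger$ and averaging over $Z$, uses the trace-norm closeness to compare $\rho^{(i)}_B\ot\tau_Z$ with $\wh\rho^{(i)}_{\wh B Z}$, chooses $d_Y,d_Z$ large enough that $h\le\eps$, and finally derandomizes over the $y$-codebook so that the POVM depends only on $C_H$. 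That bridge is the substantive content of the paper's proof of \cref{lem:packing}; without it, your argument assumes a strictly stronger statement than is available.
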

\newpage

\begin{remark*}
  The bound in~\cref{eq:packing} can also be written as
  \begin{align}\label{eq:packing2}
    & \EE_{C_H}\left[ \tr\left[(I - Q_B^{(m_D | m_{\ol D})}) \rho^{(x_H(m_D, m_{\ol D}))}_B \right] \right] \nonumber\\
    & \leq f(\abs{V_H}, \varepsilon) +
    4 \sum_{m_D' \neq m_D}^{} 2^{-D_H^\varepsilon(\rho_{X_H B} \Vert \rho_{X_H B}^{(\{X_S, B \})})},
  \end{align}
  where
  \begin{align*}
    S \equiv \left\{ v \in V_H \;\vert\; \exists j \in D \cap \ind(v) \text{ such that } (m_D)_j \neq (m'_D)_j \right\}.
  \end{align*}
  In words, $S$ is the set of random codewords that depend on a part of the message that differs between $m_D$ and $m_D'$.
  This is similar to decoding error bounds obtained with conventional methods, such as the Hayashi-Nagaoka lemma~\cite{hayashi2003general}.
  We obtain~\cref{eq:packing} from~\cref{eq:packing2} by parametrizing the different $m_D'$ with respect to the indices that differ from $m_D$.
\end{remark*}

\begin{remark*}
  Note~\cref{eq:packing} assumes that the decoder's guess of $m_{\overline D}$ is correct. That is, they choose the POVM $\left\{ Q_B^{(m_D | m_{\ol D})} \right\}_{m_D \in D}$, where $m_{\ol D}$ is exactly the $m_{\ol D}$ in the encoded state $\rho^{(x_H(m_D, m_{\ol D}))}_B$. %\footnote{Note that POVM $Q_B^{(m_D | m_{\ol D})}$ is designed to detect the state $\rho^{(m_D, m_{\ol D})}$.}.
  If the decoder's guess is incorrect, then this bound does hold in general. In applications, $m_{\ol D}$ typically corresponds to message estimates of previous rounds, which we will assume to be correct by invoking a classical union bound. That is, we bound the total probability of error by summing the probabilities of error of a decoding \emph{assuming} that all previous decodings were correct. Note that the decodings must be performed on disjoint quantum systems for this argument to hold.
\end{remark*}

The following is the explicit form of $f(k, \varepsilon)$ for $k \in \N$ from~\cite{senInPrep} and our proof of the packing lemma in~\cref{sec:proofpackinglemma}:
\begin{align}
  f(k, \varepsilon)   & = \left( 1+ 6 \times 2^{\frac{k+1}{2}} + 4 \times 2^{2^{k+5} + k^2 + 2k} \right) \varepsilon^{1/3}.
  \label{eq:explicitF}
  %\varepsilon + 2^{\frac{k}{2} +2} \varepsilon^{1/4} \left( 2^{2^{k+3}-1/2} + 1 \right).
\end{align}
For simplicity, we can make some coarse approximations to obtain an upper bound:
\begin{align}
  \label{eq:upperBoundF}
  f(k,\varepsilon) \le 2^{2^{7k}} \varepsilon^{1/3}.
\end{align}

Using~\cref{lem:packing} and~\cref{eq:CMI}, we can naturally obtain the asymptotic version where we simply take $n \in \N$ copies of the codebook and take the limit of large $n$. By the quantum Stein's lemma~\cref{eq:stein}, the error in~\cref{eq:packing} will vanish if the rates of encoding are bounded by conditional mutual information quantities. We present this as a self-contained statement.
\begin{lem}[Asymptotic quantum multiparty packing lemma]\label{lem:packingAsympt}
  Let $\mathcal{B} = (G, X, M, \ind)$ be a multiplex Bayesian network.
  Run \algref $n$ times to obtain a random codebook $C^n = \left\{  x^n(m) \in \mathcal X^n\right\}_{m \in M}$.
  Let $H \subseteq G$ be an ancestral subgraph, $\{ \rho_{B}^{(x_H)} \}_{x_H \in \mathcal{X}_H}$ a family of quantum states, and $D \subseteq J_H$.
  Then there exists a POVM $\{ Q_{B^n}^{(m_D \vert m_{\ol D})} \}_{m_D \in M_D}$ for each $m_{\ol D} \in M_{\ol D}$ such that, for all $(m_D,m_{\ol D}) \in M_H$,
  \begin{align*}
    \lim_{n \to \infty} \EE_{C_H^n} \left[ \tr \left[ (I - Q_{B^n}^{(m_D | m_{\ol D})}) \bigot_{i=1}^n \rho_{B_i}^{(x_{i,H}(m_D, m_{\ol D}))} \right] \right] = 0,
  \end{align*}
  provided that%
\footnote{Note that $R_t$ is defined differently here. This is due to the difference in the definition of ``rate'' for one-shot and asymptotic settings. }
  \begin{align*}
    \sum_{t \in T}^{} R_t < I(X_{S_T} ; B | X_{\overline {S_T}})_\rho -\delta(n)
    \quad\text{ for all } \emptyset\neq T \subseteq D.
  \end{align*}
  Above, $\EE_{C_H^n}$ is the expectation over the random codebook $C_H^n \equiv \left\{ x_H^n(m_H) \right\}_{m_H \in M_H}$, $R_t \equiv \frac{1}{n} \log \abs{M_t}$, $\delta(n)$ is some function that tends to $0$ as $n \to \infty$,
  \begin{align*}
    S_T \equiv \left\{ v \in V_H \;\vert\; \ind(v) \cap T \neq \O \right\}, \, \ol{S_T} \equiv V_H \setminus S_T,
  \end{align*}
  and
  \begin{align*}
    \rho_{X_H B} \equiv \sum_{x_H \in \mathcal{X}_H} p_{X_H}(x_H) \state{x_H}_{X_H} \otimes \rho^{(x_H)}_B.
  \end{align*}
\end{lem}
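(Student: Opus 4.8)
The plan is to deduce \cref{lem:packingAsympt} from the one-shot \cref{lem:packing} by tensorization, followed by the quantum Stein's lemma. The first step is to observe that running \algref $n$ times is the same as running it once on a new multiplex Bayesian network $\mathcal{B}^{\otimes n} = (G, X^n, M, \ind)$ built on the \emph{same} DAG $G$, with the \emph{same} message index structure $(M, \ind)$, but with each random variable $X_v$ replaced by an i.i.d.\ $n$-fold copy $X_v^n$ on $\mathcal{X}_v^n$, with conditionals $p_{X_v^n \mid X_{\pa(v)}^n} = (p_{X_v \mid X_{\pa(v)}})^{\otimes n}$. One checks that $X^n$ is again a Bayesian network with respect to $G$, that \cref{eq:ind condition} still holds (it refers only to $\pa$ and $\ind$, which are unchanged), and that one run of \algref on $\mathcal{B}^{\otimes n}$ produces a codebook with the same joint distribution as $n$ independent runs on $\mathcal{B}$. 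The crucial point is that $V_H$ does not change, so the term $f(\abs{V_H}, \cdot)$ emerging from \cref{lem:packing} does not grow with $n$.

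Next I would apply \cref{lem:packing} to $\mathcal{B}^{\otimes n}$, the ancestral subgraph $H$ (now carrying the product alphabets), the product encoding $\rho_{B^n}^{(x_H^n)} \equiv \bigot_{i=1}^n \rho_{B_i}^{(x_{i,H})}$, the same set $D \subseteq J_H$, and a parameter $\eps \in (0,1)$ to be fixed later. The key identifications are that the state $\rho_{X_H B}$ appearing in \cref{lem:packing} for $\mathcal{B}^{\otimes n}$ is $\rho_{X_H^n B^n} = \rho_{X_H B}^{\otimes n}$, and that $\rho_{X_H^n B^n}^{(\{X_{S_T}^n, B^n\})} = \bigl(\rho_{X_H B}^{(\{X_{S_T}, B\})}\bigr)^{\otimes n}$ by multiplicativity of the conditional-density-operator construction over the $n$ i.i.d.\ copies, where $S_T$ is the same subset of $V_H$ as before. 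Thus \cref{eq:packing} becomes, for all $(m_D, m_{\ol D}) \in M_H$,
\begin{align*}
  \EE_{C_H^n}\left[ \tr\left[(I - Q_{B^n}^{(m_D | m_{\ol D})}) \bigot_{i=1}^n \rho_{B_i}^{(x_{i,H}(m_D, m_{\ol D}))} \right] \right] &\leq f(\abs{V_H}, \eps) \\
  &\quad + 4 \sum_{\O \neq T \subseteq D} 2^{\left(\sum_{t \in T} R_t\right) - D_H^\eps\left(\rho_{X_H B}^{\otimes n} \,\Vert\, (\rho_{X_H B}^{(\{X_{S_T}, B\})})^{\otimes n}\right)}.
\end{align*}

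It then remains to drive the right-hand side to zero. By \cref{eq:CMI} (the combination of the quantum Stein's lemma \cref{eq:stein} with the conditional-density-operator formalism, applied with the index set $V_H$), $\frac1n D_H^\eps\bigl(\rho_{X_H B}^{\otimes n} \,\Vert\, (\rho_{X_H B}^{(\{X_{S_T}, B\})})^{\otimes n}\bigr) \to I(X_{S_T}; B \mid X_{\ol{S_T}})_\rho$ as $n \to \infty$, so under the hypothesis $\sum_{t \in T} R_t < n\, I(X_{S_T}; B \mid X_{\ol{S_T}})_\rho$ the slack in each exponent grows linearly in $n$. The only remaining issue is that $f(\abs{V_H}, \eps)$ is a fixed positive number for fixed $\eps$, so to send the error all the way to zero one must also let $\eps \to 0$. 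I would resolve this with a diagonal argument: pick a sequence $\eps_n \to 0$ decaying slowly, e.g.\ $\eps_n = 1/n$, and invoke the finite-blocklength estimate \cref{eq:DHepsBound}, which gives $D_H^{\eps_n}(\rho^{\otimes n} \Vert \sigma^{\otimes n}) \geq n D(\rho \Vert \sigma) - \sqrt n\, F_1(\eps_n)$ with $F_1(\eps_n) = O(\log n)$; since $\sqrt n \log n = o(n)$, each exponent still tends to $-\infty$, while $f(\abs{V_H}, \eps_n) \to 0$. Outputting, for each $n$, the POVM furnished by \cref{lem:packing} at parameter $\eps_n$ then yields the claimed limit.

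I expect the main obstacle to lie in the bookkeeping at the two ends of the argument: first, setting up the tensorized multiplex Bayesian network on the \emph{unchanged} DAG (rather than on $n$ disjoint copies of $G$, which would make $\abs{V_H}$ grow with $n$ and defeat the use of the universal function $f$) and verifying that $\rho_{X_H^n B^n}^{(\{X_{S_T}^n, B^n\})}$ really is the $n$-fold tensor power of its one-shot counterpart, which requires carefully unwinding the conditional-density-operator definitions; and second, the interchange of the limits $n \to \infty$ and $\eps \to 0$, which is not covered by the bare statement of Stein's lemma in \cref{eq:stein} and genuinely needs the finite-$n$ bound \cref{eq:DHepsBound} to justify the diagonal choice $\eps_n \to 0$.
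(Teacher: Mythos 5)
Your proposal is correct and follows essentially the same route as the paper: tensorize the multiplex Bayesian network on the \emph{same} DAG so that running \algref once on $(G, X^n, M, \ind)$ equals $n$ runs on $\mathcal{B}$, apply the one-shot \cref{lem:packing}, verify $\rho_{X_H^n B^n} = \rho_{X_H B}^{\otimes n}$ and $\rho_{X_H^n B^n}^{(\{X_{S_T}^n, B^n\})} = \bigl(\rho_{X_H B}^{(\{X_{S_T}, B\})}\bigr)^{\otimes n}$, and then take $\eps(n) \to 0$ (e.g.\ $\eps(n)=1/n$) using \cref{eq:DHepsBound} and \cref{eq:CMI}. Your explicit check that $F_1(\eps_n)=O(\log n)$ makes the diagonal step even slightly more detailed than the paper's.
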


\begin{example*}
  \label{ex:packing}
  To clarify the definitions and illustrate the applications of~\cref{lem:packing} and~\cref{lem:packingAsympt}, we use them to code over the two-sender c-q MAC.
  Consider the multiplex Bayesian network given in~\cref{fig:relayexample}. We apply~\algref to obtain a random codebook $\left\{ u, x_1(m_1), x_2(m_2) \right\}_{m_1 \in M_1, m_2 \in M_2}$. We then simply let each sender transmit their message via the corresponding codeword. Now, choosing%
\footnote{We introduced these elements mainly for interactive scenarios, such as the relay channel that we analyze below. }
  $H=G$ %, $\{ \rho_B^{(x_1, x_2, x_3)}\}_{x_i \in \mathcal{X}_i}$,
  and $D=J = \left\{ 1,2 \right\}$, by~\cref{lem:packingAsympt} we obtain a POVM $\{ Q_B^{(m_1, m_2)} \}_{m_1 \in M_1, m_2 \in M_2}$. The mapping from $T \subseteq D$ to $S_T \subseteq V = \left\{ U, X_1, X_2 \right\}$ is given in~\cref{tab:ST}.
  \begin{table}[h]
  \renewcommand{\arraystretch}{1.5}
  \caption{$S_T \subseteq V$ for various $\O \neq T \subseteq D$.}
  \label{tab:ST}
  \centering
  \begin{tabular}{c | c}
    $T$ & $S_T$\\
    \hline
    $\left\{ 1 \right\}$ & $\left\{ X_1 \right\}$\\
    $\left\{ 2 \right\}$ & $\left\{ X_2 \right\}$\\
    $\left\{ 1,2 \right\}$ & $\left\{ X_1, X_2 \right\}$
  \end{tabular}
  \end{table}
  \\
  Thus, letting the receiver use this POVM achieves the rate region
  \begin{align*}
    R_1 & < I(X_1; B \vert X_2 U)_\rho \\
    R_2 & < I(X_2; B \vert X_1 U)_\rho \\
    R_1 + R_2 & < I(X_1 X_2; B \vert U)_\rho,
  \end{align*}
  where
  \begin{align*}
    \rho_{U X_1 X_2 B} = \sum_{u, x_1, x_2}^{} & p_{U}(u) p_{X_1 \vert U}(x_1 \vert u) p_{X_2 \vert U}(x_2 \vert u) \\
    & \state{u,x_1, x_2}_{U X_1 X_2} \ot \rho_B^{(x_1, x_2)}
  \end{align*}
  and $\rho_B^{(x_1, x_2)}$ is the output of the MAC with input $(x_1, x_2)$. Hence, with our quantum multiparty packing lemma we readily achieve the capacity found in~\cite{winter2001capacity}.

  We can also get one-shot results for the MAC. Let $R_1, R_2, \varepsilon \in \R_{\ge 0}$ such that
  \begin{align*}
    R_1 &\le D_H^\varepsilon(\rho_{U X_1 X_2 B} \Vert \rho_{U X_1 X_2 B}^{( \left\{ X_1, B \right\} )}) - 2 - \log \frac{1}{\varepsilon} \\
    R_2 & \le D_H^\varepsilon(\rho_{U X_1 X_2 B} \Vert \rho_{U X_1 X_2 B}^{( \left\{ X_2, B \right\} )}) - 2 - \log \frac{1}{\varepsilon} \\
    R_1 + R_2 &\le D_H^\varepsilon(\rho_{U X_1 X_2 B} \Vert \rho_{U X_1 X_2 B}^{ (\left\{ X_1 X_2, B \right\}) }) - 2 - \log \frac{1}{\varepsilon}.
  \end{align*}
  Then, applying~\cref{lem:packing}, the probability of error in decoding is at most
  \begin{align*}
    p_e & \le f(3, \varepsilon) + 4 \left( 2^{-2 - \log \frac{1}{\varepsilon}} \right) \times 3 \\
    & \le 2^{2^{21}} \varepsilon^{1/3} + 3\varepsilon \le (2^{2^{21}}+3) \varepsilon^{1/3},
  \end{align*}
  where we used the coarse approximation in~\cref{eq:upperBoundF} and that $\varepsilon \in (0,1)$. Using that $X_1$ and $X_2$ are independent conditional on $U$, we obtain up to constants Theorem 2 of~\cite{senInPrep}.
\end{example*}
We expect that~\cref{lem:packing} and~\cref{lem:packingAsympt} can be used in a variety of scenarios to directly generalize results from classical network information theory
, which often hinge on~\cref{lem:cPacking},
to the quantum case. In fact, it is not too difficult to see that an i.i.d.\ variant%
\footnote{This is because we assume i.i.d.\ codewords in~\cref{lem:packingAsympt}, which is sufficient for, e.g., relay, multiple access~\cite{senInPrep}, and broadcast channels~\cite{senInPrep2}.}
of~\cref{lem:cPacking} can be derived from~\cref{lem:packingAsympt}.
More precisely, let $(U,X,Y) \sim p_{UXY}$ be a triple of random variables as in the former.
Consider a DAG $G$ consisting of two vertices, corresponding to random variables $U$ and $X$ with joint distribution $p_{UX}$, and an edge going from the former to the latter.
We set $J=\{1\}$, $\ind(X)=\{1\}$, and $M_1=M$ as the message set.
A visualization of this simple multiplex Bayesian network $(G, (U, X), M, \ind)$ is given in~\cref{fig:relayclassical}.

\begin{figure}
  \centering
  \includegraphics[scale=0.7]{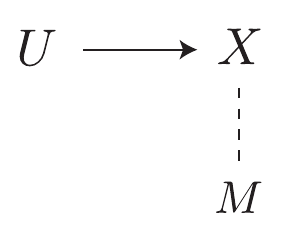}
  \caption{The multiplex Bayesian network $(G, (U, X), M, \ind)$ which relates~\cref{lem:packingAsympt} to~\cref{lem:cPacking}.}\label{fig:relayclassical}
\end{figure}

By running \algref~$n$ times, we obtain codewords which we can identify as $\tilde U^n$ and $\tilde X^n(m)$. Conditioned on $\tilde U^n$, it is clear that for each $m \in M$, $\tilde X^n(m) \sim \bigot_{i=1}^n p_{X \vert U = \tilde U_i}$. Next, choose the subgraph to be all of $G$, set of quantum states to be the classical states
\begin{align*}
  \left\{ \rho_{\tilde Y}^{(u,x)} \equiv \sum_{\tilde y \in \mathcal{Y}}^{} p_{Y \vert U X}(\tilde y \vert u, x) \state{\tilde y}_{\tilde Y} \right\}_{u \in \mathcal{U}, x \in \mathcal{X}},
\end{align*}
and decoding subset $D=\{1\}$, corresponding to $M$.
We see that if we consider the entire system consisting of $\tilde U^n, \, \tilde X^n(m)$ and $\bigot_{i=1}^n \rho_{\tilde Y_i}^{(\tilde U_i \tilde X_i(m'))}$ for $m' \neq m$, it is clear that $\tilde X^n(m)$ is conditionally independent of $\tilde Y^n$ given $\tilde U^n$ due to the conditional independence of $X^n(m)$ and $X^n(m')$ given $\tilde U^n$.
By~\cref{lem:packingAsympt}, we obtain a POVM $\{ Q_{\tilde Y^n}^{(m)} \}_{m \in M}$ such that, for all $m \in M$,
\begin{align*}
  \lim_{n \to \infty} \EE_{C^n} \left[ \tr\left[ \left(I- Q_{\tilde Y^n}^{(m)}\right)\bigot_{i=1}^n \rho_{\tilde Y_i}^{(\tilde u_i x_i(m))}  \right] \right] = 0
\end{align*}
provided $R < I(X;Y \vert U) - \delta(n)$, which is analogous to~\cref{lem:cPacking} if we ``identify'' the POVM measurement with the typicality test.

In~\cref{sec:proofpackinglemma} we prove~\cref{lem:packing} using Sen's quantum joint typicality lemma with $\abs{V}$ classical systems and a single quantum system. We then prove~\cref{lem:packingAsympt}. In the proof of our packing lemma, we actually prove a more general, albeit more abstract, statement.

%=============================================================================
\section{Application to the Classical-Quantum Relay Channel}\label{sec:quantumrelay}
%=============================================================================
To illustrate the wide applicability of~\cref{lem:packing} and demonstrate how to use it, we prove a series of achievability results for the classical-quantum relay channel. The first three results make use of the packing lemma in situations where the number of random variables involved in the decoding is at most two ($\abs{V_H} \le 2$). This situation can be dealt with using existing techniques~\cite{savov2012partial}. The final partial decode-forward lower bound, however, applies the packing lemma with $\abs{V_H}$ unbounded with increasing blocklength, thus requiring its full strength. These lower bounds are well-known for classical relay channels~\cite{elgamal2011network}, and our packing lemma allows us to straightforwardly generalize them to the quantum and even finite blocklength case.%
\footnote{Note that in this case the one-shot capacity reduces to the point-to-point scenario, as the relay lags behind the sender.}
We can then invoke~\cref{lem:packingAsympt} to obtain lower bounds on the capacity, which match exactly those of the classical setting with the quantum generalization of mutual information. Note that the partial decode-forward asymptotic bound for the classical-quantum relay channel was first established in~\cite{savov2012partial}.

First we give some definitions. A classical-quantum relay channel~\cite{savov2012partial,jin2012lower} is a classical-quantum channel $\mathcal{N}$ with two classical inputs $X_1, X_2$ and two quantum outputs $B_2, B_3$:
\[ \mathcal{N}_{X_1 X_2 \to B_2 B_3} \colon \mathcal X_1 \times \mathcal X_2 \to \mathcal H_{B_2} \ot \mathcal{H}_{B_3},
\quad (x_1,x_2) \mapsto \rho_{B_2 B_3}^{(x_1 x_2)}. \]
The sender transmits $X_1$, the relay transmits $X_2$ and obtains $B_2$, and the receiver obtains $B_3$. The setup is shown in~\cref{fig:relaysingle}. Note that this is more general than the setting of two concatenated channels because the relay's transmission also affects the system that the relay obtains and the sender's transmission affects the receiver's system.
\begin{figure}[t]
\begin{center}
\includegraphics[scale=0.4]{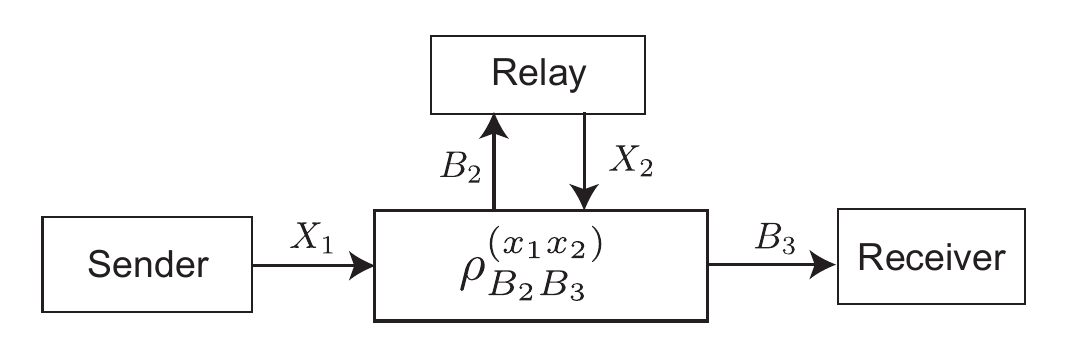}
\end{center}
\caption{The relay channel $\mathcal N_{X_1 X_2 \to B_2 B_3}$. $\rho_{B_2 B_3}^{(x_1 x_2)}$ is a family of quantum states which defines the classical-quantum relay channel.}\label{fig:relaysingle}
\end{figure}

We now define what comprises a general code for the classical-quantum relay channel. Let $n \in \N$, $R \in \R_{\ge 0}$. A $(n, 2^{n R})$ code for classical-quantum relay channel $\mathcal N_{X_1 X_2 \to B_2 B_3}$ for $n$ uses of the channel and number of messages $2^{nR}$ consists of
\begin{enumerate}
  \item A message set $M$ with cardinality $2^{nR}$.
  \item An encoding $x_1^n(m) \in \mathcal{X}_1^n$ for each $m \in M$.
  \item A relay encoding and decoding $\mathcal{R}_{(B_{2})_{j-1} (B_{2}')_{j-1} \to (X_{2})_j (B_{2}')_j}$ for $j \in [n]$. Here, $(B_{2})_j$ is isomorphic to $B_2$ and $(X_2)_j$ isomorphic to $X_2$ while $(B_{2}')_j$ is some arbitrary quantum system. The relay starts with some trivial (dimension 1) quantum system $(B_{2})_0 (B_{2}')_0$.
  \item A receiver decoding POVM $\{Q_{B_3^n}^{(m)}\}_{m \in M}$.
\end{enumerate}
%where $M$ is the size of the message set to be sent and $Q_{B_2}$ and $Q_{B_3}$ are POVMs used by relay and receiver, respectively.
%on the first round the sender transmits $x_{1,1}(m)$ while the relay applies transmits some fixed $x_{2,1} \in \mathcal{X}_2$. The relay and receiver then receives the first and second systems of $\mathcal{N}(x_{1,1}(m), x_{2,1})_{B_2 B_3}$, respectively.
On round $j$, the sender transmits $(x_1)_j(m)$ while the relay applies%
\footnote{The map $\mathcal{R}_{(B_{2})_{j-1} (B_{2}')_{j-1} \to (X_{2})_j (B_{2}')_j}$ depends on~$j$. We do not write~$j$ explicitly since the systems $X_2$, $B_2$ and $B_2'$ are already labeled.}
$\mathcal{R}_{(B_{2})_{j-1} (B_{2}')_{j-1} \to (X_{2})_j (B_{2}')_j}$ to their $(B_{2})_{j-1} (B_{2}')_{j-1}$ system and transmits the $(X_{2})_j$ state while keeping the $(B_{2}')_j$ system. After the completion of $n$ rounds, the receiver applies the decoding POVM $\left\{ Q^{(m)}_{B_3^n} \right\}_{m \in M}$ on their received systems $\rho_{B_3^n}(m)$ to obtain their estimate for the message. See~\cref{fig:relayrepeat} for a visualization of a protocol with $n=3$ rounds.
\begin{figure}[t]
\begin{center}
\includegraphics[scale=0.4]{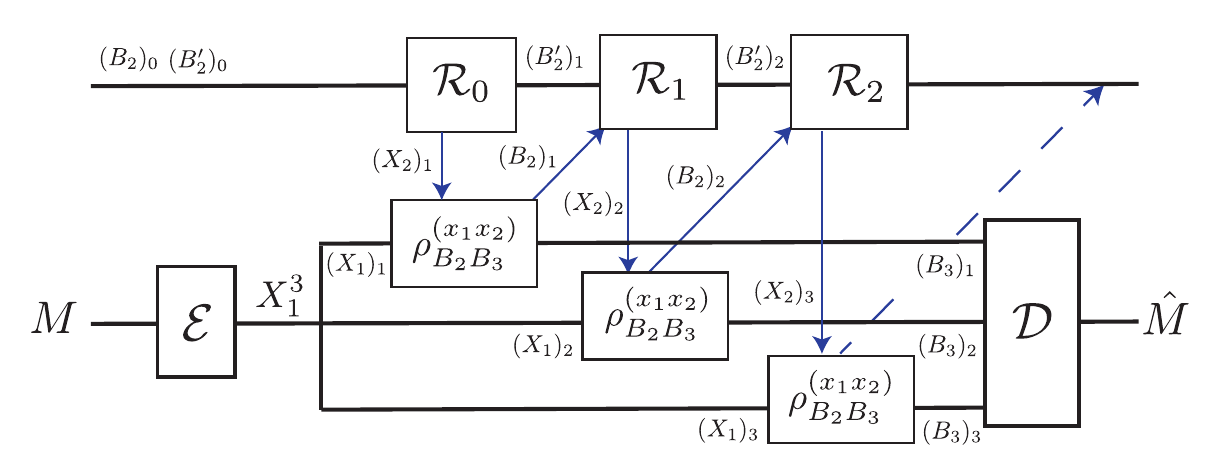}
\end{center}
\caption{A 3-round protocol for the classical-quantum relay channel. Here $\mathcal R_{j-1}$ denotes the relay operation applied on round $j$, and $\mathcal (B'_2)_j$ denotes the state left behind by the relay operation. The decoding operator $\mathcal D$ is applied to all systems $(B_3)_j$ simultaneously.}\label{fig:relayrepeat}
\end{figure}
The average probability of error of a general protocol is given by
\begin{align*}
  p_e =  \frac{1}{|M|} \sum_{m \in M}^{} \tr\left[\left(I- Q_{B_3^n}^{(m)}\right) \rho_{B_3^n}(m)\right].
\end{align*}

In the protocols we give below, we use random codebooks. We can derandomize in the usual way to conform to the above definition of a code. Furthermore, in our protocols the relay only leaves behind a classical system during intermediate stages of the protocol. Since our relay channels are classical-quantum, it is not clear if higher rates can be achieved by letting the relay leave behind a quantum system after every round. We leave the possibility of leaving behind a quantum system in our definition to allow for the most general protocols.

Given $R \in \R_{\ge 0}, \, n \in \N, \, \delta \in [0,1]$, we say that a triple $(R, n, \delta)$ is achievable for a relay channel if there exists a $(n, 2^{n R'})$ code such that
\begin{align*}
  R' \geq R\quad\mathrm{and}\quad p_e  \leq{\delta}.
\end{align*}
%Here, $R$ is the rate of the code, $n\in\mathbb{N}$ the number of channel uses, $p_e$ is average probability of of error on decoding the message and $\delta>0$ the tolerated error.
The capacity of the classical-quantum relay channel $\mathcal{N}_{X_1 X_2\to B_2 B_3}$ is then defined as
\begin{align*}
  C(\mathcal N) \equiv \lim_{\delta\to0}\liminf_{n\to\infty}\sup\left\{R:\text{$(R, n,\delta)$ is achievable for $\mathcal N$}\right\}.
\end{align*}

Now, before looking at specific coding schemes, we first give a general upper bound, a direct generalization of the cutset bound for the classical relay channel:
\begin{prp}[Cutset Bound]\label{prp:cutset}
  Given a classical-quantum relay channel $\mathcal{N}_{X_1 X_2 \to B_2 B_3}$, its capacity is bounded from above by
  \begin{align}
    \label{eq:cutset}
    & C(\mathcal{N}_{X_1 X_2 \to B_2 B_3}) \nonumber\\
    & \le \max_{p_{X_1 X_2}} \min\left\{ I(X_1 X_2 ; B_3), I(X_1; B_2 B_3 |X_2) \right\}.
  \end{align}
\end{prp}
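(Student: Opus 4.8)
The plan is to follow the standard converse strategy familiar from classical relay channels, adapted to the classical-quantum setting using quantum Fano-type inequalities and continuity of mutual information. Fix a $(n,2^{nR'})$ code with average error $p_e\le\delta$, and suppose the message $M$ is drawn uniformly at random. I would first purify/coherify the situation: let $M$ be a classical register, and consider the state on $M B_3^n$ obtained after running the protocol. Because the decoder recovers $M$ from $B_3^n$ with error probability at most $\delta$, the quantum Fano inequality gives $H(M\mid B_3^n)\le n\delta R' + h(\delta) =: n\eps_n$ with $\eps_n\to0$ as $n\to\infty$ and $\delta\to0$. Hence $nR' = H(M) = I(M;B_3^n) + H(M\mid B_3^n) \le I(M;B_3^n) + n\eps_n$. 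The whole task is then to bound $\frac1n I(M;B_3^n)$ by the two terms in the min, and to turn the single-letter bound into the multi-letterized maximization over $p_{X_1X_2}$.

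For the \emph{first cut} $I(X_1X_2;B_3)$: I would expand $I(M;B_3^n)$ by a telescoping (chain-rule) argument over the $n$ channel uses. Writing $I(M;B_3^n) = \sum_{j=1}^n I(M;B_{3,j}\mid B_3^{j-1}) \le \sum_{j=1}^n I(M B_3^{j-1}; B_{3,j})$, and noting that on round $j$ the output $B_{3,j}$ is produced by the channel $\mathcal N$ from the classical inputs $(X_1)_j(M)$ and $(X_2)_j$ (the latter a function of the relay's past, hence ultimately a function of $M$ and the relay's local randomness/past outputs), data processing gives $I(M B_3^{j-1};B_{3,j}) \le I((X_1)_j (X_2)_j; B_{3,j})$. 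Introducing a time-sharing variable $Q$ uniform on $[n]$ and setting $X_1 = (X_1)_Q$, $X_2 = (X_2)_Q$, $B_3 = B_{3,Q}$, this yields $\frac1n I(M;B_3^n) \le I(X_1 X_2; B_3 \mid Q) \le I(X_1 X_2 Q; B_3) = I(X_1 X_2; B_3)$ after absorbing $Q$ into the input distribution (since once we maximize over $p_{X_1X_2}$ the auxiliary $Q$ can be dropped). This is the routine part.

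For the \emph{second cut} $I(X_1;B_2B_3\mid X_2)$: here the relay ``knows'' $X_2^n$ (it generates it), so I would instead bound $I(M;B_3^n) \le I(M;B_2^n B_3^n \mid X_2^n)$ — using that $X_2^n$ is, roughly, a deterministic function of $M$ together with the relay's interaction, so conditioning on it and adding the relay's received systems $B_2^n$ can only increase the information about $M$. Then telescope over $j$ as before: $I(M;B_2^nB_3^n\mid X_2^n) = \sum_j I(M;B_{2,j}B_{3,j}\mid X_2^n B_2^{j-1} B_3^{j-1})$, and on round $j$ the pair $(B_{2,j},B_{3,j})$ is generated by $\mathcal N$ from $((X_1)_j,(X_2)_j)$, with $(X_1)_j$ a function of $M$ and $(X_2)_j$ part of the conditioning; data processing then reduces each term to $I((X_1)_j;B_{2,j}B_{3,j}\mid (X_2)_j)$ (modulo the usual care in checking that the extra conditioning registers are functions of the ones $\mathcal N$ actually depends on). Time-sharing over $Q$ again single-letterizes this to $I(X_1;B_2B_3\mid X_2)$. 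Taking the minimum of the two bounds, maximizing over the induced $p_{X_1X_2}$, and letting $n\to\infty$, $\delta\to0$ gives the claimed bound.

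The main obstacle I anticipate is the bookkeeping in the second-cut step: in the quantum relay model the relay's transmitted register $(X_2)_j$ is produced by a sequence of channels $\mathcal R$ acting on quantum registers $(B_2)_{j-1}(B_2')_{j-1}$, so $(X_2)_j$ is \emph{not} literally a classical deterministic function of $M$ alone — it depends on the quantum outputs the relay received. One must argue carefully (e.g.\ by coherently copying $(X_2)_j$ into an ancilla at the moment it is classical, since the relay outputs a classical symbol) that conditioning on $X_2^n$ and supplying $B_2^n$ is legitimate and that the chain-rule terms collapse via data processing to the single-round channel mutual information. Quantum data-processing for conditional mutual information and the fact that $\mathcal R$ is trace-preserving are the tools here; the subtlety is purely in identifying the right Markov/data-processing structure, and once that is set up the rest is the standard cutset argument.
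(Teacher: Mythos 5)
Your overall strategy (Fano, chain rule over the $n$ rounds, reduction to single-round mutual informations, time-sharing variable $Q$) is the same as the paper's, and your treatment of the first cut and of the single-letterization matches the paper's appendix essentially verbatim. The genuine gap is in your second cut, at the very first step: the inequality $I(M;B_3^n)\le I(M;B_2^n B_3^n\mid X_2^n)$ is not a valid entropic step. Conditioning is not monotone for mutual information, and $X_2^n$ is strongly correlated with $M$ (the relay's transmissions are built from its received systems, which carry the message). In fact the inequality is simply false for reasonable protocols: for a channel where the relay sees $X_1$ perfectly and the receiver sees $X_2$ perfectly, a block-Markov decode-forward code makes $X_2^n$ reveal all but one block's worth of $M$, so $I(M;B_2^nB_3^n\mid X_2^n)\approx H(M\mid X_2^n)$ is a vanishing fraction of $nR$, while $I(M;B_3^n)\approx nR$. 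Moreover, even granting that step, your chain-rule terms condition on the \emph{future} symbols $(X_2)_{j'}$, $j'>j$, which are generated from $(B_2)_j,(B_2)_{j+1},\dots$; given the round-$j$ inputs $(X_1)_j(X_2)_j$ the fresh outputs $(B_2)_j(B_3)_j$ are then no longer independent of the conditioning registers, so the collapse to $I((X_1)_j;(B_2)_j(B_3)_j\mid (X_2)_j)$ fails. The subtlety you flagged at the end (whether $X_2^n$ is a function of $M$) is real, but the fix is not a more careful justification of conditioning on all of $X_2^n$ --- that route cannot work.

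The paper's proof orders the steps so that every conditioning is free: first $I(M;B_3^n)\le I(M;B_2^nB_3^n)$ by monotonicity (adding systems, not conditioning on them); then the chain rule $\sum_j I(M;(B_2)_j(B_3)_j\mid B_2^{j-1}B_3^{j-1})$; then at step $j$ one may insert \emph{only} $(X_2)_j$ into the conditioning, because $(X_2)_j$ is obtainable from $B_2^{j-1}$ --- already present in the conditioning --- by applying the relay maps $\mathcal R$, and adding a system that is a (channel) function of the conditioning does not change the conditional mutual information; finally one enlarges by $(X_1)_j M B_2^{j-1}B_3^{j-1}$ and drops everything but $(X_1)_j$ using that, conditioned on the classical inputs of round $j$, the round-$j$ channel outputs are in tensor product with all other systems. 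If you replace your global conditioning on $X_2^n$ by this per-round insertion of $(X_2)_j$, the rest of your argument (Fano, first cut, time-sharing and the maximization over $p_{X_1X_2}$) goes through as you wrote it.
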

\begin{proof}
  See~\cref{app:cutset}.
\end{proof}
For some special relay channels,~\cref{prp:cutset} along with some of the lower bounds proven below will be sufficient to determine the capacity.
%This seems to follow from observing that the capacity of the relay channel cannot be higher than a point-to-point settings where the sender and relay become one joint sender or where the relay and receiver become one joint receiver. However, this would give the $\min \max$ instead of the $\max \min$ stated above. For completeness, we give in~\cref{app:cutset} a detailed proof.

%-----------------------------------------------------------------------------
\subsection{Multihop Scheme}\label{subsec:multihop}
%-----------------------------------------------------------------------------
The multihop lower bound is obtained by a simple two-step process where the sender transmits the message to the relay and the relay then transmits it to the receiver. That is, the relay simply ``relays'' the message. The protocol we give below is exactly analogous to the classical case~\cite{elgamal2011network}, right down to the structure of the codebook. In other words, with our packing lemma, the classical protocol can be directly generalized to the quantum case.
The only difference is that the channel outputs a quantum state and the decoding uses a POVM measurement.

Consider a relay channel
\[ \mathcal{N}_{X_1 X_2 \to B_2 B_3} \colon \mathcal X_1 \times \mathcal X_2 \to \mathcal H_{B_2} \ot \mathcal{H}_{B_3},
\quad (x_1,x_2) \mapsto \rho_{B_2 B_3}^{(x_1 x_2)}. \]
Let $R \geq 0$, $b \in \N$, $\eps \in (0,1)$, where $b$ is number of blocks. Again, $R$ is the log of the size of the message set and $b$ the number of relay uses, while $\varepsilon$ is the small parameter input to~\cref{lem:packing}. We show that we can achieve the triple $(\frac{b-1}{b}R, b, \delta)$ for some $\delta$ a function of $R, b, \varepsilon$. Let $p_{X_1}, p_{X_2}$ be probability distributions over $\mathcal{X}_1, \mathcal{X}_2$, respectively. Throughout, we use
\begin{align*}
  & \rho_{X_1 X_2 B_2 B_3} \\
  & \equiv \sum_{x_1,x_2} p_{X_1}(x_1) p_{X_2}(x_2) \ket{x_1x_2}\bra{x_1x_2}_{X_1 X_2} \ot \rho_{B_2 B_3}^{(x_1x_2)}. 
\end{align*}
We also define $\rho_{B_3}^{(x_2)} \equiv \sum_{x_1}^{} p_{X_1}(x_1) \rho_{B_3}^{(x_1 x_2)}$ to be the reduced state on $B_3$ induced by tracing out $X_1 B_2$ and fixing $X_2$.
\\~\\
\noindent \textbf{Code}: Throughout, $j \in [b]$. Let $G$ be a graph with $2b$ vertices corresponding to independent random variables $(X_1)_j \sim p_{X_1},  (X_2)_j \sim p_{X_2}$. Since all the random variables are independent, there are no edges. Furthermore, let $M_0, M_j$ be index sets, where $\abs{M_0} = 1$ and $\abs{M_j} = 2^{R}$. That is, our index set for the different message sets should be $J = [0:b]$. The $M_j$ are the sets from which the messages for each round is taken. We use a singleton $M_0$ to make the effect of the first and the last blocks more explicit. Finally, the function $\ind$ maps $(X_1)_j$ to $\{j\}$ and $(X_2)_j$ to $\{j-1\}$. Then, letting $X \equiv X_1^b X_2^b$ and $M \equiv \bigtimes_{j=0}^b M_j$, $\mathcal{B} \equiv (G, X, M, \ind)$ is a multiplex Bayesian network. See~\cref{fig:multihop} for a visualization when $b=3$.
%We also define a subnetwork $\mathcal{B}_j $, where graph has two vertices with random variable $X = (X_1)_j  (X_2)_j$ and message set $M_{j-1} \times M_j$, keeping the label links inhered from the network $\mathcal{B}$. We will also define $\mathcal{B}'_{j+1}$ to be a single vertex graph with random variable $(X_2)_{j+1}$ and message set $M_j$. See Fig.~\ref{fig:multihop-single} visualizing those subnetworks.
\begin{figure}[h]
\begin{center}
\includegraphics[scale=0.45]{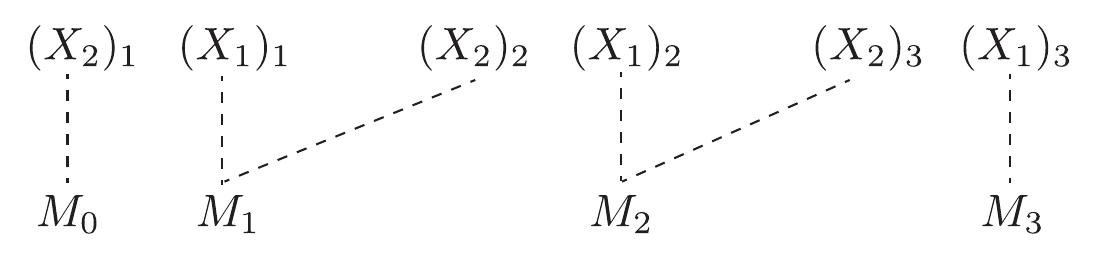}
\end{center}
\caption{Bayesian multiplex network $\mathcal{B}$ that generates the codebook $C$ for the multihop scheme with $b=3$. The absence of solid edges indicates that each $(X_i)_j$ is sampled independently.}\label{fig:multihop}
\end{figure}
Now, run \algref with $\mathcal{B}$ as the argument. This returns a random codebook
\begin{align*}
  C = \bigcup_{j=1}^b \left\{ (x_1)_j(m_j), (x_2)_j(m_{j-1})\right\}_{ m_j \in M_j, m_{j-1} \in M_{j-1}},
\end{align*}
where we restricted to the message indices the codewords are dependent on via $\ind$. For decoding we apply~\cref{lem:packing} with this codebook and use the assortment of POVMs that are given for different ancestral subgraphs and other parameters.
\\~\\
\noindent \textbf{Encoding}: On the $j$\textsuperscript{th} transmission, the sender transmits a message $m_j \in M_j$ via $(x_1)_j(m_j) \in C$.
\\~\\
\noindent \textbf{Relay encoding}: Set $\tilde m_0$ to be the sole element of $M_0$. On the $j$\textsuperscript{th} transmission, the relay sends their estimate $\tilde m_{j-1}$ via $(x_2)_j(\tilde m_{j-1}) \in C$. Note that this is the relay's estimate of the message $m_{j-1}$ transmitted by the sender on the $(j-1)$\textsuperscript{th} transmission.
\\~\\
\noindent \textbf{Relay decoding}: Consider the $j$\textsuperscript{th} transmission. We invoke~\cref{lem:packing} with the ancestral subgraph containing the two vertices $(X_1)_j$ and $(X_2)_j$, the set of quantum states $\left\{ \rho_{B_2}^{(x_1 x_2 )} \right\}_{x_1 \in \mathcal{X}_1, x_2 \in \mathcal{X}_2}$, decoding subset $\left\{  j\right\} \subseteq \left\{ j-1, j \right\}$, and small parameter $\eps \in (0,1)$. The relay picks the POVM corresponding to the message estimate for the previous round $\tilde m_{j-1}$, which is denoted by $\left\{ Q_{B_2}^{(m_j' | \tilde m_{j-1})} \right\}_{m_j' \in M_j}$. The relay applies this on their received state to obtain a measurement result $\tilde m_j$. Note that this is the relay's estimate for message $m_j$.
\\~\\
\noindent \textbf{Decoding}: %On the first transmission, the receiver will do nothing and throws away the system they obtain.Otherwise,
On the $j$\textsuperscript{th} transmission, we again invoke~\cref{lem:packing} and let the receiver use the POVM corresponding to the ancestral subgraph containing just the vertex $(X_2)_{j}$, the set of quantum states $\left\{ \rho_{B_3}^{(x_2)} \right\}_{x_2 \in \mathcal{X}_2}$, decoding subset $\{ j-1 \} \subseteq \left\{ j-1 \right\}$, and small parameter $\eps$. Note that we don't have a message guess here since the decoding subset is not a proper subset. In this case we suppress the conditioning for conciseness. We denote the POVM by $\left\{ Q_{B_3}^{(m_{j-1}')} \right\}_{m_{j-1}' \in M_{j-1}}$, and the receiver applies this on their received state to obtain a measurement result $\hat m_{j-1}$. Note that this is the receiver's estimate of the $(j-1)$\textsuperscript{th} message. $\hat m_0$ is trivially be the sole element of $M_0$.
\\~\\
\noindent \textbf{Error analysis}:
Set $m_0$ to be the sole element of $M_0$. Fix ${\bf m} \equiv (m_0, \dots, m_{b-1})$. Note that $m_b$ is never decoded by the receiver since it is the message sent in the last block and thus, we can ignore it without loss of generality. Let $\tilde {\bf m} \equiv (\tilde m_0, \dots, \tilde m_{b-1}), \hat {\bf m} \equiv (\hat m_0, \dots, \hat m_{b-1})$ denote the aggregation of the message estimates of the relay and receiver, respectively.%
\footnote{Note that $m_0 = \tilde m_0 = \hat m_0$ and refer to the same sole element of $M_0$.}
The probability of error averaged over the random codebook $C$ is given by
\begin{align*}
  p_e(C) = \EE_C \left[ p(\hat {\bf m} \neq {\bf m}) \right],
\end{align*}
where $p$ here denotes the probability for a fixed codebook. Now, by~\cref{eq:easybound},
\begin{align}
  \label{eq:multihopUnionBound}
  p_e(C) \le \EE_C \left[ p(\tilde {\bf m} \neq {\bf m}) \right] + \EE_C \left[ p(\hat {\bf m} \neq {\bf m} | \tilde {\bf m} = {\bf m}) \right].
\end{align}
We consider the first term corresponding to the relay decoding. By the union bound,
\begin{align*}
  \EE_C \left[ p(\tilde {\bf m} \neq {\bf m})  \right] \le \, & \EE_C \left[ p(\tilde m_0 \neq m_0) \right] \\
  & + \sum_{j=1}^{b-1} \EE_C \left[ p(\tilde m_j \neq m_j | \tilde m_{j-1} = m_{j-1}) \right].
\end{align*}
By the definition of $\tilde m_0$, the first term is zero. Now, we can apply~\cref{eq:packing} to bound each summand in the second term as follows:%
\footnote{The careful reader would notice that the conditioning on $\tilde m_{j-1} = m_{j-1}$ is not necessary here since the probability of decoding $m_j$ correctly at the relay is independent of whether $m_{j-1}$ was decoded successfully. However, this will be necessary for the other schemes we give.}
\begin{align*}
  & \EE_C \left[ p(\tilde m_j \neq m_j | \tilde m_{j-1} = m_{j-1}) \right] \\
  & = \EE_{C} [ \tr[ (I - Q^{(m_j |m_{j-1})}_{B_2}) \rho^{((x_1)_j(m_j) (x_2)_j(m_{j-1}))}_{B_2} ] ] \\
  & = \EE_{C_{(X_1)_j (X_2)_j}} [ \tr[ (I - Q^{(m_j |m_{j-1})}_{B_2}) \rho^{((x_1)_j(m_j) (x_2)_j(m_{j-1}))}_{B_2} ] ] \\
  & \le f(2, \varepsilon) +  4 \sum_{T = \left\{ j \right\} }^{} 2^{R -D_H^\varepsilon(\rho_{(X_1)_j (X_2)_j B_2} \Vert \rho_{(X_1)_j (X_2)_j B_2}^{(\{X_{S_T}, B_2 \})})}, \\
  & = f(2, \varepsilon) + 4 \times 2^{R -D_H^\eps(\rho_{X_1 X_2 B_2} \Vert \rho_{X_1 X_2 B_2}^{(\left\{  X_1, B_2\right\})})},
\end{align*}
where $C_{(X_1)_j (X_2)_j}$ is the corresponding subset of the codebook $C$, and we used $S_{ \left\{ j \right\}} = \left\{ (X_1)_j \right\}$. We dropped the index $j$ in the last equality since $(X_1)_j (X_2)_j \sim p_{X_1} \times p_{X_2}$. Hence, overall,
\begin{align*}
  & \EE_C \left[ p(\tilde {\bf m} \neq {\bf m}) \right] \\
  & \le b \left[  f(2,\varepsilon) + 4 \times 2^{R-D_H^\eps(\rho_{X_1 X_2 B_2} \Vert \rho_{X_1 X_2 B_2}^{(\left\{  X_1, B_2 \right\})})}\right].
\end{align*}

We now consider the second term in~\cref{eq:multihopUnionBound}, corresponding to the receiver decoding. By the union bound,
\begin{align*}
  \EE_C \left[ p(\hat {\bf m} \neq {\bf m} | \tilde {\bf m} = {\bf m}) \right] \le  & \, \EE_C \left[ p(\hat m_0 \neq m_0 | \tilde{{\bf m}} = {\bf m}) \right] \\
  & +  \sum_{j=1}^{b-1} \EE_C \left[p(\hat m_j \neq m_j | \tilde {\bf m} = {\bf m} )\right].
\end{align*}
Again by definition, the first term vanishes. Now, the receiver on the $(j+1)$\textsuperscript{th} transmission obtains the state $\rho_{B_3}^{((x_1)_{j+1}(m_{j+1}) (x_2)_{j+1}(\tilde m_{j}))}$. Averaging over $(x_1)_{j+1}(m_{j+1})$, this becomes $\rho_{B_3}^{((x_2)_{j+1}(\tilde m_{j}))}$. Hence, the summands in second term are also bounded via~\cref{eq:packing}:
\newpage
\begin{align*}
  & \EE_C \left[ p(\hat m_j \neq m_j | \tilde {\bf m}= {\bf m}) \right] \\
  & = \EE_{C} \left[ \tr\left[  (I - Q^{(m_j)}_{B_3}) \rho^{((x_1)_{j+1}(m_{j+1}) (x_2)_{j+1}( m_j))}_{B_3}\right] \right] \\
  & = \EE_{C_{(X_1)_{j+1} (X_2)_{j+1}}} \Big[ \tr \Big[  (I - Q^{(m_j)}_{B_3}) \\
  & \quad \, \, \rho^{((x_1)_{j+1}(m_{j+1}) (x_2)_{j+1}( m_j))}_{B_3}\Big] \Big] \\
  & = \EE_{C_{(X_2)_{j+1}}} \left[ \tr\left[  (I - Q^{(m_j)}_{B_3}) \rho^{( (x_2)_{j+1}(m_j))}_{B_3}\right] \right] \\
  & \le f(1, \varepsilon) + 4 \sum_{T = \left\{ j \right\}}^{} 2^{R -D_H^\varepsilon(\rho_{(X_2)_{j+1} B_3} \Vert \rho_{(X_2)_{j+1} B_3}^{(\{X_{S_T}, B_3 \})})}, \\
  & \le f(1, \varepsilon) + 4 \times 2^{R -D_H^\eps(\rho_{X_2 B_3} \Vert \rho_{X_2 B_3}^{(\left\{ X_2, B_3 \right\})})},
\end{align*}
where we used $S_{ \left\{ j \right\}} = \left\{ (X_2)_{j+1} \right\}$ and again dropped indices in the last inequality. Hence, overall
\begin{align*}
  & \EE_C \left[ p(\hat {\bf m} \neq {\bf m} | \tilde {\bf m} = {\bf m}) \right] \\
  & \le b\left[  f(1, \varepsilon) + 4 \times 2^{R -D_H^\eps(\rho_{X_2 B_3} \Vert \rho_{X_2 B_3}^{(\left\{ X_2, B_3 \right\})})}\right].
\end{align*}
Note that since $X_1, X_2$ are independent, $\rho_{X_2 B_3}^{(\{X_2, B_3\})} = \rho_{X_2} \ot \rho_{B_3}$. We have therefore established the following:
\begin{prp}[Multihop]
  Given $R \in \R_{\ge 0}, \, \varepsilon \in (0,1), \, b \in \N$, the triple $(\frac{b-1}{b} R,b, \delta)$, is achievable for the classical-quantum relay channel, where%
\footnote{Note that we need $R, b, \varepsilon$ to be sufficiently small so that $\delta \in [0,1]$. Otherwise, a block Markov scheme can be employed to obtain a meaningful error bound.}
\begin{align*}
  %\label{eq:multihop}
  \delta=  b\big[ & f(1, \varepsilon) + f(2, \varepsilon) +  4 \times 2^{R -D_H^\eps(\rho_{X_2 B_3} \Vert \rho_{X_2 B_3}^{(\left\{ X_2, B_3 \right\})})} \\
  & + 4 \times 2^{R-D_H^\eps(\rho_{X_1 X_2 B_2} \Vert \rho_{X_1 X_2 B_2}^{(\left\{  X_1, B_2 \right\})})}\big].
\end{align*}
\end{prp}
In the asymptotic limit we use the channel $n/b$ times in each of the $b$ blocks. The protocol is analogous to one-shot protocol, except the relay channel has a tensor product form $ \mathcal{N}^{\otimes (n/b)}_{X_1 X_2 \to B_2 B_3}$ characterized by a family of quantum states $\rho^{(x_1^{(n/b)} x_2^{(n/b)})}_{B_2^{(n/b)} B_3^{(n/b)}}$. The codebook is $C^{(n/b)}$ and for finite $b$ and large $n$ we invoke~\cref{lem:packingAsympt} (instead of~\cref{lem:packing}) to construct POVM's for the relay and the receiver such that the decoding error vanishes if the rate satisfies $R <\min\left\{ I(X_1 ; B_2 | X_2)_\rho , I(X_2; B_3)_\rho \right\}$, thereby obtaining the quantum equivalent of the classical multihop bound for sufficiently large $n,b$:%
\footnote{Note that our rate is $\frac{b-1}{b} R$. To achieve rate $R$ we need $\frac{b-1}{b} \to 1$, and so we take the large $n$ limit \emph{followed by} the large $b$ limit.}
\begin{align}
  C \ge \max_{p_{X_1} p_{X_2}} \min\left\{ I(X_1 ; B_2 | X_2)_\rho , I(X_2; B_3)_\rho \right\}. \label{eq:multihop-rate}
\end{align}

%-----------------------------------------------------------------------------
\subsection{Coherent Multihop Scheme}\label{subsec:coherentmultihop}
%-----------------------------------------------------------------------------
In the multihop scheme, we obtained a rate optimized over product distributions, specifically~\cref{eq:multihop-rate}. For the coherent multihop scheme we obtain the same rate except optimized over all possible two-variable distributions $p_{X_1 X_2}$ by conditioning codewords on each other.

Again, let $R \geq 0$ be our rate, $\eps \in (0,1)$, and total blocklength $b \in \N$. We show that we can achieve the triple $(\frac{b-1}{b} R, b, \delta)$ for some $\delta$ a function of $R, b, \varepsilon$. Let $p_{X_1 X_2}$ be probability distributions over $\mathcal{X}_1 \times \mathcal{X}_2$. Throughout, we use
\begin{align*}
 & \rho_{X_1 X_2 B_2 B_3} \\
 & \equiv \sum_{x_1,x_2} p_{X_1 X_2}(x_1, x_2)  \ket{x_1x_2}\bra{x_1x_2}_{X_1 X_2} \ot \rho_{B_2 B_3}^{(x_1x_2)}. 
\end{align*}
We also again define $\rho_{B_3}^{(x_2)} \equiv \sum_{x_1}^{} p_{X_1|X_2}(x_1\vert x_2) \rho_{B_3}^{(x_1 x_2)}$ to be the reduced state on $B_3$ by tracing out $X_1 B_2$ and fixing $X_2$. Our coding scheme is similar to that of the multihop.
\\~\\
\noindent \textbf{Code}: Let $G$ be a graph with $2b$ vertices corresponding to random variables $(X_1)_j (X_2)_j \sim p_{X_1 X_2}$, independent of other pairs, with edges from $(X_1)_j$ to $(X_2)_j$. Furthermore, let $M_0, M_j$ be index sets, where $\abs{M_0} = 1$ and $\abs{M_j} = 2^{R}$. Finally, the function $\ind$ maps $(X_1)_j$ to $\{j\}$ and $(X_2)_j$ to $\{j-1\}$. Then, letting $X \equiv X_1^b X_2^b$ and $M \equiv \bigtimes_{j=0}^b M_j$, it is easy to see that $\mathcal{B} \equiv (G, X, M, \ind)$ is a multiplex Bayesian network. See~\cref{fig:coherentmultihop} for a visualization when $b=3$.
%We also define a subnetwork $\mathcal{B}_j $, where graph has two vertices with random variable $X = (X_1)_j  (X_2)_j$ and message set $M_{j-1} \times M_j$, keeping the label links inhered from the network $\mathcal{B}$. We will also define $\mathcal{B}'_{j+1}$ to be a single vertex graph with random variable $(X_2)_{j+1}$ and message set $M_j$. See Fig.~\ref{fig:multihop-single} visualizing those subnetworks.
\begin{figure}[t]
\begin{center}
  \includegraphics[scale=0.45]{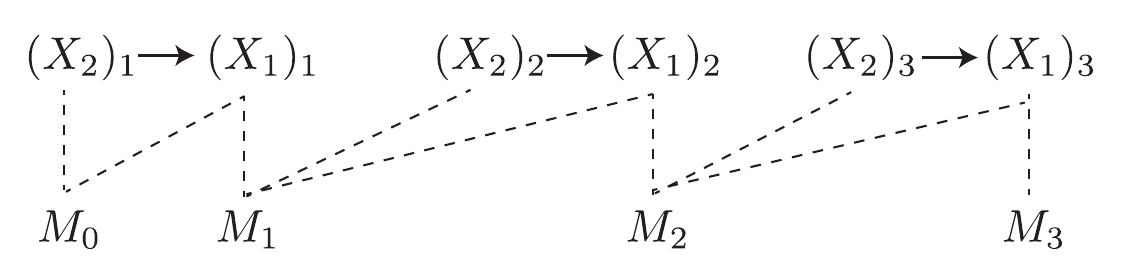}
\end{center}
\caption{Bayesian multiplex network $\mathcal{B}$ that generates the codebook $C$ for the coherent multihop scheme with $b=3$ blocks. }
\label{fig:coherentmultihop}
\end{figure}
Now, run \algref with $\mathcal{B}$ as the argument to obtain a random codebook $C$ given by
\begin{align*}
  \bigcup_{j=1}^b \left\{ (x_1)_j(m_{j-1}, m_{j}), (x_2)_j(m_{j-1})\right\}_{ m_j \in M_j, m_{j-1} \in M_{j-1}},
\end{align*}
where we restricted to the message indices the codewords depend on via $\ind$. For decoding we apply~\cref{lem:packing} with this codebook and use the assortment of POVMs that are given for different ancestral subgraphs and other parameters. \\
\\
\noindent \textbf{Encoding}: Set $m_0$ to be the sole element of $M_0$. On the $j$\textsuperscript{th} transmission, the sender transmits a message $m_j \in M_j$ via $(x_1)_j(m_{j-1},m_j) \in C$.
\\~\\
\noindent \textbf{Relay encoding}: Same as multihop.
%Set $\tilde m_0$ to be the sole element of $M_0$. On the $j$-th transmission, the relay sends their estimate $\tilde m_{j-1}$ via $(x_2)_j(\tilde m_{j-1}) \in C$. Note that this is the relay's estimate of the message $m_{j-1}$ sent by the sender on the $(j-1)$-th transmission.
\\~\\
\noindent \textbf{Relay decoding}: Same as multihop.%
\footnote{Note, however, that the POVM the relay uses from~\cref{lem:packing} is \emph{not} be the same as that of the multihop case since the multiplex Bayesian networks are not the same.}
%Consider the $j$-th transmission. The relay will use the POVM corresponding to the ancestral subgraph containing the two vertices $(X_1)_j$ and $(X_2)_j$, the set of quantum states $\left\{ \rho_{B_2}^{(x_1 x_2 )} \right\}_{x_1 \in \mathcal{X}_1, x_2 \in \mathcal{X}_2}$, decoding subset $\left\{  j\right\} \subseteq \left\{ j-1, j \right\}$, small parameter $\eps \in (0,1)$, and message guess $\tilde m_{j-1} \in M_{j-1}$ (the estimate from the previous round of the message $m_{j-1}$ sent by the sender). The POVM is denoted by $\left\{ Q_{B_2}^{(m_j' | \tilde m_{j-1})} \right\}_{m_j' \in M_j}$, and the relay applies this on their received state to obtain a measurement result $\tilde m_j$. Note that this is the relay's estimate for message $m_j$.
\\~\\
\noindent \textbf{Decoding}: Same as multihop.
%On the first transmission, the receiver does nothing and throws away the system they obtain. Otherwise, on the $j$-th transmission the receiver will use the POVM corresponding to the ancestral subgraph containing just the vertex $(X_2)_{j}$, the set of quantum states $\left\{ \rho_{B_3}^{(x_2)} \right\}_{x_2 \in \mathcal{X}_2}$, decoding subset $\{ j-1 \} \subseteq \left\{ j-1 \right\}$, and small parameter $\eps$. Note that we don't have a message guess here since the decoding subset is not a proper subset. We denote the POVM by $\left\{ Q_{B_3}^{(m_{j-1}')} \right\}_{m_{j-1}' \in M_{j-1}}$, and the receiver applies this on their received state on the $(j-1)$-th round to obtain a measurement result $\hat m_{j-1}$. Note that this is the receiver's estimate of the $(j-1)$-th message. Set $\hat m_0$ to be the sole element of $M_0$.
\\~\\
\noindent \textbf{Error analysis}:
With an analysis essentially identical to that of the multihop protocol we arrive at the following.
\begin{prp}[Coherent Multihop]
  Given $R \in \R_{\ge 0}, \, \varepsilon \in (0,1), \, b \in \N$, the triple $(\frac{b-1}{b} R,b, \delta)$ is achievable for the classical-quantum relay channel, where
\begin{align*}
  %\label{eq:cMultihop}
  \delta = b\big[ & f(1,\varepsilon) + f(2,\varepsilon) +  4 \times 2^{R -D_H^\eps(\rho_{X_2 B_3} \Vert \rho_{X_2 B_3}^{\left\{ X_2, B_3 \right\}})} \\
  & + 4 \times 2^{R-D_H^\eps(\rho_{X_1 X_2 B_2} \Vert \rho_{X_1 X_2 B_2}^{(\left\{  X_1, B_2 \right\})})}\big].
\end{align*}
\end{prp}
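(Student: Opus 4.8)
The plan is to rerun the error analysis of \cref{subsec:multihop} essentially line by line, changing only what the modified codebook forces. I would fix a message tuple $\mathbf{m}=(m_0,\dots,m_{b-1})$ with $m_0$ the sole element of $M_0$; since $m_b$ is transmitted only in the last block and is never decoded by the receiver, the $b-1$ genuine rate-$R$ messages carried over $b$ blocks account for the achieved rate $\tfrac{b-1}{b}R$. Writing $p_e(C)=\EE_C[p(\hat{\mathbf{m}}\neq\mathbf{m})]$ and applying \cref{eq:easybound} with the event $E_2=\{\tilde{\mathbf{m}}=\mathbf{m}\}$, I obtain
\[
  p_e(C)\le \EE_C[p(\tilde{\mathbf{m}}\neq\mathbf{m})] + \EE_C[p(\hat{\mathbf{m}}\neq\mathbf{m}\mid \tilde{\mathbf{m}}=\mathbf{m})],
\]
so it suffices to bound a relay-decoding term and a receiver-decoding term conditioned on all relay estimates being correct.

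For the relay term, a union bound over the $b$ blocks (the $j=0$ summand vanishing by the definition of $\tilde m_0$) reduces matters to bounding $\EE_C[p(\tilde m_j\neq m_j\mid \tilde m_{j-1}=m_{j-1})]$ for each $j$. Here the conditioning on $\tilde m_{j-1}=m_{j-1}$ is genuinely necessary, unlike in the multihop case, since the sender's codeword $(x_1)_j$ now depends on $m_{j-1}$ and so \cref{eq:packing} applies only to the POVM $\{Q_{B_2}^{(m_j'\mid m_{j-1})}\}$ that the relay actually uses when its previous estimate is correct. Invoking \cref{eq:packing} with the ancestral subgraph $\{(X_1)_j,(X_2)_j\}$, quantum states $\{\rho_{B_2}^{(x_1x_2)}\}$, decoding subset $D=\{j\}$, and computing $S_{\{j\}}=\{(X_1)_j\}$, I get the per-block bound $f(2,\varepsilon)+4\cdot 2^{R-D_H^\varepsilon(\rho_{X_1X_2B_2}\Vert\rho_{X_1X_2B_2}^{(\{X_1,B_2\})})}$, with the index $j$ dropping out because every pair $(X_1)_j(X_2)_j$ is distributed as $p_{X_1X_2}$.

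For the receiver term, another union bound over blocks reduces matters to $\EE_C[p(\hat m_j\neq m_j\mid \tilde{\mathbf{m}}=\mathbf{m})]$ (again the $j=0$ term vanishes). On block $j+1$ the receiver holds $\rho_{B_3}^{((x_1)_{j+1}(m_j,m_{j+1})(x_2)_{j+1}(m_j))}$, and since its POVM depends only on the subcodebook $\{(x_2)_{j+1}(\cdot)\}$, I can average out the sender's codeword---now drawn from $p_{X_1\mid X_2}$ conditioned on the relay's codeword rather than independently---replacing that state by $\rho_{B_3}^{((x_2)_{j+1}(m_j))}$ with $\rho_{B_3}^{(x_2)}$ as defined in this subsection. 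Invoking \cref{eq:packing} with the one-vertex ancestral subgraph $\{(X_2)_{j+1}\}$, states $\{\rho_{B_3}^{(x_2)}\}$, $D=\{j\}$, and $S_{\{j\}}=\{(X_2)_{j+1}\}$, I get $f(1,\varepsilon)+4\cdot 2^{R-D_H^\varepsilon(\rho_{X_2B_3}\Vert\rho_{X_2B_3}^{(\{X_2,B_3\})})}$ per block. Summing the relay and receiver contributions over all $b$ blocks yields the claimed $\delta$; passing to \cref{lem:packingAsympt} and \cref{eq:CMI} then gives the asymptotic lower bound $\max_{p_{X_1X_2}}\min\{I(X_1;B_2\mid X_2)_\rho,\,I(X_2;B_3)_\rho\}$ exactly as in \cref{subsec:multihop}.

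All of this is routine once the multihop analysis is available; the one place the argument genuinely differs---and hence the main thing I would have to get right---is the bookkeeping of the modified multiplex Bayesian network. Concretely, I would check that conditioning the sender's codeword on the relay's (the edge $(X_2)_j\to(X_1)_j$, which forces $j-1\in\ind((X_1)_j)$) is compatible with \cref{eq:ind condition}, that $\{(X_2)_{j+1}\}$ is still an ancestral subgraph so the receiver's application of \cref{eq:packing} is legitimate, and that the sets $S_T$ together with the conditional-marginal states $\rho^{(\{X_S,B\})}$ that enter \cref{eq:packing} do evaluate (via \cref{eq:CMI}) to $I(X_1;B_2\mid X_2)$ and $I(X_2;B_3)$. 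Everything else transfers verbatim from \cref{subsec:multihop}.
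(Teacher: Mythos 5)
Your proposal is correct and follows essentially the same route as the paper, which itself proves this proposition by noting that the error analysis is identical to the multihop case: the same \cref{eq:easybound} split, union bounds over blocks, and per-block applications of \cref{eq:packing} with subgraphs $\{(X_1)_j,(X_2)_j\}$ (relay, $S_{\{j\}}=\{(X_1)_j\}$) and $\{(X_2)_{j+1}\}$ (receiver, $S_{\{j\}}=\{(X_2)_{j+1}\}$), now with the joint distribution $p_{X_1X_2}$. Your bookkeeping of the multiplex Bayesian network is also the right reading: the sender's codeword must be generated conditionally on the relay's, i.e.\ $(X_2)_j\in\pa((X_1)_j)$ with $j-1\in\ind((X_1)_j)$, which is what makes \cref{eq:ind condition} hold, keeps $\{(X_2)_{j+1}\}$ ancestral, and matches the codebook $(x_1)_j(m_{j-1},m_j)$ actually used in the paper.
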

Asymptotically, this vanishes if 
\[ R< \min\left\{ I(X_1 ; B_2 | X_2)_\rho , I(X_2; B_3)_\rho \right\},\]
thereby obtaining the quantum equivalent of the coherent multihop bound for sufficiently large $b$:
\begin{align*}
  C \ge \max_{p_{X_1 X_2}} \min\left\{ I(X_1 ; B_2 | X_2)_\rho , I(X_2; B_3)_\rho \right\}.
\end{align*}

%-----------------------------------------------------------------------------
\subsection{Decode-Forward Scheme}\label{subsec:decodeforward}
%-----------------------------------------------------------------------------
In the decode-forward protocol we make an incremental improvement on the coherent multihop protocol by letting the receiver's decoding also involve $X_1$.

Again, let $R \geq 0$ be our rate, $\eps \in (0,1)$, and total number of blocks $b \in \N$. The classical-quantum state $\rho_{X_1 X_2 B_2 B_3}$ is identical to that of the coherent multihop scenario.
\\~\\
\noindent \textbf{Code}: The codebook is generated in the same way as in the coherent multihop protocol save with the index set $M_b$ having cardinality 1 to take into account boundary effects for the backward decoding protocol%
\footnote{In~\cite{elgamal2011network} multiple decoding protocols are given. We here give the quantum generalization of the backward decoding protocol.}
we implement.
%Note that this won't affect the rate since in the coherent multihop protocol the last message is never decoded.
\\~\\
\noindent \textbf{Encoding}: Set $m_0$ to be the sole element of $M_0$. On the $j$\textsuperscript{th} transmission, the sender transmits the message $m_j \in M_j$ via $(x_1)_j (m_{j-1}, m_j ) \in C $. Note that there is only one message $m_b \in M_b$ they can choose on the $b$\textsuperscript{th} round.
\\~\\
\noindent \textbf{Relay encoding}: Same as that of coherent multihop.
\\~\\
\noindent \textbf{Relay decoding}: Same as that of coherent multihop. However, note that on $b$\textsuperscript{th} round, since $\abs{M_b}=1$, the decoding is trivial and the estimate $\tilde m_b$ is the sole element of $M_b$.
\\~\\
\noindent \textbf{Decoding}: The receiver waits until all $b$ transmissions are finished. Then, they implement a backward decoding protocol, that is, starting with the last system they obtain. Set $\hat m_b$ to be the sole element of $M_b$. On the $j$\textsuperscript{th} system they use the POVM corresponding to the ancestral subgraph containing vertices $(X_1)_{j}$ and $(X_2)_{j}$, the set of quantum states $\left\{ \rho_{B_3}^{(x_1 x_2)} \right\}_{x_1 \in \mathcal{X}_1, x_2 \in \mathcal{X}_2}$, decoding subset $\{ j-1 \} \subseteq \left\{ j-1, j \right\}$, and small parameter $\eps$. We denote the POVM by $\left\{ Q_{B_3}^{(m_{j-1}' | \hat m_{j})} \right\}_{m_{j-1}' \in M_{j-1}}$, where we use the estimate $\hat m_{j}$, and the obtained measurement result $\hat m_{j-1}$. Note that trivially $\hat m_0$ is the sole element of $M_0$.
\\~\\
\noindent \textbf{Error analysis}: Fix some ${\bf m} = (m_0, \dots, m_{b}) \in M$. Let $\tilde {\bf m} = (\tilde m_0, \dots, \tilde m_{b}), \hat {\bf m} = (\hat m_0, \dots, \hat m_{b})$ denote the aggregation of the messages estimates of the relay and receiver, respectively. Then, the probability of error averaged over $C$ is given by
\begin{align*}
  p_e(C) = \EE_C \left[ p(\hat {\bf m} \neq {\bf m}) \right].
\end{align*}
Again, by the bound in~\cref{eq:easybound},
\begin{align*}
  p_e(C) \le \EE_C \left[ p(\tilde {\bf m} \neq {\bf m}) \right] + \EE_C \left[ p(\hat {\bf m} \neq {\bf m} | \tilde {\bf m} = {\bf m}) \right].
\end{align*}
The bound on the first term is identical to that of the coherent multihop protocol and is given by
\begin{align*}
  & \EE_C \left[ p(\tilde {\bf m} \neq {\bf m}) \right] \\
  & \le b \left[  f(2,\varepsilon) + 4 \times 2^{R-D_H^\eps(\rho_{X_1 X_2 B_2} \Vert \rho_{X_1 X_2 B_2}^{(\left\{  X_1, B_2 \right\})})}\right].
\end{align*}
For the second term, we first apply the union bound:
\begin{align*}
  & \EE_C \left[ p(\hat {\bf m} \neq {\bf m} | \tilde {\bf m} = {\bf m}) \right] \\
  & \le  \EE_C \left[ \sum_{j=1}^{b-1} p(\hat m_j \neq m_j | \hat m_{j+1} = m_{j+1}\land \tilde {\bf m}= {\bf m} )\right],
\end{align*}
where we take into account that the terms corresponding to $0$ and $b$ vanish by definition. Each of the summands can be bounded via~\cref{lem:packing}:
\begin{align*}
  & \EE_C \left[ p(\hat m_j \neq m_j | \hat m_{j+1} = m_{j+1} \land \tilde {\bf m}= {\bf m}) \right] \\
  & = \EE_{C} \left[ \tr\left[  (I - Q^{(m_j | m_{j+1})}_{B_3}) \rho^{((x_1)_{j+1}(m_{j+1} |m_j) (x_2)_{j+1}(m_j))}_{B_3}\right] \right] \\
  & = \EE_{C_{(X_1)_{j+1} (X_2)_{j+1}}} \Big[ \tr\Big[  (I - Q^{(m_j | m_{j+1})}_{B_3}) \\
  & \quad \, \, \rho^{((x_1)_{j+1}(m_{j+1} |m_j) (x_2)_{j+1}(m_j))}_{B_3}\Big] \Big] \\
  & \le f(2,\varepsilon)  \\
  & \quad + 4 \sum_{T =\{j \}} 2^{R -D_H^\eps(\rho_{(X_1)_{j+1} (X_2)_{j+1} B_3} \Vert \rho_{(X_1)_{j+1} (X_2)_{j+1} B_3}^{(\{X_{S_T}, B_3\})})} \\
  & \le f(2,\varepsilon) + 4 \times 2^{R -D_H^\eps(\rho_{X_1 X_2 B_3} \Vert \rho_{X_1 X_2 B_3}^{(\left\{ X_1 X_2, B_3 \right\})})},
\end{align*}
where we use that $S_{\{ j \}}=\{ (X_1)_{j+1} (X_2)_{j+1}\}$. Hence, we conclude that
\begin{align*}
  & \EE_C \left[ p(\hat {\bf m} \neq {\bf m} | \tilde {\bf m} = {\bf m}) \right] \\
  & \le b\left[  f(2,\varepsilon) + 4 \times 2^{R -D_H^\eps(\rho_{X_1 X_2 B_3} \Vert \rho_{X_1 X_2 B_3}^{(\left\{X_1 X_2, B_3 \right\})})}\right].
\end{align*}
We conclude the following.
\begin{prp}[Decode-Forward]
  Given $R \in \R_{\ge 0}, \, \varepsilon \in (0,1), \, b \in \N$, the triple $(\frac{b-1}{b} R, b , \delta)$ is achievable for the classical-quantum relay channel where
\begin{align*}
  %\label{eq:decodeForward}
  \delta =  b\Big[ 2f(2,\varepsilon) +  4 \times \Big(& 2^{R -D_H^\eps(\rho_{X_1 X_2 B_3} \Vert \rho_{X_1 X_2 B_3}^{(\left\{ X_1 X_2, B_3 \right\})})} \\
  & + 2^{R-D_H^\eps(\rho_{X_1 X_2 B_2} \Vert \rho_{X_1 X_2 B_2}^{(\left\{  X_1, B_2 \right\})})} \Big) \Big].
\end{align*}
\end{prp}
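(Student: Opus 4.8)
The plan is to run the error analysis of the block-Markov code just described in the same way as for the coherent multihop scheme, the only genuinely new ingredient being the receiver's \emph{backward} decoder. First I would invoke \cref{eq:easybound} to split $p_e(C)\le \EE_C[p(\tilde{\bf m}\neq{\bf m})]+\EE_C[p(\hat{\bf m}\neq{\bf m}\mid\tilde{\bf m}={\bf m})]$, so that the receiver's decoding may be analyzed under the hypothesis that every relay estimate is correct; this also makes legitimate the subsequent applications of \cref{lem:packing}, which only controls the error when the message components treated as ``known'' coincide with the ones actually transmitted.

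For the relay term, nothing changes relative to coherent multihop: the relay performs the same operation on each block, so a union bound over the $b$ blocks together with \cref{eq:packing} applied to the ancestral subgraph on $\{(X_1)_j,(X_2)_j\}$, the states $\{\rho_{B_2}^{(x_1x_2)}\}$ and decoding subset $\{j\}$ (for which $S_{\{j\}}=\{(X_1)_j\}$) reproduces the bound $b\bigl[f_2(\eps)+4\cdot 2^{R-D_H^\eps(\rho_{X_1X_2B_2}\Vert\rho_{X_1X_2B_2}^{(\{X_1,B_2\})})}\bigr]$, exactly as before.

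The receiver term is where decode-forward improves on coherent multihop. Because $(x_1)_{j+1}(m_j,m_{j+1})$ is generated on top of $(x_2)_{j+1}(m_j)$, the receiver gets two ``looks'' at $m_j$ from block $j+1$, but exploiting the $(x_1)_{j+1}$ look requires already knowing the enclosing index $m_{j+1}$, which forces backward decoding. Since $M_b$ is a singleton, $\hat m_b$ is fixed, and I would decode the blocks in reverse order $j=b-1,\dots,1$, at each step applying \cref{lem:packing} to block $j+1$ with ancestral subgraph $\{(X_1)_{j+1},(X_2)_{j+1}\}$, states $\{\rho_{B_3}^{(x_1x_2)}\}$ and decoding subset $\{j\}$, conditioned on the events $\hat m_{j+1}=m_{j+1}$ (already secured by the backward induction) and $\tilde{\bf m}={\bf m}$ (so the relay truly transmitted $(x_2)_{j+1}(m_j)$). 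Here both codeword components depend on $m_j$, so $S_{\{j\}}=\{(X_1)_{j+1},(X_2)_{j+1}\}$ and \cref{eq:packing} contributes $f_2(\eps)+4\cdot 2^{R-D_H^\eps(\rho_{X_1X_2B_3}\Vert\rho_{X_1X_2B_3}^{(\{X_1X_2,B_3\})})}$ per block; a union bound over the $b$ blocks gives the second summand of $\delta$. Adding the two contributions and derandomizing yields a deterministic code achieving $(\tfrac{b-1}{b}R,b,\delta)$, the factor $\tfrac{b-1}{b}$ coming from the two trivial boundary message sets $M_0,M_b$.

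The only real obstacle is bookkeeping: one must order the union bounds so that \cref{lem:packing} is applied at each step only after the message components it treats as known — the relay's estimates $\tilde{\bf m}$ and the later estimate $\hat m_{j+1}$ — have been conditioned to be correct, and one must restrict the codebook average at step $j$ to the relevant sub-codebook $C_{(X_1)_{j+1}(X_2)_{j+1}}$, precisely as in the multihop and coherent multihop analyses. With the conditioning organized this way, each summand is an immediate instance of \cref{eq:packing}, and no estimates beyond those already used for the earlier schemes are needed.
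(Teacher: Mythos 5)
Your proposal matches the paper's proof essentially step for step: the same split via \cref{eq:easybound}, the same relay analysis carried over from coherent multihop, and the same backward decoding where \cref{lem:packing} is applied to block $j+1$ with ancestral subgraph $\{(X_1)_{j+1},(X_2)_{j+1}\}$, decoding subset $\{j\}$, $S_{\{j\}}=\{(X_1)_{j+1},(X_2)_{j+1}\}$, conditioned on $\hat m_{j+1}=m_{j+1}$ and $\tilde{\bf m}={\bf m}$, followed by union bounds over the $b$ blocks. It is correct and requires no further comment.
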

Asymptotically, this vanishes if 
\[ R  < \min\left\{ I(X_1 ; B_2 | X_2)_\rho , I(X_1 X_2; B_3)_\rho \right\},\]
thereby obtaining the decode-forward lower bound for sufficiently large $b$:
\begin{align*}
  C \ge \max_{p_{X_1 X_2}} \min\left\{ I(X_1 ; B_2 | X_2)_\rho , I(X_1 X_2; B_3)_\rho \right\}.
\end{align*}

%-----------------------------------------------------------------------------
\subsection{Partial Decode-Forward Scheme}\label{partialdecodeforward}
%-----------------------------------------------------------------------------
We now derive the partial decode-forward lower bound. This requires the full power of~\cref{lem:packing} as the receiver decodes all the messages simultaneously by performing a joint measurement on all $b$ blocks. Intuitively, the partial decode-forward builds on the decode-forward by letting the relay only decode and pass on a part, what we call $P$, of the overall message.

We split the message into two parts $P$ and $Q$ with respective rates $R_p, R_q \geq 0$. Let $\eps \in (0, 1)$ and $b \in \N$ be the total blocklength. Choose some distribution $p_{X_1 X_2}$ but also a random variable $U$ correlated with $X_1 X_2$ so that the overall distribution is $p_{U X_1 X_2}$. The classical-quantum state of interest is
\begin{align}
  \label{eq:cqStatePartialDecode}
  \rho_{U X_1 X_2 B_2 B_3} \equiv \sum_{u, x_1,x_2} & p_{U X_1 X_2 } ( u, x_1, x_2)  \nonumber\\
  & \ket{u x_1x_2 }\bra{ u x_1x_2 }_{U X_1 X_2 } \ot \rho_{B_2 B_3}^{(x_1x_2 )}.
\end{align}
Note that $\rho_{B_2 B_3}^{(x_1 x_2)}$ does \emph{not} depend on $u$, but sometimes we will write $\rho_{B_2 B_3}^{(u x_1 x_2)} (= \rho_{B_2 B_3}^{(x_1 x_2)})$ to keep notation explicit. However, if we trace over $X_1$, we induce a $u$ dependence via the correlation between $U$ and $X_1 X_2$:
\begin{align*}
  \rho_{U X_2 B_2 B_3} = \sum_{u, x_2} p_{U X_2 } ( u, x_2)  \ket{u x_2 }\bra{ u x_2 }_{U X_2 } \ot \rho_{B_2 B_3}^{(u x_2 )},
\end{align*}
where
\begin{align*}
  \rho_{B_2 B_3}^{(u x_2)} \equiv \sum_{x_1}^{} p_{X_1 | U X_2} (x_1 | u, x_2) \rho_{B_2 B_3}^{(x_1 x_2)}.
\end{align*}
This state will be important for the relay decoding.
\\~\\
\noindent \textbf{Code}: Let $G$ be a graph with $3b$ vertices corresponding to random variables $(U)_j (X_1)_j (X_2)_j \sim p_{U X_1 X_2}$. The graph has edges going from $(X_2)_j$ to $(U)_j$ and $(U)_j$ to $(X_1)_j$ for all $j$ and no edges going across blocks with different $j$'s. Furthermore, let $P_0, P_j$ and $Q_j$ be index sets, so that $J = [0:b] \sqcup [b]$, where $\abs{P_0} = \abs{P_b}=\abs{Q_b} = 1$, $\abs{P_j} = 2^{R_p}$ and $\abs{Q_j} = 2^{R_q}$ otherwise. Finally, the function $\ind$ maps $(X_1)_j$ to%
\footnote{For convenience we denote the elements of $J$ by the index sets they correspond to.}
$\{P_j, Q_j, P_{j-1}\}$, $(U)_j$ to $\{P_j, P_{j-1}\}$, and $(X_2)_j$ to $\{P_{j-1}\}$. Then, letting $X \equiv U^b X_1^b X_2^b$, $M_p = \bigtimes_{j=0}^b P_j$, $M_q = \bigtimes_{j=1}^b Q_j$ and $M = M_p \times M_q$, it is easy to see that $\mathcal{B} \equiv (G, X, M, \ind)$ is a multiplex Bayesian network. See~\cref{fig:partialdecode} for a visualization when $b=3$.
\begin{figure}[t]
\begin{center}
\includegraphics[scale=0.325]{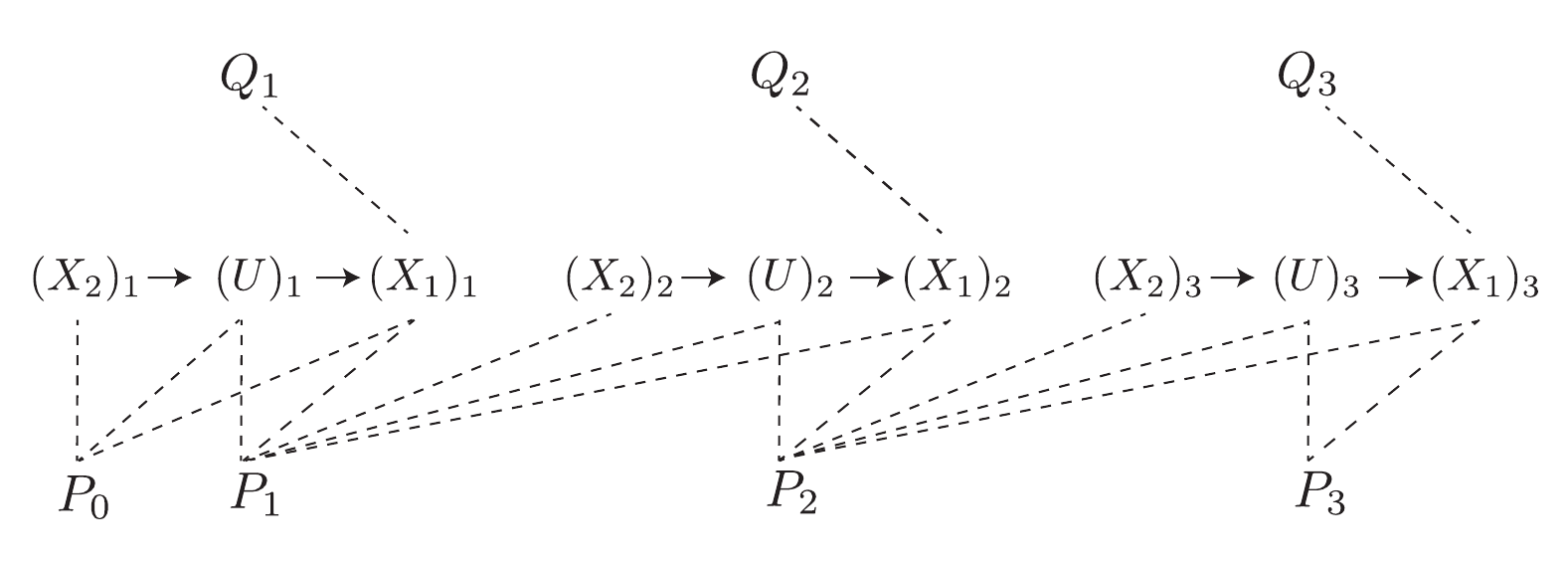}
\end{center}
\caption{Bayesian multiplex network $\mathcal{B}$ that generates the codebook $C$ for the partial decode-forward scheme with $b=3$ blocks.}\label{fig:partialdecode}
\end{figure}
Now, run \algref with $\mathcal{B}$ as the argument. This returns a random codebook
\begin{align*}
  C = \bigcup_{j=1}^b \{ & (x_1)_j(p_{j-1}, p_j, q_j ), (u)_j(p_{j-1}, p_j ), \\
  & (x_2)_j(p_{j-1})\}_{ p_j \in P_j, p_{j-1} \in P_{j-1}, q_j \in Q_{j}},
\end{align*}
where we restricted to the message indices the codewords are dependent on via $\ind$. For decoding we apply~\cref{lem:packing} with this codebook and use the assortment of POVMs that are given for different ancestral subgraphs and other parameters.
\\~\\
\noindent \textbf{Encoding}: Set $p_0$ to be the sole element of $P_0$. On the $j$\textsuperscript{th} transmission, the sender transmits the two-part message $(p_j,q_j) \in P_j \times Q_j$ via $(x_1)_j(p_{j-1}, p_j, q_j ) \in C$. Note that on the $b$\textsuperscript{th} transmission the sender has to send a fixed message $(p_b, q_b)$ being the sole element of $P_b \times Q_b$.
\\~\\
\noindent \textbf{Relay encoding}: Let $\tilde p_0$ to be the sole element of $P_0$. On the $j$\textsuperscript{th} transmission, the relay sends $\tilde p_{j-1}$ via $(x_2)_j (\tilde p_{j-1})$ from codebook $C$. Note that this is the relay's estimate of the message sent by the sender on the $(j-1)$\textsuperscript{th} transmission.
\\~\\
\noindent \textbf{Relay decoding}: The relay tries to recover the $p$-part of the sender's message using the same technique as in the previous protocols. On the $j$\textsuperscript{th} transmission the relay uses the POVM corresponding to the ancestral subgraph containing the two vertices $(U)_j$ and $(X_2)_j$, the set of quantum states $\left\{ \rho_{B_2}^{(u x_2 )} \right\}_{u \in \mathcal{U}, x_2 \in \mathcal{X}_2}$, decoding subset $\left\{  P_j\right\} \subseteq \left\{ P_{j-1}, P_j \right\}$, and small parameter $\eps$. The POVM is denoted by $\Big\{ Q_{B_2}^{(p_j' | \tilde p_{j-1})} \Big\}_{p_j' \in P_j}$, where we use the estimate $\tilde p_{j-1}$, and the relay applies this on their received state to obtain a measurement result $\tilde p_j$. Note that $\tilde p_b$ is trivially the sole element of $P_b$.
\\~\\
\noindent \textbf{Decoding}: The decoder waits until all $b$ transmissions are completed.
The receiver uses the POVM corresponding to the ancestral subgraph the entire graph $G$, the set of quantum states $\left\{ \bigotimes_{j=1}^b \rho_{B_3}^{((u)_j (x_1)_j (x_2)_j)} \right\}$, where the $(u)_j$ dependence here is trivial, decoding set $\bigtimes_{j=1}^{b-1} P_j \times \bigtimes_{j=1}^{b-1} Q_j$, and small parameter $\eps$. We denote the POVM by%
\footnote{Since the only index sets which are not included in the part to be decoded are all of cardinality 1, we omit the conditioning for conciseness.}
$\left\{Q^{(p'_1 p'_2 \cdots p'_{b-1},    q_1' q_2' \cdots q_{b-1}' )}_{B^b_3} \right\}_{ \bigtimes_{j=1}^{b-1}  p_j' \in P_j, q_j' \in Q_j}$, to their received state on $B_3^b$ to obtain their estimate of the entire string of messages, which we call $\hat m_p \equiv (\hat p_0, \dots, \hat p_b), \hat m_q \equiv (\hat q_1, \dots, \hat q_b) $, where $\hat p_0, \hat p_b, \hat q_b$ are set to be the sole elements of the respective index sets.
\\~\\
\noindent \textbf{Error analysis}:
We fix the strings of messages $m_p = (p_0, \dots, p_b)$ and $m_q = (q_1, \dots, q_b)$. By the bound in~\cref{eq:easybound},
\newpage
\begin{align*}
  %\label{eq:totalError}
  p_e(C) & \equiv \EE_C[ p(\hat m_p \hat m_q \neq m_p m_q)] \\
  & \le \EE_C[ p(\tilde m_p \neq m_p)] + \EE_C[p(\hat m_p \hat m_q \neq m_p m_q | \tilde m_p = m_p)].
\end{align*}
We can bound the first term just as we did for the other protocols. First, use the union bound.
\begin{align*}
  \EE_C[p(\tilde m_p \neq m_p)]  \le \sum_{j=1}^{b-1} \EE_C[p(\tilde p_j \neq p_j | \tilde p_{j-1} = p_{j-1})].
\end{align*}
By~\cref{lem:packing} we can bound each summand as follows:
\begin{align*}
  & \EE_C \left[ p(\tilde p_j \neq p_j | \tilde p_{j-1} = p_{j-1}) \right] \\
  & = \EE_{C} [ \tr[ (I - Q^{(p_j | p_{j-1})}_{B_2}) \rho^{((x_1)_j(p_{j-1}, p_j, q_j ) (x_2)_j (p_{j-1}))}_{B_2} ] ] \\
  & = \EE_{C_{(U)_j (X_2)_j}} [ \tr[ (I - Q^{(p_j | p_{j-1})}_{B_2}) \rho^{((u)_j(p_{j-1}, p_j ) (x_2)_j (p_{j-1}))}_{B_2} ] ] \\
  & \le f(2,\varepsilon) + 4 \sum_{T = \{ P_j\}}  2^{R_p -D_H^\eps(\rho_{(U)_j (X_2)_j (B_2)_j} \Vert \rho_{(U)_j (X_2)_j (B_2)_j}^{(\{X_{S_T}, B_2 \})})} \\
  & = f(2,\varepsilon) + 4 \times  2^{R_p -D_H^\eps(\rho_{U X_2 B_2} \Vert \rho_{U X_2 B_2}^{(\left\{ U , B_2\right\})})},
\end{align*}
%\\~\\
where we used $S_{ \left\{ P_j \right\}} = \left\{ (U)_j \right\}$. We dropped the index $j$ in the last equality since $(U)_j (X_2)_j \sim p_{U X_2}$. Hence, overall,
\begin{align*}
  & \EE_C \left[ p(\tilde m_p \neq m_p) \right] \\
  & \le b \left[  f(2,\varepsilon) + 4 \times 2^{R_p-D_H^\eps(\rho_{U X_2 B_2} \Vert \rho_{U  X_2 B_2}^{(\left\{ U, B_2 \right\})})}\right].
\end{align*}
For the second term, we again invoke~\cref{lem:packing} to obtain~\cref{eq:PQmismatches}.
\begin{figure*}[!t]
  \normalsize
\begin{align} 
 & \EE_{C}[p(\hat m_p \hat m_q \neq m_p m_q | \tilde m_p =m_p) ]  = \nonumber\\
  &= \EE_{C} \left[ \tr\left[  (I - Q^{(p_1 \cdots p_{b-1}  q_1 \cdots q_{b-1} )}_{B^b_3}) \bigot_{j=1}^b \rho^{(x_1(p_{j-1}, p_j, q_j) x_2(p_{j-1}))}_{B_3}\right] \right] \nonumber\\
  & = \EE_{C} \left[ \tr\left[  (I - Q^{(p_1\cdots p_{b-1} , q_1 \cdots q_{b-1})}_{B^b_3}) \bigot_{j=1}^b \rho^{(u(p_{j-1}, p_j) x_1(p_{j-1}, p_j, q_j) x_2(p_{j-1}))}_{B_3}\right]\right] \nonumber\\
  & \le f(3b,\varepsilon) + 4 \times \sum_{J_p, J_q \subseteq [b-1]: j_p + j_q > 0}  2^{j_p R_p + j_q R_q  -D_H^\eps \left( \rho_{U^b X^b_1 X^b_2 B^b_3} \Big\Vert \rho_{U^b X^b_1 X^b_2 B^b_3}^{(\{ X_{S_{(J_p, J_q)}}, B_3^b \})}\right) }\label{eq:PQmismatches}.
  %& \le \eps + f(1,1,\eps) + 2 g(1,1,\eps) + 4 \times 2^{R -D_H^\eps(\rho_{X_1 X_2 B_3} \Vert \rho_{X_1 X_2 B_3}^{\left\{ X_1 X_2, B_3 \right\}})},
\end{align}
\hrulefill
\end{figure*}
We defined $S_{(J_p, J_q)} \equiv \left\{  X^{\mathcal{J}}_1  X^{J_p'}_2 U^{\mathcal{J}_p}\right\}$, $\mathcal{J} \equiv \mathcal{J}_p \cup J_q$, $\mathcal{J}_p \equiv J_p \cup J_p'$, $J_p' \equiv \left\{ j \in [b] | j-1 \in J_p \right\}$, and $j_p \equiv |J_p|, j_q \equiv |J_q|$.
%In the sum above, for all the other subsets $S'$ the $M_S'$ will be empty subset, implying that $|M_S' = 0|$. Therefore, the sum in eq.~\eqref{PQmismatches} can be further bounded by: \HG{not sure if we need to formalize this step even further.}
Also, note that $\rho_{U^b X_1^b X_2^b B_3^b} = \rho_{U X_1 X_2 B_3}^{\ot b}$. Thus, overall, we have proved the following proposition.
\begin{prp}
  Given $R_p, R_q \in \R_{\ge 0}, \,\varepsilon \in (0,1),\, b \in \N$, the triple $(\frac{b-1}{b} (R_p+R_q), b, \delta)$ is achievable for the classical-quantum relay channel, where
\begin{align*}
  \delta & = b \left[  f(2, \varepsilon)  + 4 \times 2^{R_p-D_H^\eps \left(\rho_{U X_2 B_2} \Big\Vert \rho_{U  X_2 B_2}^{(\left\{ U, B_2 \right\})}\right)}\right] + f(3b,\varepsilon) \nonumber\\
  & + 4 \sum_{J_p, J_q \subseteq [b-1]: j_p + j_q > 0}  \\
  & \quad \quad 2^{j_p R_p + j_q R_q  -D_H^\eps\left(\rho_{U^b X^b_1 X^b_2 B^b_3} \Big\Vert \rho_{U^b X^b_1 X^b_2 B^b_3}^{(\{X^{\mathcal{J}}_1  X^{J_p'}_2 U^{\mathcal{J}_p}, B_3^b \})}\right)}. %\label{eq:partialDecodeForward}
\end{align*}
\end{prp}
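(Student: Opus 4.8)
The plan is to bound the codebook-averaged error $p_e(C)=\EE_C[p(\hat m_p\hat m_q\neq m_p m_q)]$ for an arbitrary fixed message pair $(m_p,m_q)$ and then extract a deterministic code. Since the relay decodes, re-encodes and forwards a classical estimate, I would first use \cref{eq:easybound} to split $p_e(C)$ into (i) the probability $\EE_C[p(\tilde m_p\neq m_p)]$ that the relay's $P$-estimate is ever wrong, and (ii) the probability $\EE_C[p(\hat m_p\hat m_q\neq m_p m_q\mid\tilde m_p=m_p)]$ that the receiver's single joint measurement on $B_3^b$ fails given the relay was always correct. Under this conditioning the states reaching the relay and the receiver are the intended ones, so the hypothesis of \cref{lem:packing} that the non-decoded components of the message are guessed correctly is met; here those components are $P_0,P_b,Q_b$, which are singletons.

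For term (i) I would apply a union bound over the $b$ rounds, and on round $j$, after conditioning on $\tilde p_{j-1}=p_{j-1}$ (so the relay's re-encoding is the intended $(x_2)_j(p_{j-1})$ and, averaging out $q_j$, the received state is $\rho_{B_2}^{((u)_j(p_{j-1},p_j)\,(x_2)_j(p_{j-1}))}$), invoke \cref{lem:packing} with the ancestral subgraph on $\{(U)_j,(X_2)_j\}$, quantum states $\{\rho_{B_2}^{(u x_2)}\}$, and decoding subset $D=\{P_j\}$. Since $P_j\in\ind((U)_j)$ but $P_j\notin\ind((X_2)_j)=\{P_{j-1}\}$, the only nonempty $T\subseteq D$ is $T=\{P_j\}$, with $S_T=\{(U)_j\}$, so each summand is at most $f(2,\eps)+4\cdot 2^{R_p-D_H^\eps(\rho_{U X_2 B_2}\Vert\rho_{U X_2 B_2}^{(\{U,B_2\})})}$ (the round index dropping out by the i.i.d.\ structure of the blocks); summing over the $b$ rounds gives the first bracketed term of $\delta$.

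For term (ii), under $\tilde m_p=m_p$ the receiver holds $\bigot_{j=1}^b\rho_{B_3}^{((u)_j(p_{j-1},p_j)\,(x_1)_j(p_{j-1},p_j,q_j)\,(x_2)_j(p_{j-1}))}$, the $(u)_j$-labels being physically inert. I would apply \cref{lem:packing} \emph{once}, with $H=G$ (so $\abs{V_H}=3b$, producing the $f_{3b}(\eps)$ term), states $\{\bigot_j\rho_{B_3}^{((u)_j(x_1)_j(x_2)_j)}\}$, and decoding subset $D=\bigtimes_{j=1}^{b-1}P_j\times\bigtimes_{j=1}^{b-1}Q_j$. The sum over $\O\neq T\subseteq D$ in \cref{eq:packing} is reindexed by the sets $J_p,J_q\subseteq[b-1]$ of blocks whose $P$- resp.\ $Q$-parts the alternative message differs in, so $\sum_{t\in T}R_t=j_p R_p+j_q R_q$ with $j_p=\abs{J_p}$, $j_q=\abs{J_q}$. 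The core computation is $S_T$: from $\ind((X_1)_j)=\{P_j,Q_j,P_{j-1}\}$, $\ind((U)_j)=\{P_j,P_{j-1}\}$, $\ind((X_2)_j)=\{P_{j-1}\}$ one reads off that $(X_1)_j\in S_T\iff j\in J_p\cup J_q\cup J_p'=:\mathcal J$, that $(U)_j\in S_T\iff j\in J_p\cup J_p'=:\mathcal J_p$, and that $(X_2)_j\in S_T\iff j-1\in J_p$, i.e.\ $j\in J_p'$, where $J_p':=\{j\in[b]:j-1\in J_p\}$. Hence $S_T$ corresponds to the random variables $X_1^{\mathcal J}X_2^{J_p'}U^{\mathcal J_p}$, and since $\rho_{U^b X_1^b X_2^b B_3^b}=\rho_{U X_1 X_2 B_3}^{\ot b}$ this yields precisely the second sum in $\delta$.

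Adding the two estimates gives $p_e(C)\le\delta$ for every $(m_p,m_q)$; averaging over the message set and then over $C$ produces a deterministic codebook, with the induced fixed POVMs and relay operations, of average error $\le\delta$. As $\abs{M_p}=2^{(b-1)R_p}$ and $\abs{M_q}=2^{(b-1)R_q}$, this is a $(b,2^{(b-1)(R_p+R_q)})$ code, so $(\tfrac{b-1}{b}(R_p+R_q),b,\delta)$ is achievable. The step I expect to be the main obstacle is the combinatorics of $S_T$ in term (ii): one must carefully track the ``spillover'' whereby $(X_1)_j$ and $(U)_j$ of block $j$ are conditioned on the previous block's message $p_{j-1}$, so that a disagreement there (i.e.\ $j-1\in J_p$) pulls block $j$'s variables into $S_T$ and produces the shifted index set $J_p'$. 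A minor secondary point is justifying the passage to the $X_1$-averaged conditional states $\rho_{B_2}^{(u x_2)}$ and the insertion of the inert $U$ systems, which is legitimate because the bound in \cref{eq:packing} depends only on the classical marginal of the state and the relevant POVMs depend only on the corresponding sub-codebook.
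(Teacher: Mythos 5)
Your proposal is correct and follows essentially the same route as the paper: the same split via \cref{eq:easybound} into relay error and receiver error, a per-round application of \cref{lem:packing} on $\{(U)_j,(X_2)_j\}$ with $D=\{P_j\}$ for the relay, a single application with $H=G$ and $D=\bigtimes_{j=1}^{b-1}P_j\times\bigtimes_{j=1}^{b-1}Q_j$ for the receiver, and the identical bookkeeping $S_{(J_p,J_q)}=\{X_1^{\mathcal J}X_2^{J_p'}U^{\mathcal J_p}\}$ with $J_p'=\{j: j-1\in J_p\}$. The "spillover" combinatorics you flag as the main obstacle is exactly how the paper computes $S_T$, so there is nothing missing.
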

%%%%%%%%%%%
In the asymptotic limit, the error vanishes provided
\begin{align}
  R_p < I(U; B_2 | X_2) \label{eq:generalpBound}
\end{align}
and, for all $J_p, J_q \subseteq [b-1]$,
\begin{align}\label{eq:pDecodeAsympt}
  j_p R_p +j_q R_q < I(X^{\mathcal{J}}_1 X^{J_p'}_2 U^{\mathcal{J}_p}; B^b_3| X_1^{\overline{\mathcal{J}}} X_2^{\overline{J_p'}} U^{\overline{\mathcal{J}_p}})_{\rho_{U^b X^b_1 X^b_2 B^b_3}}.
\end{align}
%where $J_p' \equiv \left\{ j \in [b] | j-1 \in J_p \right\} \subseteq [2:b], \mathcal{J}_p \equiv J_p \cup J_p' \subseteq [b], \mathcal{J} \equiv \mathcal{J}_p \cup J_q \subseteq [b]$.
Note $J_p, J_q \subseteq [b-1]$ and $J_p' \subseteq [2:b]$. However, we use the convention that all complementary sets are with respect to largest containing set $[b]$.%
\footnote{The $b$\textsuperscript{th} messages and estimates match, but in general the $b$\textsuperscript{th} $x_1, x_2, u$ depend also on the $(b-1)$\textsuperscript{th} messages and estimates.}
We can simplify~\cref{eq:pDecodeAsympt} via a general lemma:
\begin{lem}\label{lem:rmCond}
  Let $\rho_{B_1 \dots B_m}$ be $m$-partite quantum state. We consider the state $\rho_{B_1 \dots B_m}^{\ot n}$ for some $n \in \N$. Now, let $B, B', C$ be disjoint subsystems of $(B_1 \dots B_m)^{\ot n}$ and such that $B, B'$ are supported on disjoint tensor factors. Then,
  \begin{align*}
    I(B; B' | C) =0.
  \end{align*}
\end{lem}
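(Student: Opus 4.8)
The plan is to exploit the tensor-product structure of $\rho_{B_1\cdots B_m}^{\otimes n}$. Since $B$ and $B'$ live on disjoint sets of copies, I would group the $n$ copies into those touched by $B$ and the rest; the relevant subsystems then factorize across this bipartition, and the vanishing of the conditional mutual information drops out of the chain rule together with strong subadditivity.

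In detail: let $A\subseteq[n]$ be the set of copies on which $B$ is supported, so that by hypothesis $B'$ is supported within the copies indexed by $[n]\setminus A$. Split $C = C_A C_{A^c}$ into its part $C_A$ on the copies in $A$ and its part $C_{A^c}$ on the copies in $[n]\setminus A$. Since $\rho^{\otimes n}$ factorizes across the bipartition of copies $A$ versus $[n]\setminus A$, the reduced state of $B C_A$ is in tensor product with that of $B' C_{A^c}$, so $I(B C_A ; B' C_{A^c}) = 0$.

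From here the argument is pure bookkeeping. Using the chain rule $I(XY;Z) = I(Y;Z) + I(X;Z\mid Y)$ together with nonnegativity of conditional mutual information (strong subadditivity), $0 = I(B C_A; B' C_{A^c}) \geq I(B; B' C_{A^c}\mid C_A) \geq 0$, hence $I(B; B' C_{A^c}\mid C_A) = 0$; applying the chain rule once more in the form $I(X;YZ\mid W) = I(X;Y\mid W) + I(X;Z\mid WY)$ gives $0 = I(B; B' C_{A^c}\mid C_A)\geq I(B; B'\mid C_A C_{A^c}) \geq 0$, and $C_A C_{A^c} = C$, so $I(B;B'\mid C) = 0$.

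I do not anticipate a genuine obstacle. The only point worth flagging is that $C$ may have support on both groups of copies, so one cannot simply claim that $B$ is independent of the whole of $B'C$; this is precisely why $C$ is split and fed back in through the conditioning one piece at a time. As an alternative to the chain-rule manipulation, one could write $\rho_{BB'C} = \sigma_{B C_A}\otimes\tau_{B' C_{A^c}}$ explicitly and expand $I(B;B'\mid C) = S(BC) + S(B'C) - S(BB'C) - S(C)$, whose four terms cancel pairwise.
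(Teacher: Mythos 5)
Your proposal is correct. Your primary route and the paper's proof rest on the same key observation---that $\rho^{\otimes n}$ factorizes across the copies, so $BC_A$ is in tensor product with $B'C_{A^c}$---but they package it differently: the paper simply expands $I(B;B'|C)=S(BC)+S(B'C)-S(BB'C)-S(C)$ and cancels the terms using additivity of entropy under tensor products (splitting $C$ into its pieces on the copies supporting $B$, supporting $B'$, and supporting neither), whereas you first conclude $I(BC_A;B'C_{A^c})=0$ and then peel off the conditioning via two applications of the chain rule together with nonnegativity of conditional mutual information (strong subadditivity). Your argument is perfectly valid and arguably more modular (it would survive in settings where one only knows the mutual information vanishes rather than having explicit product structure), at the cost of invoking SSA, which the paper's purely additive cancellation does not need. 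The ``alternative'' you flag at the end---writing $\rho_{BB'C}=\sigma_{BC_A}\otimes\tau_{B'C_{A^c}}$ and cancelling the four entropy terms---is essentially the paper's own proof, with a two-block split of $C$ in place of the paper's three-block bookkeeping. You also correctly identify the one subtlety (that $C$ may straddle both groups of copies), which is exactly why both proofs split $C$ before using factorization.
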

\begin{proof}
  We prove this by the definition of the conditional mutual information and the fact that $\rho_{B_1\dots B_m}^{\ot n}$ is a tensor product state:
  \begin{align*}
    I(B; B' | C) & = S(B C) + S(B' C) - S(B B' C) - S(C)\\
    & = S(B C_B) + S(C_{\overline B}) + S(B' C_{B'}) + S(C_{\overline{B'}}) \\
    & - S(B C_B) - S(B' C_{B'}) - S(C_{\overline{B B'}}) -S(C)\\
    &=0.
  \end{align*}
  where $C_B$ is the subsystem of $C$ supported on the tensor factors that support $B$ and $C_{\overline B}$ is the rest of $C$.
\end{proof}

Using~\cref{lem:rmCond} this and the chain rule, for any conditional mutual information quantity we can remove conditioning systems which are supported on tensor factors disjoint from those that support the non-conditioning systems. This is key in the following analyses. For instance, in~\cref{eq:pDecodeAsympt}, $\overline{\mathcal{J}}$ and $\mathcal{J} \cup J_p' \cup \mathcal{J}_p = \mathcal{J}$ are supported on disjoint tensor factors, and so we can remove the conditioning on the $X_1^{\ol{\mathcal{J}}}$ system:
\begin{align*}
  & I(X^{\mathcal{J}}_1 X^{J_p'}_2 U^{\mathcal{J}_p}; B^b_3| X_1^{\overline{\mathcal{J}}} X_2^{\overline{J_p'}} U^{\overline{\mathcal{J}_p}}) \\
  & = I(X^{\mathcal{J}}_1 X^{J_p'}_2 U^{\mathcal{J}_p}; B^b_3| X_2^{\overline{J_p'}} U^{\overline{\mathcal{J}_p}}) \\
  & + I(X^{\mathcal{J}}_1 X^{J_p'}_2 U^{\mathcal{J}_p}; X_1^{\ol{ \mathcal{J}}} | B^b_3 X_2^{\overline{J_p'}} U^{\overline{\mathcal{J}_p}}) \\
  & - I(X^{\mathcal{J}}_1 X^{J_p'}_2 U^{\mathcal{J}_p}; X_1^{\ol{ \mathcal{J}}} | X_2^{\overline{J_p'}} U^{\overline{\mathcal{J}_p}})\\
  & = I(X^{\mathcal{J}}_1 X^{J_p'}_2 U^{\mathcal{J}_p}; B^b_3| X_2^{\overline{J_p'}} U^{\overline{\mathcal{J}_p}}).
\end{align*}
Thus,~\cref{eq:pDecodeAsympt} reduces to
\begin{align*}
  j_p R_p +j_q R_q < I(X^{\mathcal{J}}_1 X^{J_p'}_2 U^{\mathcal{J}_p}; B^b_3|  X_2^{\overline{J_p'}} U^{\overline{\mathcal{J}_p}}).
\end{align*}
\begin{figure}[t]
\begin{center}
\includegraphics[scale=0.5]{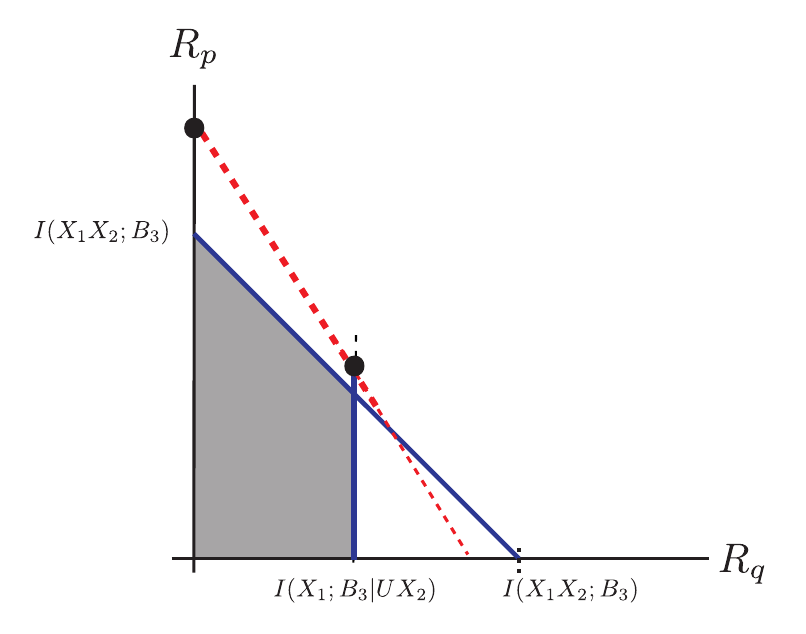}
\end{center}
\caption{$R_p-R_q$ rate region. Gray region is $S$, defined by the blue (solid) lines that correspond to~\cref{eq:qBound} and~\cref{eq:p+qBound}. The red (dashed) line corresponds to~\cref{eq:generalpqBound} for a fixed $J_p, J_q \subseteq [b-1]$.}\label{fig:rates}
\end{figure}

We claim that the set of pairs $(R_p, R_q)$ that satisfy these bounds gives the classical partial decode-forward lower bound with quantum mutual information quantities in the limit of large $b$.%
\footnote{This will also cause $\frac{b-1}{b} \to 1$ so that the rate we achieve really is $R_p+R_q$. }
In particular, we show:
\begin{prp}
  Let
  \begin{align*}
    & S(b) \equiv \Big\{ (R_p, R_q) \in \R^2_{\ge 0} \big\vert  %R_p  \le I(U; B_2 | X_2)_{\rho_{U X_2 B_2}},
    \forall J_p, J_q \subseteq [b-1] \\
    & \text{ such that }  j_p +j_q >0, \\
    & j_p R_p +j_q R_q  < I(X^{\mathcal{J}}_1 X^{J_p'}_2 U^{\mathcal{J}_p}; B^b_3|  X_2^{\overline{J_p'}} U^{\overline{\mathcal{J}_p}})_{\rho_{U^b X^b_1 X^b_2 B^b_3}} \Big\}
  \end{align*}
  and
  \begin{align*}
    S \equiv \Big\{ (R_p, R_q) \in \R^2_{\ge 0} | %& R_p \ge 0, R_q \ge 0, R_p \le I(U; B_2 | X_2)_{\rho_{U X_2 B_2}},
    & R_q < I(X_1; B_3 |U X_2)_{\rho_{U X_1 X_2 B_3}}, \\
    & R_p + R_q < I(X_1 X_2 ; B_3)_{\rho_{X_1 X_2 B_3}} \Big\},
  \end{align*}
  where $\rho_{U X_1 X_2 B_2 B_3}$ is given by~\cref{eq:cqStatePartialDecode}. Then, $\lim_{b \to \infty} S(b)$ exists and is equal to $S$.
\end{prp}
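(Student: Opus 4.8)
The plan is to peel the defining inequalities of $S(b)$ down to a short list of active constraints and then take $b\to\infty$. The starting point is that the relevant state is i.i.d.\ across blocks, $\rho_{U^b X_1^b X_2^b B_3^b}=\rho_{U X_1 X_2 B_3}^{\ot b}$ with $\rho_{U X_1 X_2 B_3}$ the single-block marginal of \cref{eq:cqStatePartialDecode}. Since every system appearing in \cref{eq:pDecodeAsympt} (the ``left'' systems $X_1^{\mathcal J}$, $X_2^{J_p'}$, $U^{\mathcal J_p}$, the conditioning systems $X_2^{\overline{J_p'}}$, $U^{\overline{\mathcal J_p}}$, and $B_3^b$) is a union of single-block subsystems, the conditional mutual information bounding $j_p R_p+j_q R_q$ splits as a sum of $b$ single-block conditional mutual informations, by the same kind of entropy computation used to prove \cref{lem:rmCond}. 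For block $j$ the local term depends only on whether $j\in J_p$, whether $j-1\in J_p$, and whether $j\in J_q$; and since $\rho_{B_3}^{(u x_1 x_2)}$ does not depend on $u$ we have $I(U;B_3|X_1 X_2)_\rho=0$, so the chain rule collapses every such term to one of
\[ 0,\qquad I(X_1;B_3\,|\,U X_2)_\rho,\qquad I(X_1;B_3\,|\,X_2)_\rho,\qquad I(X_1 X_2;B_3)_\rho , \]
which are ordered $0\le I(X_1;B_3|U X_2)_\rho\le I(X_1;B_3|X_2)_\rho\le I(X_1 X_2;B_3)_\rho$, the two gaps being $I(U;B_3|X_2)_\rho\ge0$ and $I(X_2;B_3)_\rho\ge0$.

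Next I would count the block types. Write $\alpha=I(X_1 X_2;B_3)_\rho$, $\beta=I(X_1;B_3|X_2)_\rho$, $\gamma=I(X_1;B_3|U X_2)_\rho$. Inspecting the three bits: a block $j$ contributes $\alpha$ iff $j-1\in J_p$ (there are $|J_p|$ such blocks, since $J_p'$ is just $J_p$ shifted by one), contributes $\beta$ iff $j$ begins a maximal run of $J_p$ (one per run), contributes $\gamma$ iff $j\in J_q\setminus(J_p\cup J_p')$, and contributes $0$ otherwise. Hence the $(J_p,J_q)$-constraint reads
\[ |J_p|\,R_p+|J_q|\,R_q < |J_p|\,\alpha + r\,\beta + c\,\gamma , \]
where $r$ is the number of runs of $J_p$ and $c=|J_q\setminus(J_p\cup J_p')|$. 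For fixed $|J_p|=k$ and $|J_q|$ the right-hand side is smallest when $J_p$ forms a single run (so $r=1$ when $k\ge1$) --- splitting a run of $J_p$ in two raises the bound by at least $\beta-\gamma\ge0$ while enlarging $J_p\cup J_p'$ by only one --- and then when as much of $J_q$ as possible is placed inside $J_p\cup J_p'$, i.e.\ $c=\max\{0,|J_q|-k-1\}$. This leaves the active constraints: $R_q<\gamma$ (from $k=0$); $kR_p+|J_q|R_q<k\alpha+\beta$ when $k\ge1$ and $|J_q|\le k+1$; and $kR_p+|J_q|R_q<k\alpha+\beta+(|J_q|-k-1)\gamma$ when $k\ge1$ and $|J_q|>k+1$. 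Each of these is realizable with $J_p,J_q\subseteq[b-1]$ once $b$ is large enough relative to $k$ and $|J_q|$.

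Finally I would extract the limit. One direction, $S\subseteq S(b)$ for every $b$, is a short linear-inequality check: if $R_q<\gamma$ and $R_p+R_q<\alpha$ then, splitting $|J_q|R_q=(|J_q|-k)R_q+kR_q$ and using $\gamma\le\beta\le\alpha$, every active inequality holds, hence so does every $(J_p,J_q)$-inequality (same left-hand side, larger right-hand side). For the reverse, $S(b+1)\subseteq S(b)$ since more subsets give more constraints, so $\lim_{b\to\infty}S(b)=\bigcap_b S(b)$ exists; the $k=0$ constraint forces $R_q<\gamma$, while the ``diagonal'' constraint with $J_p$ a single run of size $b-2$ and $J_q$ of size $b-2$ absorbed into $J_p\cup J_p'$ forces $R_p+R_q<\alpha+\tfrac{\beta}{b-2}$, so $\bigcap_b S(b)\subseteq\{R_q<\gamma\}\cap\{R_p+R_q\le\alpha\}$. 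Since $S$ is exactly the region cut out by the strict inequalities $R_q<\gamma$ and $R_p+R_q<\alpha$, and since the rates actually achieved carry the extra vanishing factor $\tfrac{b-1}{b}$, these two containments identify the limiting region with $S$ (with the standard convention identifying rate regions with their closures). The resulting inequalities $R_q<I(X_1;B_3|U X_2)_\rho$ and $R_p+R_q<I(X_1 X_2;B_3)_\rho$ are precisely the quantum analogue of the classical partial decode-forward region.

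The step I expect to be the real work is the combinatorial one in the middle: organizing, for an arbitrary pair $J_p,J_q\subseteq[b-1]$, exactly how the block-decomposed conditional mutual information depends on the number of runs of $J_p$ and on the overlap of $J_q$ with $J_p\cup J_p'$, and then carrying out the optimization over all such pairs that isolates the finitely many active inequalities. The block-additivity of conditional mutual information for the product state (a mild extension of \cref{lem:rmCond}) and the concluding sandwich argument are comparatively routine.
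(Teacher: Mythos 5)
Your proposal is correct, and it reaches the result by a genuinely different middle step than the paper. The paper never computes the right-hand side of the generic constraint exactly: for the containment $S\subseteq S(b)$ it argues geometrically, using the chain rule together with \cref{lem:rmCond} to lower-bound each constraint's $R_p$-intercept by $I(X_1X_2;B_3)$ and its value at $R_q=I(X_1;B_3|UX_2)$ by $I(X_2U;B_3)$, concluding that no constraint line cuts into $S$; for the other containment it evaluates only the two extremal choices $j_p=0$ and $J_p=J_q=[b-1]$. You instead exploit full block-additivity of the conditional mutual information on $\rho^{\ot b}$ to get a closed form for every constraint, $j_pR_p+j_qR_q< j_p\,I(X_1X_2;B_3)+r\,I(X_1;B_3|X_2)+c\,I(X_1;B_3|UX_2)$ with $r$ the number of runs of $J_p$ and $c=\abs{J_q\setminus(J_p\cup J_p')}$, using $I(U;B_3|X_1X_2)=0$ to collapse the per-block terms and to order $\gamma\le\beta\le\alpha$; your classification of blocks (contribution $\alpha$ iff $j-1\in J_p$, $\beta$ iff $j$ starts a run, $\gamma$ iff $j\in J_q\setminus(J_p\cup J_p')$) checks out, it reproduces the paper's special cases (e.g.\ $R_p+R_q<\alpha+\beta/(b-1)$ for the full diagonal), and the algebraic verification that $R_q<\gamma$, $R_p+R_q<\alpha$ imply every constraint goes through (with the trivial case split $\abs{J_q}\le\abs{J_p}$ versus $\abs{J_q}>\abs{J_p}$ that your one-line ``splitting'' glosses over). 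What your route buys is an exact description of $S(b)$, the monotonicity $S(b+1)\subseteq S(b)$ (so the limit exists as a decreasing intersection, which the paper leaves implicit), and a purely mechanical containment check; what it costs is the run-counting combinatorics, which the paper's two-point intercept argument avoids entirely. One shared caveat: when $I(X_1;B_3|X_2)>0$ the intersection $\bigcap_b S(b)$ actually contains the boundary segment $R_p+R_q=I(X_1X_2;B_3)$, $R_q<I(X_1;B_3|UX_2)$, so the stated equality with the open region $S$ holds only up to closure; you flag this convention explicitly, whereas the paper's ``in the limit of large $b$'' step has the same looseness without comment, so this is not a defect of your argument relative to the paper's.
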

Note that the bounds that define $S$ do not match the bounds given for instance in~\cite{elgamal2011network} since we do not first decode $P$ and thereby $Q$, but instead jointly decode to obtain all messages simultaneously. However, in the end we still obtain the same lower bound on the capacity.
\begin{proof}
  For reference, we list the bounds:
  \begin{align}\label{eq:generalpqBound}
    %R_p  & \le I(U; B_2 | X_2)_{\rho_{U X_2 B_2}} \label{eq:generalpBound}\\
    j_p R_p +j_q R_q  & < I(X^{\mathcal{J}}_1 X^{J_p'}_2 U^{\mathcal{J}_p}; B^b_3|  X_2^{\overline{J_p'}} U^{\overline{\mathcal{J}_p}})_{\rho_{U^b X^b_1 X^b_2 B^b_3}}
  \end{align}
  and
  \begin{align}
    %R_p \le I(U; B_2 | X_2)_{\rho_{U X_2 B_2}} \label{eq:pBound}\\
    R_q < I(X_1;B_3| U X_2)_{\rho_{U X_1 X_2 B_3}} \label{eq:qBound}\\
    R_p + R_q < I(X_1 X_2 ; B_3)_{\rho_{X_1 X_2 B_3}} \label{eq:p+qBound}.
  \end{align}

  We first claim $\limsup_{b \to \infty} S(b) \subseteq S$.
  Consider $J_p, J_q = [b-1]$, in which case~\cref{eq:generalpqBound} becomes
  \begin{align*}
    (b-1)(R_p + R_q) < I(X_1^b (X_2)_2^b U^b ; B_3^b | (X_2)_1),
  \end{align*}
  which, using~\cref{lem:rmCond}, can be manipulated into
  \begin{align*}
    R_p + R_q & < \frac{b}{b-1} I(X_1 X_2 U ; B_3) - \frac{1}{b-1} I( X_2; B_3)\\
    & = I(X_1 X_2 ; B_3) + \frac{1}{b-1} I( X_1; B_3 \vert X_2).
  \end{align*}
  In the limit of large $b$, this becomes~\cref{eq:p+qBound}. To obtain~\cref{eq:qBound}, take $j_p = 0$. Then,~\cref{eq:generalpqBound} becomes by~\cref{lem:rmCond}
  \begin{align*}
    j_q R_q < I(X_1^{J_q} ; B_3^b | X_2^b U^b) = j_q I(X_1 ; B_3 \vert X_2 U).
  \end{align*}
  Now, since $j_p= 0$, $j_q$ cannot be zero, so this is equivalent to
  \begin{align*}
    R_q < I(X_1; B_3 | X_2 U).
  \end{align*}
  The claim thus follows.

  We next claim $S(b) \supseteq S$ for all $b$ and so $\liminf_{b \to \infty} S(b) \supseteq S$. We only need to consider when $j_p > 0$ since otherwise we obtain~\cref{eq:qBound} as shown above, which holds for all $b$. Now, interpret each of the inequalities above as a linear bound on an $R_p$-$R_q$ diagram (see~\cref{fig:rates}). We show that none of the lines corresponding to~\cref{eq:generalpqBound} cuts into $S$. First, fixing $J_p, J_q \subseteq [b-1]$, we find the $R_p$ intercept of said line
  \begin{align*}
    & \frac{1}{j_p} I(X^{\mathcal{J}}_1 X^{J_p'}_2 U^{\mathcal{J}_p}; B^b_3|  X_2^{\overline{J_p'}} U^{\overline{\mathcal{J}_p}}) \\
    &= \frac{1}{j_p} \left( I(X_1^{J_p'} X_2^{J_p'} U^{J_p'} ; B_3^b | X_2^{\overline{J_p'}} U^{\overline{\mathcal{J}_p}} ) +  \cdots \right) \\
    & \ge \frac{1}{j_p} I(X_1^{J_p'} X_2^{J_p'} U^{J_p'} ; B_3^{J_p'} ) \\
    & = I(X_1 X_2 U; B_3) = I(X_1 X_2 ; B_3),
  \end{align*}
  where $\cdots$ stands for some conditional mutual information quantity and therefore is non-negative. Thus, the $R_p$ intercept is at least as large as that of~\cref{eq:p+qBound}, as shown in~\cref{fig:rates}. This determines one of the points of the line.

  We now find another point. We observe that $I(X_1; B_3 | X_2 U) \le I(X_1 X_2 U; B_3)$ so the line associated with~\cref{eq:qBound} intersects that of~\cref{eq:p+qBound} in $\R_{\ge 0}^2$. Hence, it is sufficient to show the bound on $R_p$ when $R_q = I(X_1; B_3 | X_2 U)$ in~\cref{eq:generalpqBound} is weaker than $I(X_1 X_2 U; B_3) - I(X_1; B_3 | X_2 U) = I(X_2 U; B_3)$. To see this, we substitute $R_q = I(X_1; B_3 | X_2 U)$ into~\cref{eq:generalpqBound}:
  \begin{align*}
    & j_p R_p + j_q I(X_1; B_3 | X_2 U) \\
    & \le  I(X^{\mathcal{J}}_1 X^{J_p'}_2 U^{\mathcal{J}_p}; B^b_3|  X_2^{\overline{J_p'}} U^{\overline{\mathcal{J}_p}}) \\
    & = I(X_1^{J_q}; B_3^b | X_1^{\mathcal{J} \backslash J_q} X_2^b U^b) \\
    & + I(X_1^{\mathcal{J} \backslash J_q} X_2^{J_p'} U^{\mathcal{J}_p} ; B_3^b | X_2^{\overline{J_p'}} U^{\overline{\mathcal{J}_p}}) \\
    & = I(X_1^{J_q}; B_3^{J_q} | X_2^{J_q} U^{J_q}) + I(X_2^{J_p'} U^{J_p'} ; B_3^b | X_2^{\overline{J_p'}} U^{\overline{\mathcal{J}_p}}) + \cdots \\
    & = j_q I(X_1; B_3 | X_2 U) + j_p I(X_2 U ; B_3) + \cdots.
  \end{align*}
  This establishes our claim and completes the proof.
\end{proof}
Therefore, combining the bounds~\cref{eq:generalpBound,eq:qBound,eq:p+qBound}, the overall rate $R_p+R_q$ of the entire protocol is achievable if
\begin{align*}
  R_p+R_q < \min\{ &I(X_1 ; B_3 | U X_2)_\rho + I(U; B_2|X_2)_\rho, \\
  & I(X_1 X_2; B_3)_\rho\}. 
\end{align*}
This is sufficient since if it holds we can choose $R_p, R_q$ to satisfy the bounds. It is also necessary since if it is violated, then one of the bounds has to be violated. We can optimize over $p_{U X_1 X_2}$, so we obtain the partial decode-forward lower bound:
\begin{align}
  \label{eq:partial}
  C \ge \max_{p_{U X_1 X_2}} \min\{& I(X_1 ; B_3 | U X_2)_\rho+ I(U; B_2|X_2)_\rho\nonumber \\
  & , I(X_1 X_2; B_3)_\rho\}.
\end{align}

\begin{remark*}
  This coding scheme is optimal in the case when $\mathcal{N}_{X_1 X_2 \to B_2 B_3}$ is \emph{semideterministic}, namely $B_2$ is classical and $\rho_{B_2}^{(x_1 x_2)}$ is pure for all $x_1, x_2$. This is because in this case the partial decode-forward lower bound~\cref{eq:partial} with $U = B_2$ as random variables matches the cutset upper bound~\cref{eq:cutset}. This is possible because of the purity condition, which essentially means $B_2$ is a deterministic function of $X_1, X_2$. The semideterministic classical relay channel was defined and analyzed in~\cite{gamal1982capacity}.
\end{remark*}

%=============================================================================
\section{Proof of the Quantum Multiparty Packing Lemma}\label{sec:proofpackinglemma}
%=============================================================================
In this section we prove~\cref{lem:packing} via Sen's joint typicality lemma~\cite{senInPrep}. We then use~\cref{lem:packing} to prove the asymptotic version,~\cref{lem:packingAsympt}. We shall state a special case of the joint typicality lemma, the $t=1$ intersection case in the notation of~\cite{senInPrep}, as a theorem. For the sake of conciseness, we suppress some of the detailed expressions.

We first give some definitions. A \emph{subpartition} $\mathcal L$ of some set $S$ is a collection of nonempty, pairwise disjoint subsets of $S$.
%We use the notation $\mathcal L\vdash S$.
We define $\bigcup(\mathcal L)$ to be their union, that is, $\bigcup(\mathcal L)\equiv\bigcup_{L\in\mathcal L} L$.
Note that $\bigcup (\mathcal{L}) \subseteq S$.
%We use $\abs{\mathcal{K}}$ for the number of subsets in the subpartition.
%The elements of a subpartition will be labeled by $K_1, \dots, K_\abs{\mathcal{K}}$, though we caution that their order is arbitrary.
%Note that $\mathcal L = \O$ is a valid subpartition; more generally, so are arbitrary subsets of subpartitions.
%We call $\bigcup(\mathcal K)$ the \emph{support} of the subpartition $\mathcal K$, and we will say that $\mathcal K$ \emph{covers} its support.
%We denote the complement of $\mathcal{L}$ in $S$ by $\overline{\mathcal L} = S \setminus \bigcup(\mathcal L)$.
%The total number of subpartitions of $[N]$ can be upper bounded by $(N+1)^N$.
We say a subpartition $\mathcal{L}$ of $S$ \emph{covers} $T \subseteq S$ if $T \subseteq \bigcup (\mathcal{L})$.
\begin{thm}[One-shot Quantum Joint Typicality Lemma~\cite{senInPrep}]\label{thm:conditional}
  Let
  \begin{align*}
    \rho_{XA} = \sum_{x}^{} p_X(x) \state{x}_X \ot \rho_A^{(x)}
  \end{align*}
  be a classical-quantum state where $A \equiv A_1 \dots A_N$ and $X \equiv X_1 \dots X_M$.
  Let $\eps \in (0,1)$ and let $Y=Y_1\dots{}Y_{N+M}$ consist of $N+M$ identical copies of some classical system, with total dimension $d_Y$.
  Then there exist quantum systems $\wh A_k$ and isometries $\wh J_k\colon A_k \to \wh A_k$ for $k\in[N]$, as well as a cqc-state of the form
  \begin{align*}
    \wh \rho_{X \wh A Y} = \frac{1}{d_Y} \sum_{x,y}^{} p_X(x) \state{x}_X \ot \wh \rho_{\wh A}^{(x,y)} \ot \state{y}_Y,
  \end{align*}
  and a cqc-POVM $\wh \Pi_{X\wh A Y }$, such that, with $\wh J \equiv  \bigot_{k \in [N]} \wh J_k$,
\begin{enumerate}
\item\label{it:thm conditional 1}
$\left\Vert \wh\rho_{X\wh AY} - (\id_{X} \ot \wh J) \rho_{XA}  (\id_{X} \ot \wh J)^\dagger \ot \tau_{Y} \right\Vert_1 \\$
$\leq f(N,M, \eps)$, where $\tau_Y=\frac1{d_Y}\sum_y \ket y\bra y_Y$ denotes the maximally mixed state on $Y$,
\item\label{it:thm conditional 2}
$\tr\left[\wh \Pi_{X \wh AY} \wh\rho_{X\wh A Y} \right] \geq 1 - g(N,M,\eps)$.
\item\label{it:thm conditional 3}
Let $\mathcal{L}$ be a subpartition of $[M] \sqcup [N]$ that covers $[N]$.
Define $Y_{\mathcal L} := Y_{\bigcup(\mathcal L)}$, $S \equiv [M] \cap \bigcup (\mathcal{L})$, $\ol{S} \equiv [M] \setminus S$ and the ``conditional'' quantum states
  \begin{align*}
    \wh \rho^{(x_{\ol{S}}, y_{\ol{S}})}_{X_{S} \wh A Y_\mathcal{L} } \equiv \frac{1}{d_{Y_{\mathcal{L}}}} \sum_{x_{S}, y_\mathcal{L}}^{} & p_{X_{S} | X_{\ol{S}}} (x_{S} | x_{\ol{S}}) \state{x_{S}}_{X_{S}} \\
    & \ot \wh \rho_{\wh A}^{(x_{\ol{S}} x_{S}, y_{\ol{S}} y_{\mathcal{L}})} \ot \state{y_\mathcal{L}}_{Y_\mathcal{L}}
    \end{align*}
    \begin{align*}
    \rho^{(x_{\ol{S}})}_{X_{S}  A } & \equiv \sum_{x_{S}}^{} p_{X_{S} | X_{\ol{S}}}(x_{S} | x_{\ol{S}}) \state{x_{S}}_{X_{S}} \ot \rho_{A}^{(x_{\ol{S}} x_{S})}.
  \end{align*}
  We can now define
  % \footnote{Note that we are overloading some notation given in~\cref{sec:prelim}, which is a special case of the definition here. In particular, we identify subsets of $[M] \sqcup [N]$ with subsystems of $X_1 \dots X_M A_1 \dots A_N$.} <- Maybe be more precise?
  \begin{align*}
    \wh \rho_{X \wh A Y}^{(\mathcal{L})} \equiv \frac{1}{d_{Y_\ol{S}}} \sum_{x_\ol{S}, y_\ol{S}}^{} & p_{X_\ol{S}}(x_{\ol{S}}) \state{x_{\ol{S}}}_{X_\ol{S}} \ot \\
    & \bigot_{L \in \mathcal{L}} \wh \rho^{(x_{\ol{S}}, y_{\ol{S}})}_{X_{L \cap [M]} \wh A_{L \cap [N]}  Y_L } \ot \state{y_\ol{S}}_{Y_\ol{S}}
    \end{align*}
    \begin{align*}
    \rho^{(\mathcal{L})}_{XA} & \equiv \sum_{x_\ol{S}}^{} p_{X_\ol{S}}(x_{\ol{S}}) \state{x_{\ol{S}}}_{X_\ol{S}} \ot \bigot_{L \in \mathcal{L}} \rho^{(x_{\ol{S}})}_{X_{L \cap [M]} A_{L \cap [N]} }
  \end{align*}
  in terms of the reduced density matrices of the states
  $\wh \rho^{(x_{\ol{S}}, y_{\ol{S}})}_{X_{S} \wh A Y_\mathcal{L} }$
  and
  $\rho^{(x_{\ol{S}})}_{X_{S}  A }$
  defined above.
  Then, %for all $\mathcal{L}$,
  \begin{align*}
    & \tr\left[\wh \Pi_{X \wh AY} \left(  \wh \rho_{X \wh A Y}^{(\mathcal{L})}\right) \right] \\
    & \leq 2^{-D_H^\eps\left(\rho_{XA} \Vert \rho^{(\mathcal{L})}_{XA} \right)} + h(N,M,d_A,d_Y).
  \end{align*}
\end{enumerate}
Here, $f(N,M,\eps)$, $g(N,M,\eps)$, $h(N,M,d_A,d_Y)$ are universal functions (independent of the setup) such that
\begin{align*}
  \lim_{\eps\to0} f(N,M,\eps) & = \lim_{\eps\to0} g(N,M,\eps)\\
  & =\lim_{d_Y \to \infty} h(N,M,d_A,d_Y)\\
  & =0.
\end{align*}
\begin{align*}
\end{align*}
\end{thm}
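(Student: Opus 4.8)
The plan is to reduce the simultaneous statement --- a single POVM $\wh\Pi$ that is near-certain on $\wh\rho_{X\wh AY}$ yet has vanishing weight on \emph{every} decoupled state $\wh\rho^{(\mathcal L)}_{X\wh AY}$ at once --- to a commutative situation, where it dissolves into an ordinary union bound. For a \emph{single} subpartition $\mathcal L$ the claim is essentially the definition of $D_H^\eps$: there is a test $0\le T_{\mathcal L}\le\id$ with $\tr[T_{\mathcal L}\rho_{XA}]\ge 1-\eps$ and $\tr[T_{\mathcal L}\rho^{(\mathcal L)}_{XA}]\le 2^{-D_H^\eps(\rho_{XA}\Vert\rho^{(\mathcal L)}_{XA})}$. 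The obstruction is that these tests, for different $\mathcal L$, do not commute, so threading $\rho^{(\mathcal L_i)}_{XA}$ through a sequential product $T_{\mathcal L_1}\cdots T_{\mathcal L_i}\cdots$ destroys the good overlap bound coming from $T_{\mathcal L_i}$. Note that the number of subpartitions of $[M]\sqcup[N]$ depends only on $N$ and $M$, so once (approximate) commutativity is in hand, a union bound over all $\mathcal L$ costs only a factor depending on $N,M$ --- exactly the shape of the error functions $g$ and $h$.

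The device that buys commutativity is the pair (tilting isometries $\wh J_k$, augmentation register $Y$). The idea is to ``smooth'' each $\rho_A^{(x)}$ --- roughly, to embed $A_k$ into $\wh A_k$ and to average, with the uniformly distributed $Y$ registers, over a family of rotations/pinchings --- so that the spectra of all the relevant conditional states are flattened and, in the enlarged space, the smoothed tests $\wh T_{\mathcal L}$ can be chosen of a common, nearly commuting form for \emph{all} $\mathcal L$ simultaneously, with the commutativity defect controlled by a state-independent quantity $h(N,M,d_A,d_Y)\to 0$ as $d_Y\to\infty$. Item~\ref{it:thm conditional 1} then says this smoothing barely disturbs the state: a gentle-measurement / almost-no-disturbance estimate gives $\Vert\wh\rho_{X\wh AY}-(\id_X\ot\wh J)\rho_{XA}(\id_X\ot\wh J)^\dagger\ot\tau_Y\Vert_1\le f(N,M,\eps)$ with $f\to0$ as $\eps\to0$. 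Since $\wh J=\bigot_k\wh J_k$ acts system by system and since tensoring both arguments with the fixed $\tau_Y$ leaves $D_H^\eps$ unchanged, the construction is compatible with the block decomposition: the decoupled smoothed states $\wh\rho^{(\mathcal L)}_{X\wh AY}$ stand to $\wh\rho_{X\wh AY}$ as the $\rho^{(\mathcal L)}_{XA}$ stand to $\rho_{XA}$, so that $D_H^\eps(\wh\rho_{X\wh AY}\Vert\wh\rho^{(\mathcal L)}_{X\wh AY})$ is at least $D_H^\eps(\rho_{XA}\Vert\rho^{(\mathcal L)}_{XA})$ up to corrections absorbed into $h$.

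With the smoothed picture in place I would define $\wh\Pi$ as the intersection (sequential product) of the smoothed tests $\wh T_{\mathcal L}$. Each $\wh T_{\mathcal L}$ has overlap $\ge 1-O(\eps)$ with $\wh\rho_{X\wh AY}$, so a non-commutative union bound --- with the error of the commutative idealization absorbed into $g$ and $h$ --- yields Item~\ref{it:thm conditional 2}, $\tr[\wh\Pi_{X\wh AY}\wh\rho_{X\wh AY}]\ge 1-g(N,M,\eps)$. For Item~\ref{it:thm conditional 3}, fix $\mathcal L$; approximate commutativity gives $\wh\Pi\le\wh T_{\mathcal L}+(\text{an operator of norm }O(h))$, while the compatibility above lets us take $\wh T_{\mathcal L}$ to be the optimal test for $D_H^\eps(\wh\rho_{X\wh AY}\Vert\wh\rho^{(\mathcal L)}_{X\wh AY})$, so $\tr[\wh T_{\mathcal L}\wh\rho^{(\mathcal L)}_{X\wh AY}]\le 2^{-D_H^\eps(\rho_{XA}\Vert\rho^{(\mathcal L)}_{XA})}$; combining gives $\tr[\wh\Pi_{X\wh AY}\wh\rho^{(\mathcal L)}_{X\wh AY}]\le 2^{-D_H^\eps(\rho_{XA}\Vert\rho^{(\mathcal L)}_{XA})}+h(N,M,d_A,d_Y)$.

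The main obstacle is the smoothing step itself: the explicit construction of the tilting isometries $\wh J_k$ and the register $Y$, together with the proof that the associated family of tests can be made \emph{simultaneously} near-commuting across \emph{all} subpartitions with a single, state-independent error $h$ vanishing as $d_Y\to\infty$. This requires operator concentration estimates (operator Chernoff/Bernstein-type bounds) applied to the averaged rotations and is the genuinely hard --- and genuinely new --- content of Sen's argument; by comparison the rest (rounding tests to projectors, the union bound over the finitely many $\mathcal L$, and propagating the trace-norm error of Item~\ref{it:thm conditional 1} into Items~\ref{it:thm conditional 2} and~\ref{it:thm conditional 3} via standard continuity estimates) is bookkeeping.
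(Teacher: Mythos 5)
There is a genuine gap here: your proposal defers exactly the step that constitutes the content of the theorem. The existence of the tilting isometries $\wh J_k$, the augmenting register $Y$, the state $\wh\rho_{X\wh AY}$ and a \emph{single} cqc-POVM $\wh\Pi_{X\wh AY}$ that simultaneously satisfies property~3 for every subpartition $\mathcal L$ is precisely what must be constructed; asserting that a smoothing/averaging over $Y$ ``flattens the spectra'' and makes all the per-$\mathcal L$ optimal hypothesis tests approximately commute, with a state-independent defect $h(N,M,d_A,d_Y)\to 0$, is a restatement of the theorem, not an argument for it. Moreover, the specific mechanism you propose is shaky as stated: the per-$\mathcal L$ optimizers of $D_H^\eps(\rho_{XA}\Vert\rho^{(\mathcal L)}_{XA})$ need not admit any common approximately-commuting refinement (a sequential product of non-commuting tests is not even self-adjoint, and an operator inequality of the form $\wh\Pi\le \wh T_{\mathcal L}+O(h)\,\id$ does not follow from small commutators without quantitative control you have not supplied). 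Sen's actual construction does not take the individually optimal tests and force them to commute; it builds the augmented state by randomly tilting supports into extra dimensions indexed by $Y$ so that the relevant ``bad'' subspaces become nearly orthogonal in the enlarged space, and then defines one projector from these tilted supports --- the operator-concentration work you wave at is the whole proof. As written, your text is a plausible roadmap of why such a statement could be true, but every quantitative claim (the compatibility $D_H^\eps(\wh\rho\Vert\wh\rho^{(\mathcal L)})\gtrsim D_H^\eps(\rho\Vert\rho^{(\mathcal L)})$, the union bound with loss only in $g,h$, the $d_Y\to\infty$ decay of $h$) is left unproven.

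For comparison, the paper does not attempt an internal proof at all: \cref{thm:conditional} is established by instantiating Sen's Lemma~1 with a dictionary of notation ($k_{\text{Sen}}=N$, $c_{\text{Sen}}=M$, $\delta_{\text{Sen}}=\eps^{1/4N}$, the augmenting systems made explicit as $Y$), reading off properties~1--3 from Sen's statements~2--4, and recording explicit universal functions, e.g.\ $f(N,M,\eps)=2^{(N+M)/2+1}\eps^{1/4N}$ and $h(N,M,d_A,d_Y)=3\cdot 2^N d_A\, d_Y^{-1/2(N+M)}$. So either the theorem is proved by citation and translation, as in the paper, or one must reproduce Sen's tilting and concentration argument in full; your proposal does neither.
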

\begin{proof}
This follows readily from Sen's Lemma~1 in~\cite{senInPrep} with an appropriate change of notation and suitable simplifications. We use Sen's terminology and notation.
We choose
$k_{\text{Sen}} \equiv N$,
$c_{\text{Sen}} \equiv M$,
$\mathcal L_{\text{Sen}}$ a system isomorphic to our $Y_k$,
$\delta_{\text{Sen}}=\eps^{1/3N}$,
and the same error $\eps$ for each pseudosubpartition of $[M] \sqcup [N]$.
We denote $\wh A_k \equiv (A''_k)_{\text{Sen}}$, so that $(A'_k)_{\text{Sen}} = \wh A_k Y_k$ and $(X'_k)_{\text{Sen}} = X_k Y_k$; that is, we explicitly include the augmenting systems in our notation.
We also write $\hat J_k$ for the natural embedding $A_k \hookrightarrow A''_k$.
Then Sen's lemma yields a state $\wh\rho_{X\wh AY} \equiv \rho'_{\text{Sen}}$ and a POVM $\wh\Pi_{X\wh AY} \equiv \Pi'_{\text{Sen}}$ that satisfies all desired properties.
First, Statement~1 in Sen's lemma asserts that $\wh\rho_{X\wh AY}$ and $\wh\Pi_{X\wh AY}$ are cqc.
Next, our properties~\ref{it:thm conditional 1} and \ref{it:thm conditional 2} are direct restatements of his statements~2 and 3, with $f(N,M,\eps) = 2^{(N+M)/2+1}\eps^{1/3N}$ and $g(N,M,\eps) = 2^{2^{MN+4} (N+1)^N} 2^{(N+M)^2} \eps^{1/3} + 2^{(N+M)/2+1} \eps^{1/3N}$.
% Here I bounded the number of pseudosubpartitions of [M] \sqcup [N] very losely by 2^{(N+M)^2}.
Finally, we apply statement~4 in Sen's lemma to a subpartition $\mathcal L$ covering $[N]$ and the probability distribution $q_{\text{Sen}}(x) = p_{X_{\ol S}}(x_{\ol S}) \prod_{L\in\mathcal L} p_{X_{L \cap [M]}|X_{\ol S}}(x_{L\cap [M]}|x_{\ol S})$.
Then our $\rho^{(\mathcal L)}_{XA}$ is Sen's $\rho_{(S_1,\dots,S_l)}$ and our $\wh\rho^{(\mathcal L)}_{X\wh AY}$ is Sen's $\rho'_{(S_1,\dots,S_l)}$, so we obtain property~\ref{it:thm conditional 3} with $h(N,M,d_A,d_Y) = 3 \, 2^N d_A d_Y^{-1/2(N+M)}$.
\end{proof}

Now, we prove a lemma that abstractly expresses sufficient properties a multiparty encoding needs to satisfy so that we can use~\cref{thm:conditional} to obtain a simultaneous decoder. We then prove~\cref{lem:packing} by showing that the random codebook generated by a multiplex Bayesian network, that is, a Markov encoding, satisfies such properties. Thus, this lemma is a generalization of~\cref{lem:packing}. We use the notation $X \equiv X_1 \dots X_k$ to denote set of $k \in \N$ systems.
% to make explicit that we have $k \in \N$ systems (this should not be confused with the i.i.d.\ situation of~\cref{lem:packingAsympt}).

\begin{lem}\label{lem:packingGen}
  Let $\{ p_{X}, \rho_{B}^{(x)}\}$ be an ensemble of quantum states, where $X \equiv X_1 \dots X_k$ with $k \in \N$, $\mathcal I = \mathcal{I}_1 \times \mathcal{I}_2$ an index set, and $\varepsilon \in (0,1)$ a small parameter.
  Now, let $\mathcal{C} =\left\{ x(i) \right\}_{ i \in \mathcal{I} }$ be a family of random variables such that for every $i \in \mathcal{I}$, $x(i) \sim p_{X_1 \cdots X_k}$, and there exists a map%
  \footnote{Note that the bound does not depend on the specific choice of the map.}
  $\Psi: \mathcal{I} \times \mathcal{I} \to \mathcal{P}([k])$ such that for every $i, i' \in \mathcal{I}$, letting $T \equiv \Psi(i,i')$,
  \begin{enumerate}
    \item\label{it:cond1} $x_\ol{T}(i) = x_\ol{T}(i')$ as random variables
    \item\label{it:cond2} $x_{T}(i), x_{T}(i')$ are independent conditioned on $x_{\ol{T}}(i)$ ($= x_{\ol{T}}(i')$),
  \end{enumerate}
  where $\ol T \equiv [k] \setminus T$. Then, for each $i_1 \in \mathcal{I}_1$ there exists a POVM $\{ Q^{(i_2 | i_1)}_B \}_{i_2 \in \mathcal{I}_2}$ dependent on the random variables in $\mathcal{C}$ such that for all $i =(i_1, i_2) \in \mathcal{I}$,
  \begin{align*}
    & \EE_\mathcal{C}\left[ \tr[(I - Q_B^{(i_2 | i_1)}) \rho^{(x(i_1, i_2))}_B ] \right] \\
    & \le  f(k, \varepsilon) + 4 \sum_{i_2' \neq i_2}^{} 2^{-D_H^\varepsilon(\rho_{X B} \Vert \rho_{X B}^{(\{X_S, B \})})},
  \end{align*}
  where $\EE_\mathcal{C}$ is the expectation over the random variables in $\mathcal{C}$, $S \equiv \Psi((i_1,i_2), (i_1,i_2'))$, and
  \begin{align*}
    \rho_{X B} \equiv \sum_{x} p_{X}(x) \state{x}_{X} \otimes \rho^{(x)}_B.
  \end{align*}
  Furthermore, $f(k,\eps)$ is a universal function such that $\lim_{\varepsilon \to 0} f(k,\varepsilon) = 0$.
\end{lem}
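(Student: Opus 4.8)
The plan is to feed \cref{thm:conditional} the ensemble $\{p_X,\rho_B^{(x)}\}$ with a single quantum system ($A=B$, i.e.\ $N=1$) and the $k$ classical systems $X=X_1\cdots X_k$ (i.e.\ $M=k$), at the same small parameter $\eps$ and with an augmenting dimension $d_Y$ that I would leave free and only send to infinity at the very end. This yields an isometry $\wh J\colon B\to\wh B$, a Sen state $\wh\rho_{X\wh BY}$, and a cqc success-POVM element $\wh\Pi_{X\wh BY}=\sum_{x,y}\state x_X\ot\wh\Pi^{(x,y)}_{\wh B}\ot\state y_Y$. From it I would define, for each classical string $\xi\in\mathcal X$, the ``slice'' $\wh\Lambda^{(\xi)}\equiv\sum_y\wh\Pi^{(\xi,y)}_{\wh B}\ot\state y_Y$ on $\wh BY$; since $0\le\wh\Pi_{X\wh BY}\le I$ one has $0\le\wh\Lambda^{(\xi)}\le I$.

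Next I would construct the decoder. For fixed $i_1\in\mathcal I_1$, the slices $\{\wh\Lambda^{(x(i_1,i_2'))}\}_{i_2'\in\mathcal I_2}$ attached to the codewords of $\mathcal C$ define a square-root (``pretty good'') measurement on $\wh BY$, with elements $\wh Q^{(i_2'\mid i_1)}_{\wh BY}\equiv\big(\sum_{i_2''}\wh\Lambda^{(x(i_1,i_2''))}\big)^{-1/2}\wh\Lambda^{(x(i_1,i_2'))}\big(\sum_{i_2''}\wh\Lambda^{(x(i_1,i_2''))}\big)^{-1/2}$ (pseudo-inverse) together with a complement element. Pulling this back through the isometric channel $\rho_B\mapsto\wh J\rho_B\wh J^\dagger\ot\tau_Y$, i.e.\ setting $Q^{(i_2'\mid i_1)}_B\equiv d_Y^{-1}\,\wh J^\dagger\tr_Y[\wh Q^{(i_2'\mid i_1)}_{\wh BY}]\wh J$, gives a POVM on $B$ depending only on $\mathcal C$ (a one-line check using $\wh J^\dagger\wh J=I_B$ confirms completeness), and reduces the left-hand side of the claimed bound to $\EE_{\mathcal C}\big[\tr[(I-\wh Q^{(i_2\mid i_1)}_{\wh BY})(\wh J\rho_B^{(x(i_1,i_2))}\wh J^\dagger\ot\tau_Y)]\big]$, so the whole analysis can be run on $\wh BY$.

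The error analysis would then have three steps. (i) \emph{Passing to the Sen state.} Since each $x(i)\sim p_X$, the $\mathcal C$-average of $\big\|\wh J\rho_B^{(x(i))}\wh J^\dagger\ot\tau_Y-\wh\rho^{(x(i))}_{\wh BY}\big\|_1$ equals $\big\|\wh\rho_{X\wh BY}-(\id_X\ot\wh J)\rho_{XB}(\id_X\ot\wh J)^\dagger\ot\tau_Y\big\|_1\le f(1,k,\eps)$ by property~\ref{it:thm conditional 1}, where $\wh\rho^{(\xi)}_{\wh BY}\equiv d_Y^{-1}\sum_y\wh\rho^{(\xi,y)}_{\wh B}\ot\state y_Y$; because $0\le I-\wh Q\le I$ this lets me replace the received state by $\wh\rho^{(x(i))}_{\wh BY}$ at a single additive cost $f(1,k,\eps)$. (ii) \emph{Hayashi--Nagaoka.} The operator inequality $I-\wh Q^{(i_2\mid i_1)}_{\wh BY}\le2(I-\wh\Lambda^{(x(i_1,i_2))})+4\sum_{i_2'\ne i_2}\wh\Lambda^{(x(i_1,i_2'))}$ splits the error into a ``correct'' term and $|\mathcal I_2|-1$ ``confusion'' terms; in the correct term the $X$-register reassembles the slices, so $\EE_{x(i_1,i_2)}\big[\tr[(I-\wh\Lambda^{(x(i_1,i_2))})\wh\rho^{(x(i_1,i_2))}_{\wh BY}]\big]=\tr[(I-\wh\Pi_{X\wh BY})\wh\rho_{X\wh BY}]\le g(1,k,\eps)$ by property~\ref{it:thm conditional 2}. (iii) \emph{Confusion terms.} For $i_2'\ne i_2$ I would put $S\equiv\Psi((i_1,i_2),(i_1,i_2'))$; by condition~\ref{it:cond1} the codewords $x(i_1,i_2)$ and $x(i_1,i_2')$ agree on their $\ol S$-components, by condition~\ref{it:cond2} their $S$-components are conditionally independent given that common value, and each is distributed as $p_X$, so the $\mathcal C$-average of $\tr[\wh\Lambda^{(x(i_1,i_2'))}\wh\rho^{(x(i_1,i_2))}_{\wh BY}]$ collapses onto the subpartition $\mathcal L=\{S,\{B\}\}$ of $[k]\sqcup[1]$ (for which $[M]\cap\bigcup(\mathcal L)=S$ and $\rho_{XB}^{(\mathcal L)}=\rho_{XB}^{(\{X_S,B\})}$); identifying it with the left-hand side of property~\ref{it:thm conditional 3} then bounds it by $2^{-D_H^\eps(\rho_{XB}\Vert\rho_{XB}^{(\{X_S,B\})})}+h(1,k,d_B,d_Y)$.

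Putting the pieces together, the error would be at most $f(1,k,\eps)+2g(1,k,\eps)+4\sum_{i_2'\ne i_2}2^{-D_H^\eps(\rho_{XB}\Vert\rho_{XB}^{(\{X_S,B\})})}$ plus finitely many $h(1,k,d_B,d_Y)$ terms; since $d_Y$ is mine to choose and never appears in the final statement, I can take it large enough that this last sum is at most $f(1,k,\eps)$, and then $f(k,\eps)\equiv2f(1,k,\eps)+2g(1,k,\eps)$ --- universal and vanishing as $\eps\to0$ --- gives the claim. The hard part is step (iii): one must keep precise track of the roles of the augmenting registers $Y_1\cdots Y_{k+1}$ of \cref{thm:conditional} --- which of them are maximally mixed and which stay correlated with $\wh B$ in $\wh\rho^{(\mathcal L)}_{X\wh BY}$ --- and verify that conditions~\ref{it:cond1}--\ref{it:cond2} are exactly what is needed to make the codebook average coincide, up to errors controlled by property~\ref{it:thm conditional 1}, with $\tr[\wh\Pi_{X\wh BY}\wh\rho^{(\mathcal L)}_{X\wh BY}]$, \emph{without} the residual trace-norm mismatches accumulating over the $|\mathcal I_2|$ confusion terms. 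This is precisely where the full strength of Sen's lemma and its cqc structure are used; a naive packing argument would here pick up an unwanted factor of $|\mathcal I_2|$.
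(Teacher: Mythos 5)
Your overall architecture is the same as the paper's (invoke \cref{thm:conditional} with $N=1$, $M=k$; build a square-root measurement from slices of Sen's cqc POVM; pull it back through $\wh J$; use property~1, Hayashi--Nagaoka and property~2 for the correct term; use property~3 for the confusion terms; absorb $h$ by taking the augmenting dimension large). However, there is a genuine gap in step~(iii), exactly at the point you flag as ``the hard part''. In your construction the slice $\wh\Lambda^{(\xi)}=\sum_y\wh\Pi^{(\xi,y)}_{\wh B}\ot\state{y}_Y$ is diagonal in $Y$, and the Sen state you measure, $\wh\rho^{(\xi)}_{\wh B Y}=d_Y^{-1}\sum_y\wh\rho^{(\xi,y)}_{\wh B}\ot\state{y}_Y$, is classically correlated with that same $Y$ register. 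Averaging over the codebook, the confusion term therefore becomes
\begin{equation*}
  \frac{1}{d_Y}\sum_{y}\;\sum_{x_{\ol S},x_S,x'_S} p(x_{\ol S})\,p(x_S|x_{\ol S})\,p(x'_S|x_{\ol S})\,
  \tr\bigl[\wh\Pi^{(x'_S x_{\ol S},\,y)}_{\wh B}\,\wh\rho^{(x_S x_{\ol S},\,y)}_{\wh B}\bigr],
\end{equation*}
in which the \emph{entire} augmenting string $y$ --- including the components $y_S$ attached to the differing classical blocks --- is shared between the POVM slice and the state. Property~3 of \cref{thm:conditional} bounds a different quantity: in $\wh\rho^{(\mathcal L)}_{X\wh B Y}$ the block $X_S Y_S$ is decoupled from the $\wh B$ side, i.e.\ the $y_S$ read by the cqc POVM is averaged \emph{independently} of the $y_S$ in the superscript of the $\wh B$-conditional state. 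Your conditions~1--2 only decouple the $x_S$ components of the codewords; they say nothing about the augmenting registers, so the claimed identification with $\tr[\wh\Pi_{X\wh B Y}\wh\rho^{(\mathcal L)}_{X\wh B Y}]$ fails. (Keeping instead the original state $\wh J\rho^{(x(i))}_B\wh J^\dagger\ot\tau_Y$ in the confusion terms gives $\tr[\wh\Pi_{X\wh BY}((\id\ot\wh J)\rho^{(\{X_S,B\})}_{XB}(\id\ot\wh J)^\dagger\ot\tau_Y)]$, which is again not the quantity property~3 controls, and bridging the two costs a trace-distance term per wrong message --- precisely the factor $\abs{\mathcal I_2}$ you wanted to avoid.)

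The paper closes this hole with a randomization device your proposal omits: for every letter $x_j\in\mathcal X_j$ it draws $y_j(x_j)$ uniformly at random, attaches to each codeword the string $y(i)=(y_1(x_1(i)),\dots,y_k(x_k(i)))$, and slices the POVM at the \emph{joint} value $(x(i),y(i))$, keeping only the $B$-side augmenting register ($Z$ in the paper's notation) inside the measured system with $\tau_Z$ appended. Codewords agreeing on $\ol S$ then share $y_{\ol S}$, while codewords with conditionally independent $S$-components carry (essentially) independent $y_S$, so the codebook average of each confusion term takes exactly the decoupled form $\tr[\wh\Pi_{X\wh B Y Z}\,\wh\rho^{(\{X_S Y_S,\wh B Z\})}_{X\wh B Y Z}]$ required by property~3; a final derandomization over the choice of the functions $y_j(\cdot)$ removes the extra randomness so that the POVM depends on $\mathcal C$ alone, as the lemma demands. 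Without this correlated $y$-codebook (or an equivalent mechanism) your step~(iii) does not go through, even though the rest of your argument, including the pull-back of the measurement and the bookkeeping of $f$, $g$ and $h$, is sound and matches the paper.
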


Before we prove~\cref{lem:packingGen}, we first show that~\cref{lem:packing} follows from~\cref{lem:packingGen}.% by establishing that the random codebook generated by \algref satisfies the required properties.

\begin{proof}[Proof of~\cref{lem:packing}]
  Fix subgraph $H$, $\{ \rho_B^{(x_H)} \}_{x_H \in \mathcal{X}_H}$, $D \subseteq J_H$, $\varepsilon \in (0,1)$.
  We invoke~\cref{lem:packingGen} with the ensemble $\{ p_{X_H}, \rho_B^{(x_H)} \}$ with $k = \abs{V_H}$, $\mathcal{I}_1 = M_{\ol D}$, $\mathcal{I}_2 = M_D$, the same $\varepsilon$, and the family of random variables $\mathcal{C} = C_H$.
  We thus identify $\mathcal{I}= M_H = M_D \times M_{\ol D}$. We also define an arbitrary ordering on $V_H$ such that we can identify it with $[k]$.

  We check that $C_H$ satisfies the required properties using the observations we made regarding \algref. First, for every $m_H \in M_H$, $x_H(m_H) \sim p_{X_H}$ by observation~\ref{it:obs1} on p.~\pageref{it:obs1}.

  Next, we claim the map
  \begin{align*}
    & \Psi(m_H, m_H') \\
    & \equiv \left\{ v \in V_H \;\vert\; \exists j \in \ind(v) \text{ such that } (m_H)_j \neq (m'_H)_j \right\}
  \end{align*}
  satisfies the required conditions.
  Let $m_H, m_H' \in M_H$ and $T = \Psi(m_H, m_H')$.
  By definition, given $v \in \ol T$, for all $j \in \ind(v)$, $(m_H)_j = (m_H')_j$.
  Hence, $m_H \vert_{\ind(v)} = m_H' \vert_{\ind(v)}$, so by observation~\ref{it:obs2} on p.~\pageref{it:obs2}, $x_v(m_H) = x_v(m'_H)$ as random variables.
  Thus, $x_{\ol T}(m_H) = x_{\ol T}(m'_H)$ as random variables, so we have established condition~\ref{it:cond1}.

  We now prove the conditional independence statement in condition~\ref{it:cond2} is satisfied.
  For $\xi_{\ol T} \in \mathcal X_{\ol T}$, observation~\ref{it:obs1} shows that
  \begin{align*}
    \Pr\left(x_{\ol T}(m_H) = \xi_{\ol T}\right) = \prod_{v\in \ol T} p_{X_v | X_{\pa(v)}}\left(\xi_v | \xi_{\pa(v)}\right),
  \end{align*}
  where we used that $\pa(\bar T) \subseteq \bar T$ as a consequence of~\cref{eq:ind condition}.
  Next, observation~\ref{it:obs3} implies that the joint distribution of $x_T(m_H)$, $x_T(m'_H)$, and $x_{\ol T}(m_H)$ is given as follows.
  For $\xi, \xi' \in \mathcal X$ such that~$\xi_{\ol T} = \xi'_{\ol T}$,
  \begin{align*}
    &\Pr\left(x_T(m_H) = \xi_T, x_T(m'_H) = \xi'_T, x_{\ol T}(m_H) = \xi_{\ol T}\right)\\
    & =\Pr\left(x(m_H) = \xi, x(m'_H) = \xi'\right)
  \\ &= \left( \prod_{v\in \ol T} p_{X_v|X_{\pa(v)}}(\xi_v|\xi_{\pa(v)}) \right)
  \left( \prod_{v\in T} p_{X_v|X_{\pa(v)}}(\xi_v|\xi_{\pa(v)}) \right)\\
  & \quad \, \left( \prod_{v\in T} p_{X_v|X_{\pa(v)}}(\xi'_v|\xi'_{\pa(v)}) \right).
  \end{align*}
  Hence, $x_T(m_H)$ and $x_T(m'_H)$ are independent conditional on $x_{\ol T}(m_H)$.
 ~\cref{lem:packing} in the form given in~\cref{eq:packing2} then directly follows from applying~\cref{lem:packingGen}.
\end{proof}

Next, we prove that~\cref{lem:packingAsympt} follows from~\cref{lem:packing}.
\begin{proof}[Proof of~\cref{lem:packingAsympt}]
  This follows from~\cref{lem:packing} by replacing $X$ with $n \in \N$ i.i.d.\ copies of itself, $X^n$. Then, associating each $v \in V$ with $X_v^n$, $(G, X^n, M, \ind)$ is a multiplex Bayesian network.

  We now apply \algref~with $(G, X^n, M, \ind)$ as input. This is equivalent to applying it with $(G,X,M,\ind)$ $n$ times. Then, applying~\cref{lem:packing} with inputs $H$, $\{ \bigot_{i=1}^n \rho_{B_i}^{(x_{i,H})} \}_{x_H^n \in \mathcal{X}^n_H}$, $D$, $\varepsilon(n) \in (0,1)$, we obtain a POVM $\{ Q_{B^n}^{(m_D \vert m_{\ol D})} \}_{m_D \in M_D}$ for each $m_{\ol D} \in M_{\ol D}$ such that, for $(m_D, m_{\ol D}) \in M_H$,
  \begin{align*}
    &\quad \EE_{C_H^n}\left[ \tr\left[ (I-Q_{B^n}^{(m_D \vert m_{\ol D})}) \bigot_{i=1}^n \rho_{B_i}^{((x_i)_H(m_D, m_{\ol D}))} \right] \right] \\
    &\leq f(\abs{V_H}, \varepsilon(n)) \\
    & \quad + 4 \sum_{\O \neq T \subseteq D}^{} 2^{(\sum_{t \in T}^{} R_t) - D_H^{\varepsilon(n)}\left(\rho_{X_H^n  B^n} \Big\Vert \rho_{X_H^n B^n}^{(\{X^n_{S_T}, B^n \})}\right)}.
  \end{align*}
  Consider now
  \begin{align*}
    \rho_{X_H^{n} B^n} & = \sum_{x_H^n}^{} p_{X_H}^{\ot n}(x_H^n) \state{x_H^n}_{X_H^n} \ot \bigot_{i=1}^n \rho_{B_i}^{((x_i)_H)}
  \end{align*}
  and
  \begin{align*}
    \rho_{X_H^n B^n}^{(\{X^n_{S_T}, B^n \})} = \sum_{x_H^n}^{} p_{X_H}^{\ot n}(x_H^n) \state{x^n_H}_{X_H^n} \ot \rho_{B^n}^{( x^n_{\ol{S_T}})}.
    %\sum_{x^n_{\ol S_T}}^{} p_{X_{\ol S_T}}^{\ot n}(x^n_{\ol S_T}) \state{x^n_{\ol S_T}}_{X_{\ol S_T}^n} \ot \rho_{X_{S_T}^n}^{(x_{\ol S_T}^n)} \ot \rho_{B^n}^{( x^n_{\ol S_T})}.
  \end{align*}
  It is not difficult to see that
  \begin{align*}
    \rho_{X_H^n B^n} & = \left( \sum_{x_H}^{} p_{X_H}(x_H) \state{x_H}_{X_H} \ot \rho_B^{(x_H)} \right)^{\ot n} \\
    & = \rho_{X_H B}^{\ot n},
  \end{align*}
  which conveniently justifies this slight abuse of notation. Furthermore, considering
  \begin{align*}
    \rho_{B^n}^{(x_{\ol{S_T}}^n)} &= \sum_{x_{S_T}^n}^{} p_{X_{S_T} | X_{\ol{S_T}}}^{\ot n}(x_{S_T}^n | x_{\ol{S_T}}^n) \bigot_{i=1}^n \rho_{B_i}^{( (x_i)_{S_T} (x_i)_{\ol{S_T}})} \\
    & = \bigot_{i=1}^n \rho_{B_i}^{(x_i)_{\ol{S_T}}},
  \end{align*}
  we likewise conclude
  \begin{align*}
    \rho_{X_H^{n} B^n}^{(\{X^n_{S_T}, B^n \})} =  \left(\rho_{X_H B}^{(\{X_{S_T}, B \})}\right)^{\ot n}.
  \end{align*}
  The conclusion therefore follows by~\cref{eq:CMI} where we choose $\eps(n), \delta(n)$ such that $\varepsilon(n) \to 0$ so that $f(\abs{V_H}, \varepsilon(n)) \to 0$ but also for all $\O \neq T \subseteq D$,
  \begin{align*}
    2^{n(\sum_{t \in T}^{} R_t) - D_H^{\varepsilon(n)}(\rho_{X_H  B}^{\ot n} \Vert (\rho_{X_H B}^{(\{X_{S_T}, B \})} )^{\ot n})} \to 0
  \end{align*}
  when the rate inequalities are satisfied. Given~\cref{eq:DHepsBound}, one possibility is $\varepsilon(n) = 1/n$ and $\delta(n) = n^{-1/4}$. This concludes the proof.
\end{proof}
Finally, we prove~\cref{lem:packingGen}. Note that~\cref{thm:conditional} gives a pair $\wh \rho, \wh \Pi$ that satisfy joint typicality properties but live in a larger Hilbert space. In order to prove~\cref{lem:packing}, which claims the existence of a POVM on the original Hilbert space, we need to construct the corresponding POVM in the larger Hilbert space and then appropriately invert the isometry. %You cannot do this for the original state-projector pair, as shown in Mario's paper. However, this does not stop you from getting a POVM. Hence, there is a distinction between a joint typicality statement and a packing lemma statement, which is what you really need for Shannon theory applications.
There is also an extra classical system $Y$ associated with the $X$ systems, which we can interpret as an additional random codebook. We use a conventional derandomization argument to eliminate it from the statement. The extra $Y$'s associated with the $B$ systems we simply trace over.
\begin{proof}[Proof of~\cref{lem:packingGen}]
  We invoke~\cref{thm:conditional} with inputs the $\rho_{X B}$, $\eps$, and a classical system $Y Z$. Here $X \equiv X_1 \dots X_k$, $Y \equiv Y_1 \dots Y_k$ and $Z$ is a classical system associated with $B$, to obtain a quantum state $\wh \rho_{X \wh B Y Z}$ and POVM $\wh \Pi_{X \wh B Y Z}$ which we can expand as follows:
  \begin{align*}
    \wh \rho_{X \wh B Y Z} = \bigoplus_{x, y} p_{X}(x) \state{x}_{X} \otimes \frac{1}{d_{Y}} \state{y}_{Y} \ot \wh \rho_{\wh B Z}^{(x, y)}
  \end{align*}
  \begin{align*}
    \wh \Pi_{X \wh B Y Z} = \bigoplus_{x,  y} \state{x}_{X} \otimes \state{y}_{Y} \ot \wh \Pi_{\wh B Z}^{(x, y)}.
  \end{align*}
  Now, for every $x_j \in \mathcal X_j$, draw $y_j(x_j)$ uniformly at random from~$\mathcal Y_j$, and consider the random vectors~$y(x) := (y_1(x_1),\dots,y_k(x_k))$.
  We use these random vectors and the codebook $\mathcal C = \{ x(i) \}_{i\in\mathcal I}$ to define a codebook $\mathcal{C}' = \{ y(i) \}_{i\in\mathcal I}$, where we set $y(i) = y(x(i))$.
  We also define the joint codebook~$\mathcal{C}'' = \{ x(i)y(i) \}_{i \in \mathcal I}$.
  Then, for every $i, i' \in \mathcal{I}$, letting $T \equiv \Psi(i,i')$, the following holds:
  \begin{enumerate}
    \item $x_\ol{T}(i) y_\ol{T}(i)  = x_\ol{T}(i') y_\ol{T}(i') $ as random variables,
    \item $x_{T}(i) y_{T}(i)$ and $x_{T}(i') y_{T}(i')$ are independent conditioned on $x_\ol{T}(i) y_\ol{T}(i) $ ($=x_\ol{T}(i') y_\ol{T}(i')$),
  \end{enumerate}
  with probabilities
  \begin{align*}
    &p_{X_\ol{T} Y_\ol{T}}(x_\ol{T}, y_\ol{T})  =    p_{X_\ol{T} }(x_\ol{T} ) \cdot p_{Y_\ol{T}}(y_\ol{T}) =  \frac{1}{d_{Y_\ol{T}}}  p_{X_\ol{T} }(x_\ol{T} ) \\
    &p_{X_{T} Y_{T}| X_\ol{T} Y_\ol{T}}(x_{T}, y_{T} |x_\ol{T}, y_\ol{T}) =   \frac{1}{d_{Y_{T} }} p_{X_{T} |X_\ol{T} }(x_{T}|x_\ol{T} ).
  \end{align*}
  Define the indexed objects:
  \begin{align*}
    \wh \rho^{(i)}_{\wh B Z} \equiv \wh \rho_{\wh B Z}^{(x(i), y(i))} \quad \text{and} \quad
    \wh \Pi^{(i)}_{\wh B Z} \equiv \wh \Pi_{\wh B Z}^{(x(i), y(i))}.
  \end{align*}
  We then define the square-root measurement
  \begin{align*}
    \wh Q_{\wh B Z}^{(i_2| i_1)} \equiv \left( \sum_{i_2' \in \mathcal{I}_2}^{} \wh \Pi^{(i_1, i_2')}_{\wh B Z} \right)^{-1/2} \wh \Pi_{\wh B Z}^{(i_1, i_2)} \left( \sum_{i_2' \in \mathcal{I}_2}^{} \wh \Pi^{(i_1, i_2')}_{\wh B Z} \right)^{-1/2}
  \end{align*}
  and ``invert'' the isometry $\wh J$ to obtain the following family of POVM's on the original Hilbert space:
  \begin{align*}
    Q_B^{(i)} = Q_B^{(i_2|i_1)} \equiv \frac{1}{d_Z} (\wh J_{B \to \wh B})^\dagger \tr_Z\left[ \wh Q_{\wh B Z}^{(i)} \right] \wh J_{B \to \wh B}.
  \end{align*}
  Note that we have a POVM for each value of $i_1$ and these POVM's are dependent on our random encoding $x(i)$ \emph{and} random choice of $y(i)$.

  Now, fixing $i = (i_1, i_2) \in \mathcal{I}$, we compute the probability of error averaged over the random choice of $x(i)$ \emph{and} $y(i)$, denoting this by $\EE \equiv \EE_{\mathcal{C}''}$:
  \begin{align*}
    & \EE\tr\left[\left(I- Q^{(i)}_B\right) \rho^{(i)}_B\right] \\
    & = 1- \EE \tr\left[ Q_B^{(i)} \rho^{( i)}_B \right]\\
    & = 1- \EE \tr\left[ \wh Q^{(i)}_{\wh B Z} \left( \wh J_{B \to \wh B} \rho_B^{( i)} \wh J_{B \to \wh B}^\dag \ot \tau_Z \right)\right]\\
    & \le 1- \EE \tr\left[ \wh Q^{(i)}_{\wh B Z} \wh \rho_{\wh B Z}^{( i)} \right] + \EE \norm{ \wh J_{B \to \wh B} \rho^{( i)}_B \wh J_{B \to \wh B}^\dag \ot \tau_Z - \wh \rho_{\wh B Z}^{( i)} }_1\\
    & = 1- \EE \tr\left[ \wh Q^{(i)}_{\wh B Z} \wh \rho_{\wh B Z}^{( i)} \right] + \\
    & \norm{ \left(\id_{X B} \ot \wh J_{B \to \wh B}\right) \rho_{X B} \left(\id_{X B} \ot \wh J_{B \to \wh B}^\dag\right) \ot \tau_{Y Z} - \wh \rho_{X \wh B Y Z} }_1\\
    & \le 1- \EE \tr\left[ \wh Q^{(i)}_{\wh B Z} \wh \rho_{\wh B Z}^{( i)} \right] + f(1, k,\eps)\\
    & \le 2 \left( 1- \EE \tr\left[ \wh \Pi^{(i)}_{\wh B Z} \wh \rho_{\wh B Z}^{( i)} \right] \right) + 4 \sum_{i_2' \neq i_2}^{} \EE \tr\left[  \wh \Pi^{(i_1, i_2')}_{\wh B Z} \wh \rho_{\wh B Z}^{(i_1, i_2)}\right] \\
    & \quad + f(1, k,\eps)\\
    & \le 4 \sum_{i_2' \neq i_2}^{} \EE \tr\left[  \wh \Pi^{(i_1, i_2')}_{\wh B Z} \wh \rho_{\wh B Z}^{( i_1, i_2)}\right] + f(1, k,\eps) + 2 g(1, k, \eps) ,
  \end{align*}
  where in the last three inequalities we used~\cref{thm:conditional} and the Hayashi-Nagaoka lemma~\cite{hayashi2003general,wilde2013quantum}.

  We consider the first term. Let $S = \Psi( (i_1,i_2), (i_1, i_2'))$. Note that by our conditions on the random codebook, the codewords are equal as random variables on $\ol{S}$ and hence we obtain~\cref{eq:codebookConditions}.
  \begin{figure*}[!t]
    \normalsize
    \begin{align} \label{eq:codebookConditions}
    & 4 \sum_{i_2' \neq i_2}^{} \EE \tr\left[  \wh \Pi^{(i_1, i_2')}_{\wh B Z} \wh \rho_{\wh B Z}^{( i_1,i_2)}\right] \nonumber\\
    &  = 4 \sum_{i_2' \neq i_2}^{} \EE_{X X^{\prime} Y Y^{\prime}} \tr\left[  \wh \Pi^{(i_1, i_2')}_{\wh B Z} \wh \rho_{\wh B Z}^{( i_1,i_2)}\right]\nonumber\\
    &  = 4 \sum_{i_2' \neq i_2}^{} \tr\left[  \EE_{X_\ol{S} Y_\ol{S}} \left[ \EE_{X_S' Y_S' \vert X_\ol{S} Y_\ol{S}} \left(\wh \Pi^{(i_1, i_2')}_{\wh B Z}\right) \EE_{X_S Y_S | X_\ol{S} Y_\ol{S}} \left( \wh \rho_{\wh B Z}^{( i_1, i_2)}\right) \right] \right] \nonumber\\
    &  = 4  \sum_{i_2' \neq i_2}^{} \tr \left[ \sum_{x_\ol{S}, y_\ol{S}} p(x_\ol{S}) \frac{1}{d_{Y_\ol{S}}}  \sum_{x_S', y_S'}^{} p(x_S' | x_\ol{S}) \frac{1}{d_{Y_S}} \wh \Pi^{(x_S' x_\ol{S}, y_S' y_\ol{S})}_{\wh B Z} \sum_{x_S, y_S}^{} p(x_S| x_\ol{S}) \frac{1}{d_{Y_S}} \wh \rho_{\wh B Z}^{( x_S x_\ol{S}, y_S y_\ol{S})} \right]\nonumber\\
    &  = 4  \sum_{i_2' \neq i_2}^{} \tr \left[ \sum_{x_\ol{S}, y_\ol{S}} p(x_\ol{S}) \frac{1}{d_{Y_\ol{S}}}  \sum_{x_S', y_S'}^{} p(x_S' | x_\ol{S}) \frac{1}{d_{Y_S}} \wh \Pi^{(x_S' x_\ol{S}, y_S' y_\ol{S})}_{\wh B Z} \wh \rho_{\wh B Z}^{( x_\ol{S}, y_\ol{S})} \right]\nonumber\\
    &  = 4  \sum_{i_2' \neq i_2}^{} \tr \left[ \sum_{x, y} p(x) \frac{1}{d_{Y}} \wh \Pi^{(x, y)}_{\wh B Z} \wh \rho_{\wh B Z}^{( x_\ol{S}, y_\ol{S})} \right]\nonumber\\
    &  = 4 \sum_{i_2' \neq i_2}^{} \tr\left[  \wh \Pi_{X \wh B Y Z} \wh \rho_{X \wh B Y Z}^{(\{X_{S} Y_S, \hat{B}Z \})} \right] \nonumber\\
    & \le 4 \sum_{i_2' \neq i_2}^{} 2^{-D_H^\eps(\rho_{X B} \Vert \rho_{X B}^{(\{X_{S}, B \})})} +\eps^{1/3}.
  \end{align}
  \hrulefill
\end{figure*}
  In the first two equalities we use the notation $X \equiv x(i_1, i_2), X^{\prime} \equiv x(i_1, i_2')$ and similarly for $Y, Y^{\prime}$. In the fourth equality $\wh \rho_{\hat B Z}^{(x_\ol{S}, y_\ol{S})}$ is the marginal of the conditional density operator $\wh \rho_{X_S \wh B Y_S Z}^{(x_\ol{S}, y_\ol{S})}$.
  In the last inequality we use~\cref{thm:conditional} and choose the dimensions of $Y, Z$ to be sufficiently large so that $h(1,k, d_B, d_Y d_Z) \le \varepsilon^{1/3}$.

  Finally, we can invoke the usual derandomization argument to remove the dependency of our POVM on the choice of $y(i)$. That is, we know that
  \begin{align*}
    & \EE \tr\left[ (I- Q^{(i)}_B) \rho_B^{(i)} \right] \\
    & = \EE_{\mathcal{C}'} \EE_\mathcal{C} \tr\left[ (I- Q^{(i)}_B) \rho_B^{(i)} \right] \\
    & \le \eps^{1/3} + f(1, k, \eps) + 2 g(1, k, \eps)+ 4 \sum_{i_2' \neq i_2}^{} 2^{-D_H^\eps(\rho_{X B} \Vert \rho_{X B}^{(\{X_{S}, B \})})}.
  \end{align*}
  Hence, there is a particular choice of $y(i)$ such that the corresponding POVM $Q^{(i_2|i_1)}_B$ satisfies the bound in
  %\footnote{Note that there we use the notation $Q^{(i_2 \vert i_1)}_B$. }
 ~\cref{lem:packingGen}, with
  \begin{align*}
    f(k, \varepsilon)  = \eps^{1/3} + f(1, k, \eps) + 2 g(1, k, \eps).
  \end{align*}
\end{proof}
Finally, we quickly derive~\cref{eq:explicitF} using the definitions of $f(N, M,\varepsilon)$ and $g(N,M,\varepsilon)$ from the proof of~\cref{thm:conditional}:
\begin{align*}
  f(k, \varepsilon) & = \varepsilon^{1/3} + 2^{(k+1)/2 +1} \varepsilon^{1/3} \\
  & \quad + 2 \left( 2^{2^{k+4} \times 2} 2^{(k+1)^2} \varepsilon^{1/3} + 2^{(k+1)/2+1} \varepsilon^{1/3}\right) \nonumber \\
  & =  \left( 1+ 6 \times 2^{\frac{k+1}{2}} + 4 \times 2^{2^{k+5} + k^2 + 2k} \right) \varepsilon^{1/3}.
\end{align*}

%=============================================================================
\section{Conclusions}\label{sec:conclusions}
%=============================================================================
The packing lemma is a cornerstone of classical network information theory, used as a black box in the analyses of all kinds of network communication protocols. At its core, the packing lemma follows from properties of the set of jointly typical sequences for multiple random variables. In this letter, we provide an analogous statement in the quantum setting that we believe can serve a similar purpose for quantum network information theory. We illustrate this by using it as a black box to prove achievability results for the classical-quantum relay channel. Our result is based on a joint typicality lemma recently proved by Sen~\cite{senInPrep}. This result, at a high level, provides a single POVM which achieves the hypothesis testing bound for all possible divisions of a multiparty state into a tensor product of its marginals. This result allows for the construction of finite blocklength protocols for quantum multiple access, relay, broadcast, and interference channels~\cite{senInPrep2}.

Two alternative formulations of joint typicality were proposed in~\cite{dutil2011multiparty} and~\cite{drescher2013simultaneous}. In the first work, the author conjectured the existence of the jointly typical state that is close to an i.i.d.\ multiparty state but with marginals whose purities satisfy certain bounds. This notion of typicality was then used in the analysis of multiparty state merging and assisted entanglement distillation protocols. In the second work, the authors provided a similar statement for the one-shot case. Specifically, for a given multiparty state, they conjectured the existence of a state that is close to the initial state but has a min-entropy bounded by the smoothed min-entropy of the initial state for all marginals. In a follow up paper we will try to understand the relationship between these various notions of quantum joint typicality and whether Sen's results can be extended to prove the other notions or to realize the applications they are designed for.

Also, as noted in the corresponding section, our protocol for the partial decode-forward bound is not a straightforward generalization of the classical protocol in~\cite{elgamal2011network}. Our algorithm involves a joint measurement of all the transmitted blocks instead of performing a backward decoding followed by a forward decoding as in the classical case. The problem arises from the fact that the classical protocol makes both multiple measurements on a single system and also \emph{intermediate} measurements on other systems. Hence, a direct application of our packing lemma has to combine both the multiple measurements on the same system and the intermediate measurements into one joint measurement. This is done by applying Sen's one-shot joint typicality lemma on \emph{all} of the receiver's systems. This results in a set of inequalities for the rate region that has to be simplified to obtain the desired bound. This is a step that might be necessary in other applications of our packing lemma. %However, it is likely that the set of inequalities can be simplified directly from the graphical structure of the multiplex Bayesian network, a question we plan to explore in future work.

There are still several interesting questions that remain open regarding quantum relay channels. The most obvious one is proving converses for the given achievability lower bounds. There are known converses for special classical relay channels, and it would be interesting to extend them to the quantum case as we did for semideterministic relay channels. Another, albeit less trivial, direction is to prove a quantum equivalent of the compress-forward lower bound~\cite{elgamal2011network}. We might need to analyze this in the entanglement assisted case since it is only then that a single-letter quantum rate-distortion theorem is known~\cite{datta2013quantum}. Another idea is to study networks of relay channels, where the relays are operating in series or in parallel. Some preliminary work was done in~\cite{jin2012lower}, and the most general notion of this in the classical literature is a multicast network~\cite{elgamal2011network}. Lastly, relay channels with feedback would also be interesting to investigate.

\newpage %Workaround for weird spacing issue
\noindent \textbf{Acknowledgements.} We thank Pranab Sen for interesting discussions and for sharing his draft~\cite{senInPrep} with us. We would also like to thank Mario Berta, Philippe Faist, and Mark Wilde for inspiring discussions. PH was supported by AFOSR (FA9550-16-1-0082), CIFAR and the Simons Foundation. HG was supported in part by NSF grant PHY-1720397. MW acknowledges financial support by the NWO through Veni grant no.~680-47-459. DD is supported by the Stanford Graduate Fellowship and the National Defense Science and Engineering Graduate Fellowship. DD would like to thank God for all of His provisions.

\bibliography{relay}

\appendix

%=============================================================================
\section{Proof of Cutset Bound}\label{app:cutset}
%=============================================================================
We give a proof of~\cref{prp:cutset}, essentially identical to that of~\cite{elgamal2011network}:
\begin{proof}
  Consider an $(n, 2^{n R})$ code for $\mathcal{N}_{X_1 X_2 \to B_2 B_3}$. Suppose we have a uniform distribution over the message set $M$, and denote the final classical system obtained by Bob from the POVM measurement by $\hat M$. By the classical Fano's inequality,
  \begin{align*}
    n R = H(M) = I(M; \hat M) + H(M | \hat M) \le I(M; \hat M) + n \delta(n),
  \end{align*}
  where $\delta(n)$ satisfies $\lim_{n \to \infty} \delta(n) = 0$ if the decoding error is to vanish in asymptotic limit.

  We denote by $(X_1)_j, (X_2)_j, (B_2)_j, (B_3)_j$ the respective classical and quantum systems induced by our protocol. We argue
  \begin{align*}
    I(M; \hat M) \le I(M; B_3^n) & = \sum_{j=1}^{n} I(M; (B_3)_j | B_3^{j-1})\\
    & \le \sum_{j=1}^{n} I(M B_3^{j-1} ; (B_3)_j) \\
    & \le \sum_{j=1}^{n} I( (X_1)_j (X_2)_j M B_3^{j-1} ; (B_3)_j) \\
    & = \sum_{j=1}^{n} I((X_1)_j (X_2)_j; (B_3)_j).
  \end{align*}
  The last step follows from the i.i.d.\ nature of the $n$ channel uses and the channel is classical-quantum. More explicitly, we can write out the overall state as the protocol progresses, and since the input to the channel on each round is classical, it is not difficult to see that given $(X_1)_j (X_2)_j$, $(B_2)_j (B_3)_j$ is in tensor product with the other systems. This would not hold if the channel takes quantum inputs, for which we would expect an upper bound that involves regularization. Now, similarly,
  \begin{align*}
    & I(M; \hat M) \le I(M; B_3^n) \\
    & \le I(M; B_2^n B_3^n) \\
    & = \sum_{j=1}^{n} I(M; (B_2)_j (B_3)_j | B_2^{j-1} B_3^{j-1}) \\
    & = \sum_{j=1}^{n} I(M; (B_2)_j (B_3)_j | B_2^{j-1} B_3^{j-1} (X_2)_j) \\
    & \le \sum_{j=1}^{n} I(MB_2^{j-1}B_3^{j-1}; (B_2)_j (B_3)_j |   (X_2)_j) \\
    & \le \sum_{j=1}^{n} I( (X_1)_j MB_2^{j-1}B_3^{j-1}; (B_2)_j (B_3)_j |   (X_2)_j) \\
    & = \sum_{j=1}^{n} I( (X_1)_j ; (B_2)_j (B_3)_j |   (X_2)_j),
  \end{align*}
  where the second equality follows since given $B_2^{j-1}$, one can obtain $(X_2)_j$ by a series of $\mathcal{R}$ operations (Note that $(B_2)_0(B_2')_0$ is a trivial system and thus independent of the code.).

  Define the state
  \begin{align*}
    \sigma_{Q X_1 X_2 B_2 B_3} \equiv \frac{1}{n} \sum_{q=1}^{n} \state{q}_Q \ot \sigma^{(q)}_{X_1 X_2 B_2 B_3},
  \end{align*}
  where $\sigma^{(q)}$ is the classical-quantum state on the $q$\textsuperscript{th} round of the protocol, that is, the state on the system $(X_1)_q (X_2)_q (B_2)_q (B_3)_q$. Now, $I(B_2 B_3 ; Q | X_1 X_2)_\sigma =0$, so
  \begin{align*}
    \sum_{j=1}^{n} I((X_1)_j (X_2)_j; (B_3)_j) & = b I(X_1 X_2 ; B_3 |Q )_\sigma \\
    & \le n I(X_1 X_2 Q; B_3)_\sigma \\
    & = n I(X_1 X_2; B_3)_\sigma
  \end{align*}
  and similarly
  \begin{align*}
    \sum_{j=1}^{n} I( (X_1)_j ; (B_2)_j (B_3)_j |   (X_2)_j) & = n I(X_1 ; B_2 B_3 | X_2 Q)_\sigma \\
    & \le n I(X_1 Q; B_2 B_3 | X_2)_\sigma \\
    & =  n I(X_1; B_2 B_3 | X_2)_\sigma.
  \end{align*}
  Hence,
  \begin{align*}
    R \le \min\{I(X_1 X_2; B_3)_\sigma, I(X_1; B_2 B_3 | X_2)_\sigma \} + \delta(n).
  \end{align*}
  Now, $\sigma_{X_1 X_2 B_2 B_3}$ is simply a uniform average of all the classical-quantum states from each round of the protocol, it is also a possible classical-quantum state induced by $\mathcal{N}_{X_1 X_2 B_2 B_3}$ acting on some classical input distribution $p_{X_1 X_2}$. In particular, $R$ is therefore upper bounded by the input distribution which maximizes the quantity on the right-hand side:
  \begin{align*}
    R \le \max_{p_{X_1X_2}}\min\{I(X_1 X_2; B_3), I(X_1; B_2 B_3 | X_2)\} + \delta(n).
  \end{align*}
  Taking the $n \to \infty$ limit completes the proof.
\end{proof}

\end{document}